\newtheorem{theorem}{\textsc{Theorem}}[section]
\newtheorem{lemma}{\textsc{Lemma}}[section]
\newtheorem{definition}{\textsc{Definition}}[section]
\newtheorem{corollary}{\textsc{Corollary}}[section]
\newtheorem{proposition}{\textsc{Proposition}}[section]
\newtheorem{remark}{\textsc{Remark}}[section]
\newtheorem{condition}{\textsc{Condition}}[section]
\def\dom{\text{dom}}
\def\sign{\text{sgn}}
\def\deg{\text{deg}}
\def\esssup{\text{esssup}}
\def\Re{\text{Re}}
\def\range{\text{range}}
\newcommand{\p}{\partial}
\newcommand{\bb}{\begin{equation}}
\newcommand{\ee}{\end{equation}}
\newcommand{\ba}{\begin{array}}
\newcommand{\ea}{\end{array}}
\newcommand{\R}{\mathbb{R}}
\newcommand{\f}{\frac}
\newcommand{\ds}{\displaystyle}
\newcommand{\al}{\alpha}
\newcommand{\be}{\beta}
\numberwithin{equation}{subsection}
\def\@maketitle{%
  \newpage
  \null
  \vskip 2em%
  \begin{center}%
  \let \footnote \thanks
    {\huge \@title \par}%
    \vskip 1.5em%
    {\large
      \lineskip .5em%
      \begin{tabular}[t]{c}%
        \@author
      \end{tabular}\par}%
    \vskip 1em%
    %{\large \@date}%
  \end{center}%
  \par
  \vskip 1.5em
  \vspace{1cm}}
\begin{document}

\title{Well-posedness, travelling waves and geometrical aspects of generalizations of the Camassa-Holm equation\vspace{1cm}}
\author[1]{Priscila Leal da Silva\thanks{pri.leal.silva@gmail.com}}
\author[2,3]{Igor Leite Freire\thanks{igor.freire@ufabc.edu.br and igor.leite.freire@gmail.com}}
\affil[1]{Departamento de Matem\'atica, Universidade Federal de S\~ao Carlos, Brasil}
\affil[2]{Mathematical Institute, Silesian University in Opava,
Na Rybn\'\i{}\v{c}ku, 1, 74601, Opava, Czech Republic}
\affil[3]{Centro de Matem\'atica, Computa\c{c}\~ao e Cogni\c c\~ao, Universidade Federal do ABC, Avenida dos Estados, $5001$, Bairro Bangu,
$09.210-580$, Santo Andr\'e, SP - Brasil}
\maketitle

\renewcommand{\abstractname}{\vspace{-\baselineskip}}

\begin{abstract}
\centering\begin{minipage}{\dimexpr\paperwidth-9cm}
\textbf{Abstract:} In this paper we consider a five-parameter equation including the Camassa-Holm and the Dullin-Gottwald-Holm equations, among others. We prove the existence and uniqueness of solutions of the Cauchy problem using Kato's approach. Conservation laws of the equation, up to second order, are also investigated. From these conservation laws we establish some properties for the solutions of the equation and we also find a quadrature for it. The quadrature obtained is of capital importance in a classification of bounded travelling wave solutions. We also find some explicit solutions, given in terms of elliptic integrals. Finally, we classify the members of the equation describing pseudo-spherical surfaces.

\vspace{0.2cm}
\textbf{2010 AMS Mathematics Classification numbers}: 35A01, 35L65, 37K05.

\textbf{Keywords:} Camassa-Holm type equation, well-posedness, Kato's approach, conservation laws, travelling wave solutions, pseudo-spherical surfaces.

\textbf{Dedicatory:} This paper is dedicated to Professor Nail Ibragimov.

\end{minipage}
\end{abstract}
\bigskip
\newpage
\tableofcontents
\newpage

\section{Introduction}

The Camassa-Holm (CH) equation 
$$m_t+um_x+2u_xm=0,\quad m=u-u_{xx},$$ was named after the pioneering work of Camassa and Holm \cite{chprl}. Despite the fact that the equation itself was discovered earlier from the investigation of equations having Hamiltonian representations \cite{fofu}, it was the work of Camassa and Holm that derived the equation with a physical background and showed its relevance in the physics of fluids. Since then, this sort of equation has been subject of intense research, which is easy to understand due to its physical relevance \cite{chprl,dgh,john}, and also the rich mathematical structures behind the equation \cite{const1998-1, const1998-2, const1998-3,const2000-1,const2000-2,const2002, escher,HKM,len, strachan, reyes2002,blanco}.

From the myriad of properties of the CH equation, we point out its solutions \cite{const1998-1, const1998-2, const1998-3,const2000-1,const2000-2,const2002, HKM, blanco} and, in particular, the fashion and attractive peakon solutions \cite{const2000-2,len}. However, its algebraic properties \cite{strachan,novikov} and geometrical aspects \cite{escher,reyes2002} are equally rich and interesting. 

Since its derivation, other equations sharing common properties with the CH equation have been discovered and investigated from both mathematical or physical motivations \cite{anco,pri-book,pri-aims,deg,dgh,hone,novikov,chines-arxiv,chines-jde,gui-jnl}, which are also subject of interest for a wide spectra of researchers and from different frames, see \cite{anco,boz,pri-book,pri-aims,raspa,kelly,hak,HKM,HM,liu2011,novikov,mustafa,reyes2011,blanco,strachan} and references therein. Moreover, modifications, extensions and generalizations of these equations have also been intensely studied, see \cite{anco,pri-aims,hak,novikov,mustafa,liu2011,strachan} and references thereof.  

More recently, a Camassa-Holm type equation incorporating Coriolis effects\footnote{We are following the notation in \cite{chines-jde}. In the other references the constants appearing in the equation \eqref{1.0.1} are slightly different, but the difference is a scaling in space-time variables} was proposed \cite{chines-arxiv,chines-jde,chines-adv,gui-jnl}:
\bb\label{1.0.1}
m_t+um_x+2u_xm+cu_x-\f{\beta_0}{\beta}u_{xxx}+\f{\omega_1}{\al^2}u^2u_x+\f{\omega_2}{\alpha^3}u^3u_x=0,
\ee
where $m=u-u_{xx}$,
\bb\label{1.0.2}
\ba{l}
\ds{c=\sqrt{1+\Omega^2}-\Omega,\quad\al=\f{c^2}{1+c^2},\quad\beta_0=\f{c(c^4+6c^2-1)}{6(c^2+1)^2},\quad \beta=\f{3c^4+8c^2-1}{6(c^2+1)^2}},\\
\\
\ds{\omega_1=-\f{3c(c^2-1)(c^2-2)}{2(1+c^2)^3},\quad\omega_2=\f{(c^2-1)^2(c^2-2)(8c^2-1)}{2(1+c^2)^5}}
\ea
\ee
and $\Omega$ is a parameter related to the Coriolis effect. We recall that the  Coriolis effect is typically a manifestation of rotation when Newton's laws are applied to model physical phenomena on Earth's surface. 

Throughout this paper, both $\p_z(\cdot)$ or $(\cdot)_z$ mean partial derivative with respect to a generic variable $z$. In (\ref{1.0.1}), $u=u(x,t)$ is the dependent variable, which may physically describe the elevation of the water surface, while $x$ and $t$ are variables related to space and time, respectively. In the remaining of our work, we shall maintain these variables for the equations we want to investigate.

In view of the constants in \eqref{1.0.1}, we observe that the $\be_0$, $\omega_1$ and $\omega_2$ in \eqref{1.0.2} cannot vanish simultaneously and $c>0$. Therefore, although from a physical framework it is relevant to consider \eqref{1.0.1} with the parameters \eqref{1.0.2}, from a mathematical perspective these constraints impose restrictions on \eqref{1.0.1}. 

In \cite{gui-jnl}, the authors classified the types of travelling wave solutions that \eqref{1.0.1} can admit. The classification carried out in that reference is influenced, as should be expected, by the constraints \eqref{1.0.2}. However, the classification in \cite{gui-jnl} was made under a very restrictive condition, which led to a {\it partial} classification of the travelling wave solutions admitted by the equation. For example, periodic waves are not obtained in their results, and this is due to a very specific choice they made in \cite{gui-jnl} to analyse solutions.

One of the interesting properties of the CH equation is the fact that it describes pseudo-spherical surfaces, as proved by Reyes in \cite{reyes2002}, see also \cite{reyes2011}. Then, a natural question about \eqref{1.0.1}, since it generalizes the CH equation, is if it shares this property with the CH equation. This point was not investigated in \cite{chines-arxiv,chines-jde,chines-adv,gui-jnl}.

Another thought-provoking point is the local well-posedness of the equation \eqref{1.0.1} once it is well-known that the CH equation is well-posed under mild conditions on the initial data ({\it e.g.}, see \cite{escher,blanco,mustafa}). This result can be proved using Kato's theory \cite{kato}, as it was done in the mentioned references. However, in \cite{gui-jnl} the local well-posedness of \eqref{1.0.1} is claimed, but not proved, whereas in \cite{chines-arxiv,chines-jde,chines-adv} it is considered in a different perspective.

A third appealing aspect of the CH equation is the existence of peakons, which are weak peaked solutions \cite{chprl,len}. In \cite{chprl} it was shown the solitonic behaviour of certain non-periodic peakon solutions (see also \cite{anco,deg}), whereas in \cite{len} periodic peakon solutions to the CH equation were proven to exist. Regarding equation \eqref{1.0.1}, differently from the CH equation and other CH type equations deduced over time, it does not admit peakons with decay shaping like $e^{-|x|}$ , see \cite{gui-jnl}, whilst the existence of periodic peakons of \eqref{1.0.1} was not considered previously.

This paper is strongly motivated by references \cite{chines-jde,chines-arxiv}, and here we consider the equation
%Our motivation to write this paper was references \cite{chines-jde,chines-arxiv}, which lead us to consider firstly the local-well posedness, geometrical aspects and the types of bounded travelling wave solutions that \eqref{1.0.1} would admit. To this end, we consider the equation
\bb\label{1.0.3}
\ba{l}
m_t+u\,m_x+2u_xm=\al u_x+\be u^2u_x+\gamma u^3u_x+\Gamma  u_{xxx},\\
\\
m=u-\epsilon^2u_{xx},\quad u=u(x,t),\quad (\epsilon,\Gamma)\neq(0,0)
\ea
\ee
from the point of view of local well-posedness, existence of bounded travelling wave solutions and geometrical integrability. In \eqref{1.0.3}, the real parameters $\al,\,\be,\,\gamma,\,\Gamma$ and $\epsilon$ are independently considered, that is, no dependences among them are being imposed.
%where $\al,\,\be,\,\gamma,\,\Gamma$ and $\epsilon$ are real constants, without any dependence or relations among them, at least to first sight. It can be rewritten in the following convenient form:
For the calculations that will come, the following form of \eqref{1.0.3} is convenient:
\bb\label{1.0.4}
u_t-\epsilon^2 u_{txx}=\epsilon^2uu_{xxx}+2\epsilon^2 u_{x}u_{xx}+(\al-3u+\be u^2+\gamma u^3)u_x+\Gamma u_{xxx}.
\ee

We observe that \eqref{1.0.4} includes the CH equation ($\be=\gamma=\Gamma=0$), the Dullin-Gottwald-Holm (DGH) equation ($\be=\gamma=0$), the KdV equation ($\epsilon=\be=\gamma=0$), Gardner equation ($\epsilon=\gamma=0$) and, more generally, it is a particular case of the generalized KdV equation \cite{katoma} when $\epsilon=0$. Clearly, the restrictions \eqref{1.0.2} imply that \eqref{1.0.1} is a very particular case of \eqref{1.0.3}.

Although we avoid any restriction on the parameters in \eqref{1.0.4}, we want to impose a weak but very technical restriction: $(\epsilon,\Gamma)\neq(0,0)$. This is taken because if both $\epsilon$ and $\Gamma$ vanish, then (\ref{1.0.3}) and (\ref{1.0.4}) are reduced to a transport equation. We observe that $\epsilon$ may be $0$, which makes \eqref{1.0.3} an evolution equation.

Our first result deals with the existence and uniqueness of solutions to \eqref{1.0.4}. To achieve this goal, we make use of the machinery developed by Kato \cite{kato}, which was a tool for proving similar results to KdV type equations \cite{katoma,kato, escher}, the CH equation \cite{blanco,escher} and generalizations of the latter \cite{mustafa,liu2011}. 

The fact that $\epsilon$ is arbitrary brings some nuance to the problem and, in essence, our result of Hadamard well-posedness will depend on whether $\epsilon\neq0$ or not. Whereas we should be careful regarding this parameter, we have the following existence result for \eqref{1.0.4}:
\begin{theorem}\label{teo1.1}
Given $(\epsilon,\Gamma)\neq(0,0)$, there exist Hilbert spaces $H_1=H_1(\epsilon,\Gamma)$, $H_2=H_2(\epsilon,\Gamma)$ and $T>0$ such that, if the initial datum $u_0\in H_1$, then the problem
\bb\label{1.0.5}
\left\{\ba{l}
u_t-\epsilon^2 u_{txx}=\epsilon^2uu_{xxx}+2\epsilon^2 u_{x}u_{xx}+(\al-3u+\be u^2+\gamma u^3)u_x+\Gamma u_{xxx},\\
\\
u(x,0)=u_0(x)
\ea\right.
\ee
has a unique solution $u\in C^{0}(H_1,[0,T))\cap C^1(H_2,[0,T))$. Moreover, $T$ depends only on $\Vert u_0\Vert_{H_1}$.
\end{theorem}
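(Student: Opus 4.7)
My plan is to deduce Theorem \ref{teo1.1} from the abstract quasi-linear evolution theorem of Kato \cite{kato}, applied in two different functional settings depending on whether $\epsilon\neq 0$ or $\epsilon=0$. Since $(\epsilon,\Gamma)\neq(0,0)$, only two cases arise: (a) $\epsilon\neq 0$, when \eqref{1.0.4} is a Camassa--Holm type equation with non-local right-hand side; (b) $\epsilon=0$ and $\Gamma\neq 0$, when \eqref{1.0.4} reduces to the third-order quasi-linear equation $u_t+(3u-\alpha-\beta u^2-\gamma u^3)u_x=\Gamma u_{xxx}$. In both cases I will take $H_1=H^s(\mathbb{R})$ and $H_2=H^{s-1}(\mathbb{R})$, but the admissible Sobolev exponent $s$ and the abstract operator $A(u)$ entering Kato's theorem will differ.

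For case (a) I first recast \eqref{1.0.4} in non-local form. The identity $(1-\epsilon^2\partial_x^2)(uu_x)=uu_x-\epsilon^2 uu_{xxx}-3\epsilon^2 u_xu_{xx}$ absorbs the nonlinear third-order terms, and $\Gamma u_{xxx}=(\Gamma/\epsilon^2)u_x-(\Gamma/\epsilon^2)(1-\epsilon^2\partial_x^2)u_x$ trades the linear dispersion for a non-local first-order contribution. Applying $(1-\epsilon^2\partial_x^2)^{-1}$ to \eqref{1.0.5} yields
\[
u_t+\left(u+\frac{\Gamma}{\epsilon^2}\right)u_x=F(u),
\]
where $F(u)=-\partial_x(1-\epsilon^2\partial_x^2)^{-1}\bigl[u^2+\tfrac{\epsilon^2}{2}u_x^2-\alpha u-\tfrac{\beta}{3}u^3-\tfrac{\gamma}{4}u^4-\tfrac{\Gamma}{\epsilon^2}u\bigr]$. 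Then I take $s>3/2$, $X=H^{s-1}(\mathbb{R})$, $Y=H^s(\mathbb{R})$ and $A(u)=(u+\Gamma/\epsilon^2)\partial_x$. Verifying Kato's hypotheses then reduces to: (i) the quasi-$m$-accretivity of $A(u)$ on $X$, which is standard for $\partial_x$ perturbed by a bounded $H^s$-multiplier; (ii) a commutator estimate of Kato--Ponce type between the Bessel potential $(1-\partial_x^2)^{(s-1)/2}$ and $A(u)$; (iii) Lipschitz dependence $u\mapsto A(u)\in\mathcal{L}(Y,X)$ on bounded subsets of $Y$; and (iv) boundedness and Lipschitz continuity of $F\colon H^s\to H^s$, which follow from the algebra property of $H^s$, Moser estimates for the maps $u\mapsto u^k$, and the one-derivative gain of $\partial_x(1-\epsilon^2\partial_x^2)^{-1}$ that absorbs the $u_x^2$ term of $F$.

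For case (b) no inversion is needed. I take $s\geq 3$, $X=L^2(\mathbb{R})$, $Y=H^s(\mathbb{R})$, $A(u)=(3u-\alpha-\beta u^2-\gamma u^3)\partial_x-\Gamma\partial_x^3$ and $f\equiv 0$. The skew-adjoint operator $-\Gamma\partial_x^3$ generates a unitary Airy group on $L^2$, and its perturbation by the first-order operator with coefficient $3u-\alpha-\beta u^2-\gamma u^3\in H^s\hookrightarrow C_b^0$ remains quasi-$m$-accretive. The commutator condition is trivial on the $\partial_x^3$ piece, because Bessel potentials commute with $\partial_x^3$, and reduces to a Kato--Ponce bound on the transport piece, exactly as in case (a). In both cases Kato's abstract theorem yields the unique solution $u\in C^0(H_1,[0,T))\cap C^1(H_2,[0,T))$ with $T=T(\|u_0\|_{H_1})$. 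I expect the main technical work to be in case (a), namely the bookkeeping of Moser and commutator estimates that control the cubic nonlinearity $\gamma u^3u_x$; here the $H^s$ algebra property, the $L^\infty$ control of $u$ via Sobolev embedding, and the one-derivative gain of $\partial_x(1-\epsilon^2\partial_x^2)^{-1}$ are all essential to close the estimate uniformly on bounded sets of initial data.
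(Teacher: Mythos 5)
Your strategy coincides with the paper's: split according to $\epsilon\neq0$ or $\epsilon=0$ and run Kato's quasilinear theorem \cite{kato} in each case. For $\epsilon\neq0$ your nonlocal reformulation, the transport operator $A(u)=(u+\Gamma/\epsilon^2)\partial_x$, the pair $X=H^{s-1}(\R)$, $Y=H^s(\R)$ with $s>3/2$, and the verification tasks (i)--(iv) are exactly the ingredients of the paper's argument; the only cosmetic difference is that the paper first rescales to $\epsilon=1$ and deduces the result from the more general Theorem \ref{teo1.2} and Corollary \ref{cor1.1} (whose whole point is to allow $g(0)=\Gamma/\epsilon^2\neq0$, removing the restriction of \cite{liu2011}), the Kato hypotheses being checked in Lemmas \ref{lema2.7}--\ref{lema2.9}, whereas you verify them directly for the specific equation. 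For $\epsilon=0$ the paper simply invokes Kato's theorem for the generalized KdV equation (Lemma \ref{lema2.6}, from \cite{katoma}), which yields $H_1=H^s(\R)$, $s>3/2$, and $H_2=H^{s-3}(\R)$; your do-it-yourself variant with $X=L^2(\R)$, $Y=H^s(\R)$, $s\geq3$, $A(u)=(3u-\alpha-\beta u^2-\gamma u^3)\partial_x-\Gamma\partial_x^3$ is workable, but then Kato's abstract theorem gives $C^1$ regularity only in $X=L^2(\R)$, not in $H^{s-1}(\R)$ as you assert: for the third-order equation a solution lying in $C^0(H^s(\R),[0,T))$ has $u_t$ only in $H^{s-3}(\R)$, so your claim that $H_2=H^{s-1}(\R)$ ``in both cases'' fails in case (b). Since Theorem \ref{teo1.1} only asserts the existence of some Hilbert spaces $H_1,H_2$ depending on $(\epsilon,\Gamma)$, this slip does not invalidate the conclusion, but you should state $H_2=H^{s-3}(\R)$ (or $L^2(\R)$ in your functional frame) for the case $\epsilon=0$, $\Gamma\neq0$.
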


As we will show, if $\epsilon=0$, then $H_1=H^s(\R)$ and $H_2=H^{s-3}(\R)$, with $s>3/2$, where $H^s(\R)$ denotes the Sobolev space of order $s$, see Subsection \ref{subsec2.1}. This is actually a particular case of the results of Kato \cite{katoma}, see Lemma \ref{lema2.6} in  Subsection \ref{subsec2.2}. What remains to be proved is the case $\epsilon\neq0$, which is more delicate. We will show that for this case we can take $H_1=H^s(\R)$ and $H_2=H^{s-1}(\R)$, $s>3/2$. 

The following result is of crucial importance for the proof of Theorem \ref{teo1.1}.
\begin{theorem}\label{teo1.2}
Let $m\geq2$ be a natural number, $s\in(3/2,m)$, and $h,\,g\in C^{m+3}(\R)$, with $h(0)=0$. If $u_0\in H^s(\R)$, there exists a maximal time $T=T(u_0)>0$ and a unique solution $u$ to the Cauchy problem
\bb\label{1.0.6}
\left\{\ba{l}
\ds{u_t-u_{txx}+\p_x h(u)=\p_x\left(\f{g'(u)}{2}u_x^2+g(u)u_{xx}\right)},\\
\\
u(x,0)=u_0(x),
\ea\right.
\ee
such that $u=u(\cdot,u_0)\in C^{0}(H^{s}(\R);[0,T))\cap C^{1}(H^{s-1}(\R),[0,T))$. Moreover, the solution depends continuously on the initial data, in the sense that the mapping $u_0\mapsto u(\cdot,u_0):H^{s}(\R)\rightarrow C^{0}(H^{s}(\R);[0,T))\cap C^{1}(H^{s-1}(\R),[0,T))$ is continuous.
\end{theorem}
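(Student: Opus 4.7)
The plan is to recast \eqref{1.0.6} as an abstract quasi-linear Cauchy problem of the form $u_t + A(u)u = f(u)$ on a suitable pair of Hilbert spaces and then invoke Kato's semigroup-theoretic existence theorem. The first step is an algebraic simplification: expanding the right-hand side of \eqref{1.0.6} and regrouping so that the third-order derivative of $u$ is absorbed into a transport-like term $g(u)u_x$, a direct computation gives
\[
(1-\p_x^2)\bigl[u_t + g(u)u_x\bigr] \;=\; \p_x\Bigl[H(u)-\tfrac{1}{2}g'(u)u_x^{\,2}\Bigr],
\]
where $H$ is the primitive of $g-h'$ vanishing at $0$; the hypothesis $h(0)=0$ is exactly what ensures $H(0)=0$ so that compositions behave well in $L^2$-based spaces. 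Applying $\Lambda^{-2}:=(1-\p_x^2)^{-1}$ brings the problem to the non-local form
\[
u_t + A(u)u \;=\; f(u),\qquad A(u):=g(u)\p_x,\qquad f(u):=\Lambda^{-2}\p_x\!\left[H(u)-\tfrac{1}{2}g'(u)u_x^{\,2}\right].
\]

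Next I would set up Kato's framework with $X=H^{s-1}(\R)$, $Y=H^{s}(\R)$ and the isometric isomorphism $S=\Lambda:Y\to X$. Three conditions must be verified on a bounded neighbourhood $W\subset Y$ of $u_0$. \textbf{(i)} For every $v\in W$, the operator $A(v)=g(v)\p_x$ generates a quasi-contractive $C_0$-semigroup on $X$: since $s>3/2$ yields $Y\hookrightarrow C^1_b(\R)$, the coefficient $g(v)$ lies in $C^1_b$, and the classical theory of first-order hyperbolic operators in $H^{s-1}$ supplies the generator bound, with growth rate controlled by $\Vert\p_x g(v)\Vert_{L^\infty}$. \textbf{(ii)} The map $v\mapsto A(v)$ must be Lipschitz from $Y$ to $\mathcal{L}(Y,X)$; by the mean-value theorem and the algebra property of $H^s$ (valid for $s>1/2$), this reduces to bounding $\Vert(g(v)-g(w))\p_x z\Vert_{H^{s-1}}$ in terms of $\Vert v-w\Vert_X\,\Vert z\Vert_Y$, which is routine given $g\in C^{m+3}$ with $s<m$.

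\textbf{(iii)} The operator $B(v):=[\Lambda,A(v)]\Lambda^{-1}=[\Lambda,g(v)]\,\p_x\Lambda^{-1}$ must be bounded on $X$ with Lipschitz dependence on $v\in Y$. This is the core estimate: since $\p_x\Lambda^{-1}$ is a Fourier multiplier bounded on every $H^\sigma$, the problem reduces to a Kato--Ponce commutator inequality of the form
\[
\Vert[\Lambda,g(v)]\,\varphi\Vert_{H^{s-1}}\;\leq\;C\,\Vert g(v)\Vert_{H^s}\,\Vert\varphi\Vert_{H^{s-1}},
\]
combined with the Moser composition estimate $\Vert g(v)\Vert_{H^s}\leq C(\Vert v\Vert_Y)$, which holds for $g\in C^{m+3}$ and $s<m$. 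Finally, the source $f$ must send bounded sets of $Y$ into bounded sets of $Y$ and be Lipschitz from $Y$ to $Y$; since $\Lambda^{-2}\p_x:H^{s-1}\to H^s$ is bounded, this follows by estimating $H(u)$ and $g'(u)u_x^{\,2}$ in $H^{s-1}$ using once more the algebra property of $H^{s-1}$ (valid because $s-1>1/2$) and Moser-type composition.

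With these ingredients in place, Kato's theorem produces a unique maximal solution $u\in C^0([0,T);H^s(\R))\cap C^1([0,T);H^{s-1}(\R))$ depending continuously on $u_0$. The main obstacle I anticipate is the commutator estimate in condition (iii), since it must hold uniformly across the non-integer range $s\in(3/2,m)$; this is precisely the reason the hypothesis requires $h,g\in C^{m+3}$, so that all the compositions appearing in $A(u)$ and $f(u)$ remain in $H^s$ via the Moser inequality and that the Kato--Ponce estimate can be iterated for Lipschitz continuity of $B$ and $f$ in $v$.
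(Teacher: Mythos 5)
Your proposal follows essentially the same route as the paper's proof: the same reformulation of \eqref{1.0.6} as the non-local quasi-linear problem $u_t+g(u)u_x=\Lambda^{-2}\p_x\bigl[H(u)-\tfrac{1}{2}g'(u)u_x^2\bigr]$ (the paper writes $g=b+G$ with $b=g(0)$, $G(0)=0$, which yields literally the same $A(u)$ and $f(u)$ after recombining the primitive), the same Kato setting $X=H^{s-1}(\R)$, $Y=H^{s}(\R)$, $S=\Lambda$, verification of Kato's hypotheses via the algebra property, Kato--Ponce commutator bounds and Moser-type composition estimates (the paper imports Lemmas \ref{lema2.7}--\ref{lema2.8} from Liu--Yin and reproves the estimate for $f$ in Lemma \ref{lema2.9}), and the conclusion by Kato's theorem (Lemma \ref{lema2.5}).

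One step of your sketch needs a correction, and it is precisely the point where the theorem improves on Liu--Yin: in your condition (iii) you invoke the composition estimate $\Vert g(v)\Vert_{H^s}\leq C(\Vert v\Vert_Y)$, which is false when $g(0)\neq 0$, since then $g(v)$ does not decay at infinity and does not belong to $H^s(\R)$ (Lemma \ref{lema2.4} requires $F(0)=0$). The remedy is the splitting the paper performs at the outset: write $g=g(0)+G$ with $G(0)=0$; the constant commutes with $\Lambda$, so $[\Lambda,g(v)]=[\Lambda,G(v)]$ and the Kato--Ponce and Moser estimates are applied to $G(v)$, whose $H^s$ norm is controlled by $\Vert v\Vert_{H^s}$. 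The same care is needed wherever compositions enter your conditions (i), (ii) and the bounds on $f$: all Sobolev estimates must be applied to $G$, $G'$, $H$, i.e.\ to functions vanishing at $0$ (your $H$, chosen as the primitive vanishing at the origin, is already fine, and $h(0)=0$ takes care of $h$). With this one-line adjustment your argument coincides with the paper's proof.
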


%Above, by natural number we mean a member of the set of the positive integers.

A strong consequence of this result is given by
\begin{corollary}\label{cor1.1}
Assume that $m$, $s$, $h$, $g$ and $u_0$ satisfy the conditions in Theorem $\ref{teo1.2}$ and assume that $\Gamma$ is a constant. Then the initial value problem
\bb\label{1.0.7}
\left\{\ba{l}
\ds{u_t-u_{txx}+\p_x h(u)=\p_x\left(\f{g'(u)}{2}u_x^2+g(u)u_{xx}\right)+\Gamma u_{xxx}},\\
\\
u(x,0)=u_0(x),
\ea\right.
\ee
has a unique solution $u=u(\cdot,u_0)\in C^{0}(H^{s}(\R);[0,T))\cap C^{1}(H^{s-1}(\R),[0,T))$ depending continuously on the initial data.
\end{corollary}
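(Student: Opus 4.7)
The plan is to reduce Corollary \ref{cor1.1} to Theorem \ref{teo1.2} by a Galilean change of variables that absorbs the extra dispersive term $\Gamma u_{xxx}$ into the flux $\partial_x h(u)$. The observation motivating this is that, in the $u_t-u_{txx}$ quasi-linear framework, the third-order linear dispersion with constant coefficient $\Gamma$ can be turned into a linear transport correction by evaluating along the characteristic $x - \Gamma t$; since the nonlinear structure on the right-hand side of \eqref{1.0.6} is translation-invariant in $x$, nothing else is affected by the shift.

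Concretely, I would define $v(y,t) := u(y + \Gamma t, t)$, equivalently $u(x,t) = v(x-\Gamma t, t)$, and recompute each term in \eqref{1.0.7}. A direct calculation gives $u_t - u_{txx} = v_t - v_{tyy} - \Gamma v_y + \Gamma v_{yyy}$, while $\partial_x h(u) = \partial_y h(v)$, the right-hand side nonlinear flux transforms to $\partial_y\!\left(\tfrac{g'(v)}{2}v_y^2 + g(v)v_{yy}\right)$, and $\Gamma u_{xxx} = \Gamma v_{yyy}$. The two $\Gamma v_{yyy}$ contributions cancel, and the residual $-\Gamma v_y$ is absorbed by setting
\begin{equation*}
\tilde h(v) := h(v) - \Gamma v,
\end{equation*}
so that \eqref{1.0.7} for $u$ becomes, in the new variables,
\begin{equation*}
v_t - v_{tyy} + \partial_y \tilde h(v) = \partial_y\!\left(\tfrac{g'(v)}{2}v_y^2 + g(v)v_{yy}\right),\qquad v(y,0) = u_0(y),
\end{equation*}
which is exactly \eqref{1.0.6} with $h$ replaced by $\tilde h$ and the same $g$ and initial datum.

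Next I would verify that the hypotheses of Theorem \ref{teo1.2} hold for the pair $(\tilde h, g)$: since $h \in C^{m+3}(\R)$ and $\Gamma v$ is polynomial, $\tilde h \in C^{m+3}(\R)$; and $\tilde h(0) = h(0) - 0 = 0$ because $h(0)=0$. Applying Theorem \ref{teo1.2} yields a maximal $T=T(u_0) > 0$ and a unique solution $v \in C^{0}(H^{s}(\R);[0,T)) \cap C^{1}(H^{s-1}(\R);[0,T))$ depending continuously on $u_0$. Setting $u(x,t) = v(x-\Gamma t, t)$ produces a solution of \eqref{1.0.7} with $u(\cdot,0) = u_0$; uniqueness and continuous dependence transfer from $v$ to $u$ because the map $w(\cdot)\mapsto w(\cdot - \Gamma t)$ is an isometric isomorphism of $H^{s}(\R)$ (and of $H^{s-1}(\R)$) for every fixed $t$, hence induces a homeomorphism on the relevant $C^{0}$--$C^{1}$ trajectory spaces.

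There is no real obstacle here; the only step to be slightly careful about is the bookkeeping in the change of variables (checking that spatial derivatives commute correctly with the translation and that the right-hand side flux is invariant), together with the elementary remark that the Galilean shift preserves $H^s$-regularity of functions of $x$ at each fixed $t$ and respects the continuity of the data-to-solution map provided by Theorem \ref{teo1.2}.
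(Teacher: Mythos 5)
Your proof is correct, but it takes a genuinely different route from the paper. The paper's argument is a one-line algebraic absorption: since $\Gamma$ is constant, $\Gamma u_{xxx}=\p_x\left(\Gamma u_{xx}\right)$, so \eqref{1.0.7} is exactly \eqref{1.0.6} with $g$ replaced by $g+\Gamma$ (the paper phrases this as ``replacing $g$ by $g-\Gamma$'' and applying Theorem \ref{teo1.2} to the shifted function); no change of variables is needed, and this is precisely why the authors removed the hypothesis $g(0)=0$ from the earlier results of \cite{liu2011} -- a constant shift of $g$ is then harmless. You instead keep $g$ fixed, pass to the moving frame $u(x,t)=v(x-\Gamma t,t)$ so that the two $\Gamma v_{yyy}$ terms cancel, and absorb the residual transport term into $\tilde h(v)=h(v)-\Gamma v$, which still satisfies $\tilde h(0)=0$ and $\tilde h\in C^{m+3}(\R)$; your computation of the transformed equation is correct. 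What the paper's route buys is brevity and the absence of any functional-analytic bookkeeping; what your route buys is the observation that the dispersive term $\Gamma u_{xxx}$ could in principle be handled even under the old restriction $g(0)=0$, since your shift acts on $h$ rather than $g$ (though you still invoke the paper's Theorem \ref{teo1.2} because the corollary allows $g(0)\neq0$). The price you pay is the step you only sketch: verifying that the time-dependent translation maps $C^{0}(H^{s}(\R);[0,T))\cap C^{1}(H^{s-1}(\R),[0,T))$ to itself (the $C^1$ claim uses that $v(\cdot,t)\in H^{s}$ so that the frame contributes $-\Gamma\,\p_y v\in H^{s-1}$, i.e. the translation group is differentiable on $H^{s-1}$ only along $H^{s}$-valued trajectories), and that uniqueness and continuous dependence transfer back to $u$; these points are standard and your appeal to translation being an isometry of $H^{s}$ and $H^{s-1}$ is the right ingredient, but they do require a short explicit argument that the paper's approach avoids entirely.
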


Although similar results of Theorem \ref{teo1.2} and Corollary \ref{cor1.1} are known ({\it e.g.}, \cite{mustafa,hak,liu2011}), they impose the restriction $g(0)=0$, as in \cite{liu2011}, or take $g(u)=u$, as in \cite{mustafa}, a restriction that is eliminated in our demonstration. Therefore, Theorem \ref{teo1.2} improves and generalizes these previous results, while, at the same time, not only proves Theorem \ref{teo1.1} (which is a very particular case of Corollary \ref{cor1.1}), but also other results regarding CH type equations, see \cite{kelly, escher,blanco}. The proofs of theorems \ref{teo1.1}, \ref{teo1.2} and Corollary \ref{cor1.1} are done in Section \ref{sec2}.

%A natural step in the investigation of equation \eqref{1.0.4} is its conservation laws. Then, i
In Section \ref{sec3} we consider conservation laws of \eqref{1.0.4} up to second order. The restriction to second order means that we construct vector fields whose components are functions of $t$, $x$, $u$ and derivatives of $u$ up to second order and whose divergence vanishes on the solutions of the equation. From these conservation laws, in Section \ref{sec4}, we explore some properties of the solutions of \eqref{1.0.4}. 

The conservation laws enable us to find a quadrature to \eqref{1.0.4}, which represents a cornerstone to proceed with a deep investigation of bounded travelling waves of equation \eqref{1.0.4} following the ideas developed in \cite{len}. Our classification splits in several cases, depending on the values of the parameters and the zeros of the polynomial
\bb\label{1.0.8}
P(\phi)=B+2A\phi+(c+\al)\phi^2-\phi^3+\be\phi^4+\gamma\phi^5.
\ee

It is worth emphasizing that in \cite{gui-jnl} the authors carried out a classification of bounded traveling wave solutions for the equation \eqref{1.0.1} with the restrictions given by \eqref{1.0.2}. However, during the integration process to obtain the quadrature form of the equation (namely equation $(5.4)$ in the aforementioned paper), the constants of integration are neglected in order to obtain a polynomial equivalent to
\bb\label{1.0.9}
    p(\phi)=\phi^2(a_3\phi^3+a_2\phi^2+a_1\phi+a_0),
\ee
where $a_3,a_2,a_1$ and $a_0$ are given coefficients, eventually depending on the constants of \eqref{1.0.1}. Therefore, by neglecting the constants of integration, the authors cannot guarantee the existence of periodic solutions (see Section \ref{sec5} for more details).

In our work, we shall classify the bounded travelling waves of equation \eqref{1.0.4} using \eqref{1.0.8}. We observe that the set of zeros of \eqref{1.0.8} englobes those of \eqref{1.0.9}. and as a consequence, we have:
\begin{itemize}
\item 139 cases analysed, see theorems \ref{teo5.1} -- \ref{teo5.10};
\item classification of bounded travelling waves for the case $\epsilon=0$, see Subsection 5.3;
\item classification of the bounded travelling waves for the case $\epsilon\neq0$, see Subsection 5.4. Here we not only recover the results in \cite{gui-jnl}, but as already mentioned, we classify the periodic waves, which have not been previously considered.
\end{itemize}

In Section \ref{sec6} we find some explicit travelling wave solutions for \eqref{1.0.4}. We show that this equation can only admit peakon solutions shaping like $e^{-|x|}$ \cite{chprl,anco,raspa} when equation \eqref{1.0.4} is reduced to the Dullin-Gottwald-Holm equation \cite{dgh}. We also find some solutions expressed in terms of elliptic integrals.

Geometrical aspects of equation \eqref{1.0.4} are also studied in our work. More precisely, we investigate members of this class that describe pseudo-spherical surfaces (PSS) \cite{chern, keti1992,reyes2002,reyes2011,keti2015}. Such equations have a beautiful geometric structure, since the domain of their solutions can be endowed with a Riemannian metric of constant Gauss curvature ${\mathcal K}=-1$. It is known \cite{chern} that several integrable equations (see \cite{ablo1,ablo2,olverbook,sok} for a better discussion about this subject) have this property, although not all equations describing PSS are integrable, as some examples can be found in \cite{keti2015}. To pursue this goal, we make use of some recent contributions due to Silva and Tenenblat \cite{keti2015}, where they investigated equations describing PSS of the form
\bb\label{1.0.10}
u_t-u_{txx}=\lambda uu_{xxx}+G(u,u_x,u_{xx})
\ee
with associated 1-forms 
\bb\label{1.0.11}
\omega_{1}=f_{11}dx+f_{12}dt,\quad \omega_{2}=f_{21}dx+f_{22}dt,\quad \omega_{1}=f_{31}dx+f_{32}dt,
\ee
where the coefficient functions $f_{ij}$, $i=1, 2, 3$ and $j=1,2$, depend on $x,t, u$ and derivatives of $u$. In Section \ref{sec7} we will introduce and explain all information needed regarding these concepts.

We observe that if $\epsilon \neq 0$ in \eqref{1.0.4}, we can transform it into \eqref{1.0.10} by scaling time and taking the shift $u\mapsto u- \Gamma$. In the case $\epsilon=0$, however, our main ingredient
%We observe that the aforementioned reference is useful provided $\epsilon\neq0$ in \eqref{1.0.4}. If this condition is satisfied, by scaling time and then making the shift $u\mapsto u-\Gamma$ we can transform equation \eqref{1.0.4} in an equation of the same type as taking $\Gamma=0$ in \eqref{1.0.4}. However, in case $\epsilon=0$, not only these transformations are useless, but also the reference \cite{keti2015}. In this situation, our main ingredient 
to classify the members of the class \eqref{1.0.4} (with $\epsilon=0$) describing PSS is another work, due to Rabelo and Tenenblat \cite{keti1992}. There it was investigated whether the class of evolution equations $u_t=u_{xxx}+G(u,u_x,u_{xx})$ describes PSS. In this case, we can eliminate the constant $\alpha$ by making a suitable shift in $u$, in a similar form as described above. Our main contribution regarding PSS described by members of \eqref{1.0.4}, which will be proved in Section \ref{sec7}, is the following
\begin{theorem}\label{teo1.3}
Equation $\eqref{1.0.4}$ describes PSS in the following cases:
\begin{enumerate}
\item if $\epsilon=0$, after eliminating $\alpha$, and $\omega_2=\eta dx+f_{22}dt$,

\item if $\epsilon\neq0$, after eliminating $\Gamma$, if and only if $\beta=\gamma=0$. In this case, the associated one-forms $\eqref{1.0.11}$ are given by
$$
\ba{l}
\ds{\omega_1=\left(u-u_{xx}+b\right)dx-\left[u(u-u_{xx}+b+1)+b\mp\eta u_x \right]}dt,\\
\\
\ds{\omega_2=\eta\,dx-[\eta(1+u)\mp u_x]dt},\\
\\
 \ds{\omega_3=\pm\left(u-u_{xx}+b+1\right)dx+\left[\eta u_x\pm uu_{xx}\mp(u+1)(u+b+1)\right]dt,}
\ea
$$
where $b=-1+(\eta^2-\alpha)/2$.
\end{enumerate}
\end{theorem}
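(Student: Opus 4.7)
The plan is to reduce equation (1.0.4) to the two standard forms for which classifications of PSS-describing equations are already available in the literature, and then simply read off the admissible parameter choices.

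First I would handle the case $\epsilon=0$. Here (1.0.4) collapses to an evolution equation
$$u_t=(\al-3u+\be u^2+\gamma u^3)u_x+\Gamma u_{xxx},$$
and a shift $u\mapsto u+\kappa$, with $\kappa$ chosen so that the constant term generated by $(\al-3u+\be u^2+\gamma u^3)u_x$ vanishes, removes $\al$ and leaves an equation of the form $u_t=\Gamma u_{xxx}+G(u,u_x,u_{xx})$ treated by Rabelo--Tenenblat \cite{keti1992}. I would invoke their classification to recover the admissible $f_{ij}$'s; since the coefficient $f_{21}$ of $\omega_2$ in their classification for this type of polynomial nonlinearity is a nonzero constant, the conclusion of item (1) follows with $f_{21}=\eta$.

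Next, for $\epsilon\neq 0$, I would normalise $\epsilon=1$ by rescaling $x$ and $t$ and then substitute $u\mapsto u-\Gamma$ in (1.0.4). A short computation shows that this shift absorbs the $\Gamma u_{xxx}$ term (together with linear corrections produced by the nonlinear terms), placing the equation in the form
$$u_t-u_{txx}=uu_{xxx}+G(u,u_x,u_{xx})$$
analysed by Silva and Tenenblat \cite{keti2015}. Applying their classification of one-forms (1.0.11) to this specific $G$, which still contains the two polynomial parameters $\be$ and $\gamma$ (now in shifted form), produces an algebraic system of compatibility equations on the $f_{ij}$; solving it forces $\be=\gamma=0$, as the cubic and quartic nonlinearities in $u$ are incompatible with the structural constraints on $f_{ij}$ demanded by the PSS condition $d\omega_i=\omega_j\wedge\omega_k$.

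Once $\be=\gamma=0$ is established, the remaining task is to exhibit the explicit one-forms. I would read off $f_{11}$, $f_{21}$, $f_{31}$ from \cite{keti2015} (which yields $f_{11}=u-u_{xx}+b$, $f_{21}=\eta$, $f_{31}=\pm(u-u_{xx}+b+1)$ with $b$ a free constant), substitute into the structure equations $d\omega_1=\omega_3\wedge\omega_2$, $d\omega_2=\omega_1\wedge\omega_3$, $d\omega_3=\omega_1\wedge\omega_2$, and integrate to obtain $f_{12}$, $f_{22}$, $f_{32}$. Matching the resulting differential constraints with the original PDE fixes $b=-1+(\eta^2-\al)/2$, which produces exactly the one-forms stated in the theorem. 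The main obstacle is this last step: carrying out the bookkeeping so that the compatibility with the equation (rather than with an arbitrary $G$) pins down the value of $b$ in terms of $\al$ and $\eta$ and ensures the signs in $\omega_3$ are consistent throughout; this is essentially a calculation, but must be done carefully because the $\pm$ choice in $f_{31}$ propagates into $f_{12}$, $f_{22}$ and $f_{32}$ simultaneously.
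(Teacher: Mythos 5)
Your plan coincides with the paper's proof: the paper also handles $\epsilon=0$ by shifting $u$ and invoking the Rabelo--Tenenblat classification (Lemma \ref{lema7.1}), and for $\epsilon\neq0$ it rescales, shifts $u\mapsto u-\Gamma/\epsilon^2$ to reach \eqref{7.2.1}, then checks the five Silva--Tenenblat necessary conditions (Lemma \ref{lema7.2}), where only condition 5 survives and the coefficient of $u_x^3$ forces $\beta=\gamma=0$ with $m_2=-\alpha$. The only difference is one of execution: where you propose to re-derive $f_{12},f_{22},f_{32}$ and the value of $b$ by integrating the structure equations, the paper simply specializes the ready-made formulas of Lemma \ref{lema7.3} and the constraint \eqref{7.1.3} with $a=\theta=1$, $\mu=m_1=0$, which is the same machinery.
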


If $\epsilon=0$, equation \eqref{1.0.4} becomes a member of a family of evolution equations considered in \cite{keti1992}. Then, the first part in Theorem \ref{teo1.3} is a consequence of the results proved in \cite{keti1992}. The particular case $\al=\be=\gamma=\Gamma=0$ corresponds to the CH equation and such equation is known to describe PSS, see \cite{reyes2002}.

\subsection{Contributions of the paper and its outline}

Our goal in this paper is the investigation of equation \eqref{1.0.4}, which is a mathematical extension of the model \eqref{1.0.1} recently proposed and studied in \cite{chines-jde,chines-arxiv,chines-adv}. The equation we pay attention to, however, is mathematically richer than \eqref{1.0.1} once it is not under the restrictions imposed by the constraints \eqref{1.0.2}. Our contributions in this paper can be summarized as follows:
\begin{itemize}
\item we prove a local well-posedness result to \eqref{1.0.4}. To achieve this purpose, we generalize a theorem presented in \cite{liu2011}. As a consequence of this generalization, we have the local well-posedness of \eqref{1.0.4} guaranteed, and the same result to \eqref{1.0.1} as an immediate implication. In essence, the result originally proved in \cite{liu2011} was similar to the one presented in Corollary \ref{cor1.1}, but with the strong condition $g(0)=0$. In our case, we removed this restriction, which makes our result applicable to a larger class of equations of the type \eqref{1.0.6}. Note that \eqref{1.0.7} is a particular case of the former equation under the shift $g\mapsto g+\Gamma$. We observe that in \cite{gui-jnl} the local well-posedness of \eqref{1.0.1} is claimed, but its demonstration is omitted. This part is the subject of Section \ref{sec2}.

\item we establish conservation laws for equation \eqref{1.0.4}. We observe that the task of finding conservation laws is not simple from a practical point of view, see \cite{olverbook,popPLA}. Then, we impose the following restrictions on the conserved currents we look for: they should depend up to second order derivatives, with particular emphasis to case $\epsilon\neq0$. These results will be obtained in Section \ref{sec3}. Case $\epsilon=0$ is barely considered because the conservation laws for the family obtained with this condition has been widely investigated along the last five decades. To cite a few, several conservation laws for this case can be found in \cite{ablo1,ablo2,igor2012,igor2014,miura,olverbook,popPLA,rita} and references therein. %This is done in Section \ref{sec3}.

Moreover, if we do not impose the restriction $\epsilon=0$ we are able to find useful conservation laws to obtain qualitative information about the behaviour of solutions of \eqref{1.0.4}, such as preservation of the sign of the initial condition and a quadrature that holds for any value of $\epsilon$. These results are proven in Section \ref{sec4}.

\item The quadrature is crucial in the classification of bounded travelling wave solutions of \eqref{1.0.4}. For the case $\epsilon=0$ we have classified 25 cases of travelling wave solutions, whereas for $\epsilon\neq0$ we have classified 114 cases of wave solutions. We observe that in \cite{gui-jnl} a classification of travelling waves were carried out as well, but the quadrature the authors considered is a very restrictive case of ours. Then, our classification, presented in Section \ref{sec6}, is more complete than the one carried out in \cite{gui-jnl}. 

\item In Section \ref{sec7} we provide a complete description of members of \eqref{1.0.4} describing PSS. As an application of our result, we find the members of \eqref{1.0.1} that can describe this sort of surfaces. This is done by imposing \eqref{1.0.1} to satisfy certain conditions, which implies on restrictions on the parameters \eqref{1.0.2}. Consequently, we find the values of the physical variable $\Omega$ that allow \eqref{1.0.2} to be compatible with the fact that \eqref{1.0.1} describes PSS.

\item We discuss our results in Section \ref{sec8}, whereas in Section \ref{sec9} we present our conclusions.
\end{itemize}

\section{Well-posedness}\label{sec2}

In this section we prove Theorem \ref{teo1.1}. Our main ingredient is Kato's approach \cite{kato,katoma}, which we shall summarize in the subsection \ref{subsec2.2}. Firstly we present a short review on function space and fix the terminology. Then, in Subsection \ref{subsec2.3} we prove some technical results that will be used in subsections \ref{subsec2.4} and \ref{subsec2.5} to prove theorems \ref{teo1.2} and \ref{teo1.1}, respectively.

\subsection{Preliminaries}\label{subsec2.1}

It will be convenient to recall some terminology on function spaces and also fix some notation and terminology. For further details on these topics, see \cite{brezis,hunter,schwartz}.

The Hilbert space of all square integrable equations on the real line $\R$ is denoted by $L^2(\R)$ and is endowed with the norm
$$\|f\|_{L^{2}}=\sqrt{\int_\R |f|^2dx}.$$

More generally, given $p\in[1,\infty)$, by $L^p(\R)$ we denote the space of functions $f:\R\rightarrow\R$ such that $$\int_\R|f|^pdx<\infty.$$ It has the structure of a Banach space when endowed with the norm
$$\|f\|_{L^p}:=\sqrt[p]{\int_\R|f|^p dx}.$$
For $p=\infty$, we have the Banach space $(L^\infty(\R),\|\cdot\|_\infty)$, where
$$\|f\|_\infty:=\esssup{|f(x)|}.$$

%The Sobolev space $H^1(\R)$ is the space of all square integrable functions $f\in L^1(\R)$ such that its distributional derivative $f'$ also belongs to $L^2(\R)$. In particular, this space is endowed with norm $$\|f\|_{H^1}=\sqrt{\|f\|^{2}_{L^{2}}+\|f'\|^{2}_{L^{2}}}.$$

%We can construct Sobolev spaces $H^{1,\epsilon}(\R)$ with {\it weight}, by replacing the last norm to
%$$\|f\|_{H^1}=\sqrt{\|f\|^{2}_{L^{1}}+\epsilon^2\|f'\|^{2}_{L^{1}}},$$ for a certain constant $\epsilon\neq0$. In case $\epsilon=0$ we have $H^{1,0}(\R)=L^2(\R)$. We, however, will avoid this notation and shall refer to all of these spaces as $H^1(\R)$, except for $\epsilon=0$, when we shall use $L^2(\R)$. Observe that if $\epsilon\neq$, a function belongs to $H^{1,\epsilon}(\R)$ if and only if it belongs to $H^1(\R)$. 

Let $C^p_0(\R)$, $0\leq p\leq \infty$, be the set of $C^p$ functions $f:\R\rightarrow\R$ with compact support. The set of infinitely many smooth functions decaying as faster as any power to $0$ at infinity, with the same property holding to any of their derivatives, is denoted by ${\cal S}(\R)$ and is referred as the Schwartz space. We observe that $\overline{C^\infty_0(\R)}={\cal S}(\R)$ and an element of ${\cal S}(\R)$ is called {\it test function}. 

The dual topological space of ${\cal S}(\R)$ is denoted by ${\cal S}'(\R)$ and its members are called {\it tempered distributions}. Given a tempered distribution $\phi$, its Fourier transform ${\cal F}(\phi)$ is denoted by $\hat{\phi}$. Explicitly, we have
$$
\hat{\phi}(\xi)=\f{1}{\sqrt{2\pi}}\int_{-\infty}^{+\infty}\phi(x)e^{-ix\xi}dx.
$$
Moreover, its inverse is given by
$$
\phi(x)=\f{1}{\sqrt{2\pi}}\int_{-\infty}^{+\infty}\hat{\phi}(\xi)e^{ix\xi}d\xi.
$$
Very often in this work, given a function $u=u(x,t)$, we shall consider the function $x\mapsto u(x,t)$ and consider its Fourier transform, to each fixed value of $t$.

Given $s\in\R$, the space $H^{s}(\R)$ of the tempered distributions $u\in{\cal S}'(\R)$ such that $(1+|\xi|^2)^{s/2}\hat{u}(\xi)\in L^2(\R)$ is a Hilbert space when endowed with the inner product
$$\langle u,v\rangle _{H^{s}}:=\int_{\R}(1+|\xi|^2)^{s}\hat{u}(\xi)\overline{\hat{v}(\xi)}d\xi.$$

More generally, throughout this paper, if $X$ is a Banach space, its norm will be denoted by $\|\cdot\|_X$, whereas if $H$ is a Hilbert space, its inner product will be referred as $\langle\cdot,\cdot\rangle_H$.

Consider the family $\{H^s(\R),\,s\in\R\}$. We recall the following facts (see \cite{taylor}, chapter 4, or \cite{escher}):
\begin{itemize}
    \item[{\bf F1:}] We have the sequel of continuous and dense embeddings for $s\geq t$: ${\cal S}(\R)\subseteq H^{s}(\R)\subseteq H^t(\R)\subseteq{\cal S}'(\R)$;
   
    \item[{\bf F2:}] the dual of $H^s(\R)$ is $H^{-s}(\R)$, that is, $(H^{s}(\R))'=H^{-s}(\R)$;
    \item[{\bf F3:}] $\p_x:u\mapsto \p_x u:=u_x$ is a linear and continuous operator between $H^{s}(\R)$ and $H^{s-1}(\R)$;
    \item[{\bf F4:}] For each $s$, let $\Lambda^s u:={\cal F}^{-1}((1+|\xi|^2)^\f{s}{2}\hat{u})$, where ${\cal F}^{-1}$ means the inverse Fourier transform. For all $s$ and $t$, $\Lambda^s$ is an isomorphism between $H^{t}(\R)$ and $H^{t-s}(\R)$ and its inverse is denoted by $\Lambda^{-s}$. In particular, the space $H^s(\R)$ can be seen as $H^s(\R)=\Lambda^{-s}(L^2(\R))$ and $\langle u,v\rangle_{H^s}=\langle\Lambda^s u,\Lambda^s v\rangle_{L^2}$;

\end{itemize}

Although we have a family of operators $\Lambda^s$, $s\in\R$, the most important one for our purposes is $\Lambda^{2}$, which can be identified as the differential operator $1-\p_x^2$, while its inverse is given by $\Lambda^{-2}f=p\ast f$, where $p(x)=e^{-|x|}/2$ and $\ast$ denotes the usual convolution. Instead of $\Lambda^1$, through this paper we will simply use $\Lambda$.

Given an operator $A$, by $\dom(A)$ we mean the domain of $A$. If $A$ and $B$ are two operators with the same domain and range, their commutator is defined by $[A,B]g:=A(B(g))-B(A(g))$. Identifying a function $f$ as the multiplication operator $M_f$, we have $[A,f]\equiv[A,M_f]$, which acts ad the follows: $[A,f]g=A(fg)-f\,A(g)$.

Moreover, we shall make use of the estimates \cite{KP,linares,taylor}:
%\begin{align*}
$\Vert \Lambda^{-2}f \Vert_{H^s}\leq \Vert f\Vert_{H^{s-2}}$,
$\Vert\partial_x f \Vert_{H^{s-1}}\leq \Vert f\Vert_{H^{s}}$ and
$\Vert \partial_x\Lambda^{-2}f \Vert_{H^s}\leq \Vert f\Vert_{H^{s-1}}$.
%\end{align*}

\begin{lemma}\label{lema2.1}
For $s>1/2$, there is a constant $c_s>0$ such that $\Vert fg \Vert_{H^s}\leq c_s\Vert f\Vert_{H^s}\Vert g\Vert_{H^s}$.
\end{lemma}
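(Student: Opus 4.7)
The plan is to work in the Fourier side, where the $H^s$-norm is $\|u\|_{H^s}^2=\int_{\R}(1+|\xi|^2)^s|\hat u(\xi)|^2 d\xi$ and where the convolution theorem gives $\widehat{fg}=(2\pi)^{-1/2}\hat f\ast\hat g$. The crucial point is a pointwise inequality of Peetre type: starting from $|\xi|\leq|\xi-\eta|+|\eta|$, one concludes that for every $s\geq 0$,
$$
(1+|\xi|^2)^{s/2}\leq 2^{s/2}\bigl[(1+|\xi-\eta|^2)^{s/2}+(1+|\eta|^2)^{s/2}\bigr],
$$
since at least one of $|\xi-\eta|,|\eta|$ exceeds $|\xi|/2$. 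Applying this under the convolution integral and multiplying the result by $(1+|\xi|^2)^{s/2}$ gives
$$
(1+|\xi|^2)^{s/2}|\widehat{fg}(\xi)|\;\leq\;\tfrac{2^{s/2}}{\sqrt{2\pi}}\bigl[(F\ast G_0)(\xi)+(F_0\ast G)(\xi)\bigr],
$$
where $F(\xi)=(1+|\xi|^2)^{s/2}|\hat f(\xi)|$, $F_0(\xi)=|\hat f(\xi)|$, and analogously $G,G_0$.

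Next I would take the $L^2$-norm in $\xi$ of both sides and apply Young's convolution inequality in the form $\|F\ast G_0\|_{L^2}\leq\|F\|_{L^2}\|G_0\|_{L^1}$ (and symmetrically for the other term). By Plancherel, $\|F\|_{L^2}=\|f\|_{H^s}$ and $\|G\|_{L^2}=\|g\|_{H^s}$. What remains is to bound $\|F_0\|_{L^1}=\|\hat f\|_{L^1}$ and $\|G_0\|_{L^1}=\|\hat g\|_{L^1}$ by $H^s$-norms. Here is exactly where the hypothesis $s>1/2$ enters: writing $|\hat g(\eta)|=(1+|\eta|^2)^{s/2}|\hat g(\eta)|\cdot(1+|\eta|^2)^{-s/2}$ and applying Cauchy--Schwarz yields
$$
\|\hat g\|_{L^1}\;\leq\;\|g\|_{H^s}\left(\int_{\R}\frac{d\eta}{(1+|\eta|^2)^{s}}\right)^{1/2},
$$
and the last integral is finite precisely when $2s>1$. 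Denoting this finite quantity by $K_s$ and combining the pieces produces
$$
\|fg\|_{H^s}\;\leq\;\tfrac{2^{s/2}}{\sqrt{2\pi}}\bigl(\|f\|_{H^s}\,K_s\|g\|_{H^s}+K_s\|f\|_{H^s}\,\|g\|_{H^s}\bigr)\;=\;c_s\|f\|_{H^s}\|g\|_{H^s},
$$
with $c_s:=2^{1+s/2}K_s/\sqrt{2\pi}$.

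There is no real obstacle in this argument: the only delicate ingredient is the pointwise inequality above, which is completely elementary, and the integrability $s>1/2$ is exactly what makes the $L^1$-bound on $\hat g$ possible. Strictly speaking one should first verify the inequality for $f,g\in\mathcal{S}(\R)$, where all manipulations with the Fourier transform are unambiguous, and then extend by density using \textbf{F1} together with the continuity of the bilinear map $(f,g)\mapsto fg$ from $\mathcal{S}(\R)\times\mathcal{S}(\R)$ with the $H^s$-topology into $H^s(\R)$, which is ensured by the very bound just established.
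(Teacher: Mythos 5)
Your argument is correct, but it is worth noting that the paper does not prove Lemma \ref{lema2.1} at all: it simply cites the literature (Theorem 3.5 of Linares--Ponce, or Taylor), where essentially the proof you wrote is carried out. So what you have done is supply, in full, the standard Fourier-side proof of the algebra property that the paper outsources: Peetre-type splitting of the convolution $\widehat{fg}=(2\pi)^{-1/2}\hat f\ast\hat g$, Young's inequality $L^2\ast L^1\to L^2$, Plancherel to identify $\|(1+|\xi|^2)^{s/2}\hat f\|_{L^2}=\|f\|_{H^s}$, and the Cauchy--Schwarz bound $\|\hat g\|_{L^1}\leq K_s\|g\|_{H^s}$, which is exactly where $s>1/2$ is used; the closing density argument from $\mathcal{S}(\R)$ is also the right way to make the formal manipulations rigorous. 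The one cosmetic slip is the constant in your pointwise inequality: from $|\xi-\eta|\geq|\xi|/2$ one gets $1+|\xi|^2\leq 4(1+|\xi-\eta|^2)$ and hence $(1+|\xi|^2)^{s/2}\leq 2^{s}\bigl[(1+|\xi-\eta|^2)^{s/2}+(1+|\eta|^2)^{s/2}\bigr]$; with the factor $2^{s/2}$ as you wrote it the inequality can fail for large $s$ (test $\eta=\xi/2$, $|\xi|\to\infty$). Since only the existence of some constant $c_s$ matters, replacing $2^{s/2}$ by $2^{s}$ throughout repairs this with no other change, and your proof then stands as a self-contained alternative to the paper's citation.
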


\begin{proof}
See \cite{linares}, Theorem 3.5 on page 51, or \cite{taylor}, Exercise 6 on page 320.
\end{proof}

Lemma \ref{lema2.1} is also known as Algebra Property.

\begin{lemma}\label{lema2.2}
If $s>1/2$, then there exists $c_s>0$ such that $\Vert fg \Vert_{H^{s-1}}\leq c_{s}\Vert f\Vert_{H^{s}}\Vert g \Vert_{H^{s-1}}$.
\end{lemma}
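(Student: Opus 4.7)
My plan is to prove Lemma \ref{lema2.2} by a direct Fourier-analytic argument using Peetre's inequality and Young's convolution inequality, with an interpolation fallback for the low-regularity regime. First, I would use Parseval together with $\widehat{fg}=(2\pi)^{-1/2}\hat{f}\ast\hat{g}$ to write
$$\|fg\|_{H^{s-1}}^{2}=\int_{\R}(1+|\xi|^{2})^{s-1}\left|\int_{\R}\hat{f}(\xi-\eta)\hat{g}(\eta)\,d\eta\right|^{2}d\xi.$$
Then I would apply Peetre's inequality $(1+|\xi|^{2})^{(s-1)/2}\leq 2^{|s-1|/2}(1+|\xi-\eta|^{2})^{|s-1|/2}(1+|\eta|^{2})^{(s-1)/2}$ pointwise and invoke Young's inequality $\|F\ast G\|_{L^{2}}\leq\|F\|_{L^{1}}\|G\|_{L^{2}}$ with $F(\zeta)=(1+|\zeta|^{2})^{|s-1|/2}|\hat{f}(\zeta)|$ and $G(\eta)=(1+|\eta|^{2})^{(s-1)/2}|\hat{g}(\eta)|$. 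Since $\|G\|_{L^{2}}=\|g\|_{H^{s-1}}$, the task reduces to proving $\|F\|_{L^{1}}\leq C(s)\|f\|_{H^{s}}$.

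For this last step, I would apply Cauchy-Schwarz after factoring the weight as $(1+|\zeta|^{2})^{|s-1|/2}=(1+|\zeta|^{2})^{(|s-1|-s)/2}(1+|\zeta|^{2})^{s/2}$. The bound $\|F\|_{L^{1}}\leq C(s)\|f\|_{H^{s}}$ then follows provided the auxiliary integral $\int_{\R}(1+|\zeta|^{2})^{|s-1|-s}d\zeta$ converges, i.e., $|s-1|-s<-1/2$. A short case check shows that this holds automatically for $s\geq 1$, and more generally for $s>3/4$.

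The main obstacle is the sliver $1/2<s\leq 3/4$, where the non-integrable tail of the weight defeats the direct Peetre/Young approach. To handle this regime, I expect to invoke an interpolation argument anchored on Lemma \ref{lema2.1}: for fixed $f\in H^{s}(\R)$, the multiplication operator $M_{f}:g\mapsto fg$ is bounded on $H^{s}(\R)$ by Lemma \ref{lema2.1}; by Fact \textbf{F2} and the self-adjointness of $M_{f}$ (for real $f$), $M_{f}$ extends to a bounded operator on $H^{-s}(\R)$ with the same norm; complex interpolation of the family $\{H^{t}(\R)\}_{t\in\R}$ then yields boundedness of $M_{f}$ on $H^{t}(\R)$ for every $t\in[-s,s]$. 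Since $s>1/2$ forces $s-1\in[-s,s]$, specializing to $t=s-1$ closes the gap and establishes the lemma in full generality.
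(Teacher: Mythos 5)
Your proposal is correct, but it takes a genuinely different route from the paper: for Lemma \ref{lema2.2} the paper gives no argument at all, deferring entirely to Lemma A1 of Kato \cite{kato}, where the estimate is obtained for every $s>1/2$ in one stroke by a Fourier-side convolution bound (the classical proofs split the convolution integral according to which frequency dominates, rather than using the crude Peetre bound, and so never see the restriction $s>3/4$). Your two-step route is sound: the Peetre/Young/Cauchy--Schwarz computation is right, and the convergence condition $|s-1|-s<-1/2$, hence $s>3/4$ (automatic for $s\geq1$), is computed correctly; in the remaining sliver the duality step is legitimate, since the adjoint of $M_f$ acting on $(H^{s}(\R))'=H^{-s}(\R)$ is again multiplication by $f$ for real $f$ with the same norm, and $[H^{-s}(\R),H^{s}(\R)]_{\theta}=H^{(2\theta-1)s}(\R)$ reaches $t=s-1$ precisely because $s>1/2$. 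Two remarks: first, your interpolation half already proves the full lemma for all $s>1/2$ (indeed the stronger statement $\Vert fg\Vert_{H^{t}}\leq c\Vert f\Vert_{H^{s}}\Vert g\Vert_{H^{t}}$ for every $|t|\leq s$), so the Peetre/Young computation is dispensable; second, to apply interpolation cleanly you should record that the dual extension of $M_f$ to $H^{-s}(\R)$ restricts to $M_f$ on $H^{s}(\R)$ (both are continuous into $H^{-s}(\R)$ and agree on the dense subspace ${\cal S}(\R)$), so that a single operator acts on the couple. What the paper's citation buys is brevity; what your argument buys is a self-contained proof resting only on Lemma \ref{lema2.1}, duality ({\bf F2}) and the standard interpolation property of the Sobolev scale, at the cost of heavier machinery than the elementary splitting argument behind Kato's lemma.
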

\begin{proof}
See Lemma $A1$ in \cite{kato}.
\end{proof}

\begin{lemma}\label{lema2.3}
If $s>1/2$ and $u\in H^{s}(\R)$, then $u$ is bounded and continuous. Moreover, in case we have $s>1/2+k$, then $H^s(\R)\subseteq C^k_0(\R)$.
\end{lemma}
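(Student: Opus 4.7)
The plan is to use the standard Sobolev embedding argument via the Fourier transform, exploiting facts F3 and F4 already stated. For $u\in H^s(\R)$ with $s>1/2$, decompose
$$\hat{u}(\xi)=(1+|\xi|^2)^{-s/2}\cdot(1+|\xi|^2)^{s/2}\hat{u}(\xi)$$
and apply the Cauchy--Schwarz inequality to obtain
$$\|\hat{u}\|_{L^1}\leq\left(\int_{\R}(1+|\xi|^2)^{-s}\,d\xi\right)^{1/2}\|u\|_{H^s}.$$
The crucial elementary calculation is that the weight integral on the right is finite precisely when $s>1/2$: at infinity $(1+|\xi|^2)^{-s}$ is comparable to $|\xi|^{-2s}$, whose integral over $\{|\xi|\geq 1\}$ converges iff $2s>1$. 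This yields $\hat{u}\in L^1(\R)$.

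Once $\hat{u}\in L^1(\R)$, the Fourier inversion formula recalled in Subsection \ref{subsec2.1} represents $u$ pointwise as
$$u(x)=\frac{1}{\sqrt{2\pi}}\int_{-\infty}^{+\infty}\hat{u}(\xi)e^{ix\xi}\,d\xi,$$
and this integral defines a bounded continuous function on $\R$: boundedness from $|u(x)|\leq(2\pi)^{-1/2}\|\hat{u}\|_{L^1}$, and continuity from dominated convergence applied to the family $\{\hat{u}(\xi)e^{ix\xi}\}_{x\in\R}$. The Riemann--Lebesgue lemma additionally gives decay at infinity, so the representative lies in fact in $C_0(\R)$. Identifying $u$ with its continuous representative proves the first assertion.

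For the second statement, suppose $s>1/2+k$. Iterating fact F3 the map $\p_x^j:H^s(\R)\to H^{s-j}(\R)$ is continuous for each $0\leq j\leq k$, hence $\p_x^j u\in H^{s-j}(\R)$ with $s-j>1/2+(k-j)\geq 1/2$. Applying the first part of the lemma to every $\p_x^j u$ produces a bounded continuous representative, so $u\in C^k(\R)$ with all derivatives up to order $k$ bounded and decaying at infinity, as needed. The only mildly delicate point of the argument is pinning down the exact integrability threshold $s=1/2$ for the weight $(1+|\xi|^2)^{-s}$; nothing else presents a genuine obstacle.
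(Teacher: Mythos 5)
Your argument is correct: the Cauchy--Schwarz splitting of $\hat{u}$ with the integrability threshold $s>1/2$, Fourier inversion plus Riemann--Lebesgue for boundedness, continuity and decay, and iteration via {\bf F3} for the $C^k_0$ statement is precisely the standard proof of the Sobolev embedding that the paper does not reproduce but simply invokes by citing Linares--Ponce (Theorem 3.2) and Taylor (Proposition 1.3). The only step left tacit is the identification of the continuous representative of $\p_x^j u$ with the classical $j$-th derivative of the representative of $u$ (equivalently, differentiation under the inversion integral, justified since $(1+|\xi|^2)^{j/2}\hat{u}\in L^1(\R)$ for $j\le k$ by the same Cauchy--Schwarz estimate), which is routine and does not affect the validity of the proof.
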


\begin{proof}
See \cite{linares}, theorem 3.2, page 47, or \cite{taylor}, Proposition 1.3, page 317.
\end{proof}

Lemma \ref{lema2.3} is nothing but a Sobolev Embedding Theorem. We observe that if $u\in H^{s}(\R)$, with $s>1/2+k$, for a certain natural number $k$, then $u\in C^{k}_0$ and $\|u\|_{C^k}\leq\|u\|_s$.

\begin{lemma}\label{lema2.4}
Let $m$ be a positive integer and $F\in C^{m+2}(\R)$ be a function such that $F(0)=0$. Then, for every $r\in(1/2,m]$ and $u\in H^r(\R)$, we have $\|F(u)\|_{H^r}\leq \tilde{F}(\|u\|_\infty)\|u\|_{H^r}$, for a certain monotonic and increasing function $\tilde{F}$ depending only on $F$ and $r$.
\end{lemma}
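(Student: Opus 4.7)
The plan is to prove the composition estimate by handling integer and non-integer values of $r$ separately. The unifying observation is that, since $r>1/2$, Lemma \ref{lema2.3} gives $u\in L^\infty(\R)$, and together with $F(0)=0$ this allows one to replace pointwise values of $F$ and its derivatives by quantities depending only on $\|u\|_\infty$.

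For the base case $r\in(1/2,1]$, the mean value theorem yields $|F(u(x))|\le M_1|u(x)|$ and $|F(u(x))-F(u(y))|\le M_1|u(x)-u(y)|$, where $M_1=\max_{|t|\le\|u\|_\infty}|F'(t)|$. Substituting these into the Slobodeckij characterisation $\|f\|_{H^r}^2\simeq\|f\|_{L^2}^2+\int_{\R}\int_{\R}|f(x)-f(y)|^2/|x-y|^{1+2r}\,dx\,dy$ for $r\in(1/2,1)$, or directly into $\|f\|_{H^1}^2=\|f\|_{L^2}^2+\|f_x\|_{L^2}^2$ with $f_x=F'(u)u_x$ when $r=1$, yields the estimate with $\tilde F(\|u\|_\infty)$ proportional to $M_1$.

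For integer $r=k\in\{2,\dots,m\}$, I would expand, for each $0\le j\le k$,
$$
\partial_x^j F(u)=\sum_{\sum_i i m_i=j}c_m\,F^{(|m|)}(u)\,\prod_i(\partial_x^i u)^{m_i},\qquad |m|=\sum_i m_i,
$$
via Faà di Bruno, bound each $F^{(|m|)}(u)$ in $L^\infty$ using $F\in C^{m+2}$, and control the product $\prod_i(\partial_x^i u)^{m_i}$ in $L^2$ by Hölder's inequality with exponents $p_i=2j/(i m_i)$ combined with the Gagliardo--Nirenberg interpolation $\|\partial_x^i u\|_{L^{2j/i}}\le C\|u\|_\infty^{1-i/j}\|\partial_x^j u\|_{L^2}^{i/j}$. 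The scaling identity $\sum_i i m_i/j=1$ built into the partition forces exactly one linear factor of $\|u\|_{H^j}$ to appear in each summand, while all remaining factors collapse into powers of $\|u\|_\infty$. Summing over $j=0,\dots,k$, and using the base-case bound on $\|F(u)\|_{L^2}$, produces $\|F(u)\|_{H^k}\le\tilde F_k(\|u\|_\infty)\|u\|_{H^k}$ with $\tilde F_k$ monotone and increasing.

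The non-integer range $r=k+\sigma$, $\sigma\in(0,1)$, $k+1\le m$, would then be handled by the decomposition $\|f\|_{H^r}^2\simeq\|f\|_{H^k}^2+[\partial_x^k f]_\sigma^2$: the integer part is already controlled, while the Gagliardo seminorm of $\partial_x^k F(u)$ is estimated by telescoping differences across each Faà di Bruno summand, so that each difference reduces either to $[\partial_x^i u]_\sigma\le\|u\|_{H^r}$ or to a Lipschitz bound $|F^{(n)}(u(x))-F^{(n)}(u(y))|\le L_n|u(x)-u(y)|$, with the remaining factors again controlled in the appropriate $L^p$ by Gagliardo--Nirenberg exactly as above. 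The main obstacle is keeping the final bound \emph{linear} in $\|u\|_{H^r}$ uniformly across all the Faà di Bruno summands, which hinges on the identity $\sum_i i m_i=j$: this is what guarantees that at most one factor of the top-order Sobolev norm appears in each term. The hypothesis $F\in C^{m+2}$ provides exactly the regularity required to Lipschitz-bound $F^{(k)}$ in the Slobodeckij step, and monotonicity of the resulting $\tilde F$ is automatic, since each $\max_{|t|\le\|u\|_\infty}|F^{(n)}(t)|$ is an increasing function of $\|u\|_\infty$.
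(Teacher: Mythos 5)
The paper does not actually prove this lemma: its entire proof is the citation to Constantin and Molinet \cite{const-mol}, where the composition estimate is established. Your proposal is therefore a genuinely different, self-contained route, namely the classical Moser-type argument: mean value theorem plus the Slobodeckij characterisation for $r\in(1/2,1]$, Fa\`a di Bruno combined with the Gagliardo--Nirenberg interpolation $\|\partial_x^i u\|_{L^{2j/i}}\lesssim\|u\|_\infty^{1-i/j}\|\partial_x^j u\|_{L^2}^{i/j}$ for integer $r$, and an integer-plus-seminorm decomposition for fractional $r$. The base and integer cases as you set them up are complete and correct: the scaling relation $\sum_i i\,m_i=j$ does force exactly one factor of $\|u\|_{H^j}$ per summand, and $F\in C^{m+2}$ with $F(0)=0$ supplies both the $L^\infty$ bounds on $F^{(n)}(u)$ over the range $\{|t|\le\|u\|_\infty\}$ and the zeroth-order estimate, so $\tilde F$ comes out monotone as claimed. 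The one place where your write-up is thinner than a full proof is the fractional step: when you telescope the Gagliardo seminorm of a Fa\`a di Bruno summand, the undifferenced factors $\partial_x^i u$ with $i\ge 1$ are in general not bounded (they lie only in $H^{r-i}$, and $r-i$ may be below $1/2$), so "exactly as above" must be replaced by seminorm versions of Gagliardo--Nirenberg, i.e. bounds of the type $[\partial_x^i u]_{W^{\sigma,p}}\lesssim\|u\|_\infty^{1-(i+\sigma)/r}\|u\|_{H^r}^{(i+\sigma)/r}$ together with H\"older exponents whose homogeneities again sum to one; with that bookkeeping the argument closes and linearity in $\|u\|_{H^r}$ survives. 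In short, the paper's citation is brief but opaque, while your approach is elementary and self-contained at the cost of the combinatorial and interpolation bookkeeping in the fractional range.
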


\begin{proof} See \cite{const-mol}.\end{proof}

By Lemma \ref{lema2.4}, if $F$ satisfies its conditions, then $F(u)\in H^r$ for any $u\in H^r,\,r\in(1/2,m]$. Moreover, by the Mean Value Theorem and assuming that $\|u\|_{H^r}$ and $\|v\|_{H^r}$ are bounded, we have $\|F(u)-F(v)\|_{H^r}\leq M\|u-v\|_{H^r}$, for a certain positive constant $M$.

The following diagram is also important to understand some of the demonstrations we shall present in the next section.
\begin{figure}[h!]
\begin{center}
\begin{tikzcd}
H^s(\R)\ni u \arrow{rr}{\p_x}\arrow{rrdd}{\Lambda^{-2}\p_x}  &&\p_x u\in H^{s-1}(\R) \arrow{dd}{\Lambda^{-2}}\\
\\
 &&\Lambda^{-2}\p_x u\in H^{s+1}(\R)
\end{tikzcd}
\end{center}
\caption{Diagram illustrating the composition of the operators $\Lambda^{-2}$ and $\p_x$, which will be useful in Lemma \ref{lema2.9}.}
\end{figure}

In view of ${\bf F1}$, $H^{s+1}(\R)\subseteq H^{s}(\R)$ and then $\|\Lambda^{-2}\p_x u\|_{H^{s+1}}\leq c\|u\|_{H^s}$, for any $u\in H^{s}(\R)$ and a certain constant $c>0$. Let $F$ be a function satisfying Lemma \ref{lema2.4}. Then, not only the diagram holds with $u$ replaced by $F(u)$, $u\in H^{s}(\R)$, $s\in(1/2,m]$, but also 
\bb\label{2.1.1}
\|\Lambda^{-2}\p_x F(u)\|_{H^s}\leq M\|u\|_{ H^s},
\ee
for a certain positive constant $M$.

\subsection{Kato's approach}\label{subsec2.2}

To begin with, let $X$ and $Y$ be two Banach spaces. Consider the problem
\bb\label{2.2.1}
\left\{
\ba{l}
\ds{\f{d u}{dt}+A(u)u=f(u)\in X,\quad t\geq0},\\
\\
u(0)=u_0\in Y,
\ea
\right.
\ee
where $A(u)$ is a linear operator.

In \cite{kato} Kato proved that if certain conditions are satisfied, then the problem \eqref{2.2.1} has a unique solution. 
We are now in position to recall Kato's conditions. The first one is:

\begin{condition}\label{cond2.1} Let $X$ and $Y$ be reflexive Banach spaces, such that $Y\subseteq X$ and the inclusion $Y\hookrightarrow X$ is continuous and dense. In addition, there exists an isomorphism $S:Y\rightarrow X$ such that $\|u\|_Y=\|Su\|_X$.
\end{condition}

We observe that any Hilbert space is reflexive. Moreover, in view of {\bf F1} and {\bf F4}, $X=H^{s}(\R)$, $Y=H^{s-1}(\R)$ and $S=\Lambda$ satisfy Condition \ref{cond2.1}

\begin{condition}\label{cond2.2}
There exist a ball $W$ of radius $R$ such that $0\in W\subseteq Y$ and a family of operators $(A(u))_{u\in W}\subseteq {\cal L}(X)$ such that $-A(u)$ generates a $C_0$ semi-group in $X$ with $\|e^{-s A(u)}\|_{{\cal L}(X)}\leq e^{\be s}$, for any $u\in W$, $s\geq0$, for a certain real number $\be$.
\end{condition}

We recall that if $H$ is a Hilbert space over $\R$ or $\mathbb{C}$, an operator (bounded or not) $A:\dom{(A)}\subseteq H\rightarrow H$ is called $m-$dissipative (in the sense of Philips \cite{phil}, see also \cite{escher}) if and only if $\Re \langle Ax,x\rangle\leq 0$, for all $x\in \dom{(A)}$ (here $\Re$ means the real part of a given complex number), and $\range (\lambda I-A)=H$, for some $\lambda>0$. A densely defined operator $A$ is $m-$dissipative if and only if $A$ and its adjoint $A^\ast$ are dissipative.

\begin{condition}\label{cond2.3}
Let $S$ be the isomorphism in Condition $\ref{cond2.1}$. Then $B(u):=[S,A]S^{-1}\in {\cal L}(X)$. Moreover, there exist constants $c_1$ and $c_2$ such that $\|B(u)\|_{{\cal L}(X)}\leq c_1$, $\|B(u)-B(v)\|_{{\cal L}(x)}\leq c_2\|u-v\|_{Y}$, for all $u,\, v\in W$ 
\end{condition}

\begin{condition}\label{cond2.4}
For any $w\in W$, $Y\subseteq \dom(A(w))$ and $\|A(u)-A(v)\|_{{\cal L}(Y;X)}\leq c_3\|u-v\|_{X}$, for any $u,\,v\in W$.
\end{condition}

\begin{condition}\label{cond2.5}
The function $f:X\rightarrow X$ satisfy the following conditions:
\begin{enumerate}
    \item $\left.f\right|_{W}:W\rightarrow Y$ is bounded, that is, there exists a constant $c_4$ such that $\|f(w)\|_Y\leq c_4$, for all $w\in W$;
    \item $\left.f\right|_{W}:W\rightarrow X$ is Lipschitz when taking the norm of $X$ into account, that is, there is another constant $c_5$ such that $\|f(u)-f(v)\|_{X}\leq c_5\|u-v\|_X$, for all $u,\,v\in W$.
\end{enumerate}
\end{condition}

We would like to observe that the constants mentioned in the conditions above depend on the radius $R$ of $W$, see \cite{escher,kato,mustafa,blanco}.

The following result was proved in \cite{kato} (see Theorem 6), and is the basis to the proof Theorem \ref{teo1.1}.
\begin{lemma}\label{lema2.5}
Consider the problem $(\ref{2.2.1})$ and assume that conditions $\ref{cond2.1}$--$\ref{cond2.5}$ are satisfied. If $u_0\in W$, then there is $T>0$ such that $(\ref{2.2.1})$ has a unique solution $u\in C^0(W,[0,T))\cap C^1(X,[0,T))$, with $u(0)=u_0$.
\end{lemma}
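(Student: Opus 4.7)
The plan is to adapt the classical Picard-type iteration for quasilinear evolution equations, with the linear theory supplied by Condition \ref{cond2.2}. I would set $u^{(0)}(t)\equiv u_0$ and recursively define $u^{(n+1)}$ as the unique solution of the linear Cauchy problem
\begin{equation*}
\frac{du^{(n+1)}}{dt}+A(u^{(n)}(t))u^{(n+1)}=f(u^{(n)}(t)),\qquad u^{(n+1)}(0)=u_0.
\end{equation*}
Since $-A(w)$ generates a $C_0$-semigroup on $X$ satisfying a uniform exponential bound for $w\in W$, the time-dependent family $-A(u^{(n)}(\cdot))$ generates an evolution system $U_n(t,s)$ on $X$ by the Kato--Tanabe theory for linear problems, and $u^{(n+1)}$ admits the Duhamel representation $u^{(n+1)}(t)=U_n(t,0)u_0+\int_0^t U_n(t,s)f(u^{(n)}(s))\,ds$.

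The crucial step is the a priori bound in the higher-regularity norm $\|\cdot\|_Y$. Conjugating the equation by the isomorphism $S$ and invoking Condition \ref{cond2.3} produces
\begin{equation*}
\frac{d(Su^{(n+1)})}{dt}+A(u^{(n)})(Su^{(n+1)})=B(u^{(n)})(Su^{(n+1)})+Sf(u^{(n)}),
\end{equation*}
where $\|B(u^{(n)})\|_{{\cal L}(X)}\le c_1$. Combining the semigroup bound $\|e^{-sA(w)}\|_{{\cal L}(X)}\le e^{\beta s}$ with the identity $\|Sf(w)\|_X=\|f(w)\|_Y\le c_4$ from Condition \ref{cond2.5}, and applying Gronwall's inequality, yields
\begin{equation*}
\|u^{(n+1)}(t)\|_Y=\|Su^{(n+1)}(t)\|_X\le e^{(\beta+c_1)t}\|u_0\|_Y+c_4\,t\,e^{(\beta+c_1)t}.
\end{equation*}
Choosing $T=T(R,\|u_0\|_Y)$ sufficiently small keeps the orbit inside $W$ and furnishes a uniform bound of the whole sequence in $C^0([0,T];Y)$.

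Next I would show $(u^{(n)})$ is Cauchy in $C^0([0,T];X)$. The difference $w^{(n)}=u^{(n+1)}-u^{(n)}$ satisfies a linear equation whose right-hand side is controlled via Conditions \ref{cond2.4}--\ref{cond2.5} by
\begin{equation*}
\|[A(u^{(n-1)})-A(u^{(n)})]u^{(n)}\|_X\le c_3\|u^{(n-1)}-u^{(n)}\|_X\|u^{(n)}\|_Y
\end{equation*}
and $\|f(u^{(n)})-f(u^{(n-1)})\|_X\le c_5\|u^{(n)}-u^{(n-1)}\|_X$; the uniform $Y$-bound from the previous step absorbs the first factor. A second Gronwall argument then delivers $\|w^{(n)}\|_{C^0([0,T];X)}\le CT\,\|w^{(n-1)}\|_{C^0([0,T];X)}$, which is a contraction after a final shrinking of $T$. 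The $X$-limit $u$ inherits the $Y$-bound by weak lower semicontinuity, the regularity $u\in C^1([0,T];X)$ follows directly from the equation, and the strong continuity $u\in C^0([0,T];Y)$ is recovered by a standard mollification argument; uniqueness comes from the same $X$-estimate applied to the difference of two solutions.

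The main obstacle is the propagation of $Y$-regularity: without Condition \ref{cond2.3} providing the commutator structure $[S,A]S^{-1}\in{\cal L}(X)$, conjugation by $S$ would lose derivatives and the iteration would produce only $X$-bounded, not $Y$-bounded, approximants. That single estimate is what makes the abstract Kato framework technical to verify in concrete PDE settings, and it is precisely where most of the work will go when we instantiate $S=\Lambda$, $X=H^{s-1}(\R)$, $Y=H^{s}(\R)$ in the proofs of Theorems \ref{teo1.2} and \ref{teo1.1}.
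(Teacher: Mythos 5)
The paper does not prove this lemma at all: it is quoted verbatim as Theorem 6 of Kato's paper \cite{kato}, so there is no internal argument to compare against. Your sketch is essentially a reconstruction of Kato's own proof — frozen-coefficient iteration, the linear evolution system for the non-autonomous family $-A(u^{(n)}(\cdot))$, conjugation by $S$ with Condition \ref{cond2.3} to propagate the $Y$-bound, contraction in $C^0([0,T];X)$, and uniqueness by the same $X$-estimate — and the outline is sound. Two points are glossed more than in the source: the existence and bounds of the evolution operators $U_n(t,s)$ rest on Kato's linear hyperbolic theory (stability of the family plus $Y$-admissibility via $S$), not on the parabolic Kato--Tanabe framework you name; and the norm continuity of the limit in $Y$ (hence $u\in C^0([0,T);Y)$ and then $u\in C^1([0,T);X)$) is obtained in \cite{kato} through weak continuity plus continuity of $t\mapsto\|Su(t)\|_X$ via the evolution operator, which is more delicate than a ``standard mollification argument.'' Neither issue changes the architecture of the proof, which matches the cited one.
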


A final, but crucial observation: several evolution equations $u_t=F[u]$ can be seen as a system of the form \eqref{2.2.1}, for each fixed $x$. For this reason, Kato's approach is a useful tool for dealing with these equations, {\it e.g}, see \cite{escher,liu2011,hak,mustafa,blanco}. Equation \eqref{1.0.4}, at first sight, is not eligible to the application of Kato's approach since it is not an evolution equation if $\epsilon\neq0$ and, therefore, not in the form \eqref{2.2.1}. On the other hand, we observe that \eqref{1.0.4} can be rewritten as 
$$(1-\p_x^2)u_t=F[u_{(3)}],$$
where we took $\epsilon=1$ (and this will be enough as we will show in Subsection \ref{subsec2.4}) and $F[u_{3}]$ is the right side of \eqref{1.0.4}. Remembering that the operator $1-\p_x^2$ can be identified with the operator $\Lambda^2$, the last equation can be put in the following form
$$u_t=\Lambda^{-2}F[u_{(3)}],$$
which is nearly in the form \eqref{2.2.1}. We will show very soon, in Subsection \ref{subsec2.4}, that \eqref{1.0.4} can be seen as an equation of the form \eqref{2.2.1}.
\subsection{Auxiliary results}\label{subsec2.3}

In this section we present some technical results needed to prove Theorem \ref{teo1.1}. To prove it, we must split the demonstration in two main cases: $\epsilon=0$ and $\epsilon\neq0$. For the first case, we have the following result:
\begin{lemma}\label{lema2.6}
Let $s>3/2$ and $u_0\in H^s(\R)$. Then, the problem
\bb\label{2.3.1}
\left\{
\ba{l}
u_t+u_{xxx}+\p_x g(u)=0,\quad x\in\R,\quad t\in[0,T),\,T>0,\\
\\
u(x,0)=u_0(x)
\ea\right.
\ee
has a unique solution $u\in C^0([0,T),H^s(\R))\cap C^1([0,T),H^{s-3})$, with $T$ having a lower bound depending only on $\|u_0\|_{H^s}$. Moreover, the map $u_0\mapsto u(\cdot,u_0)$ is continuous in $H^s(\R)$.
\end{lemma}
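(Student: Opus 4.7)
The plan is to apply Kato's abstract framework (Lemma \ref{lema2.5}) to the Cauchy problem \eqref{2.3.1}, following the blueprint that Kato developed in \cite{katoma} for generalised Korteweg--de Vries equations. Concretely, I rewrite \eqref{2.3.1} as $u_t + A(u)u = 0$, where $A(u)v := v_{xxx} + g'(u)v_x$, and choose $X = H^{s-3}(\R)$, $Y = H^{s}(\R)$, $S = \Lambda^{3}$, $f \equiv 0$, and $W$ the open ball in $Y$ of radius $R > \|u_0\|_{H^s}$. Condition \ref{cond2.1} is immediate from {\bf F1} and {\bf F4}, and Condition \ref{cond2.5} is trivial since $f=0$. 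Every remaining condition reduces to an estimate for $g'(u)$ in Sobolev norms, uniform over $u \in W$, which is furnished by Lemma \ref{lema2.4} (after writing $g'(u) = g'(0) + [g'(u)-g'(0)]$ so that the nonconstant part vanishes at the origin).

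For Condition \ref{cond2.2} I would use that $\p_x^{3}$ is skew-adjoint on $L^2(\R)$ (its Fourier symbol is purely imaginary) and commutes with $\Lambda^{r}$ for every $r$, so it generates a unitary group on every $H^{r}(\R)$. The first-order perturbation $g'(u)\p_x$ has a symmetric defect equal, after one integration by parts, to $-\tfrac{1}{2} g''(u)u_x$, which is bounded on $X$ by a constant depending only on $R$. The Lumer--Phillips theorem then produces the required quasi-contractive $C_0$-semigroup with bound $e^{\be s}$ and $\be = \be(R)$.

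Conditions \ref{cond2.3} and \ref{cond2.4} are the technical heart of the argument. Using $[\Lambda^{3},\p_x^{3}] = 0$, the commutator operator reduces to $B(u) = [\Lambda^{3}, g'(u)]\p_x \Lambda^{-3}$, and its ${\cal L}(X)$-bound follows from a Kato--Ponce commutator estimate of the form
$$
\bigl\|[\Lambda^{3}, g'(u)]w\bigr\|_{H^{s-3}} \leq C\bigl(\|\p_x g'(u)\|_{\infty}\|w\|_{H^{s-1}} + \|g'(u)\|_{H^{s}}\|w\|_{\infty}\bigr),
$$
combined with Lemmas \ref{lema2.3} and \ref{lema2.4}. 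The Lipschitz continuity required by both conditions follows the same pattern, using the Mean Value Theorem applied to $g'$ together with Lemmas \ref{lema2.1}--\ref{lema2.2}. Once all five conditions are checked, Lemma \ref{lema2.5} delivers existence, uniqueness, and the regularity $u \in C^{0}([0,T), H^{s}(\R)) \cap C^{1}([0,T), H^{s-3}(\R))$, and continuous dependence on the initial datum is obtained by applying the same estimates to the difference of two solutions. The step I expect to be most delicate is ensuring that every constant in the commutator and Lipschitz estimates depends only on $R$, since this is exactly what guarantees that the lifespan $T$ depends only on $\|u_0\|_{H^{s}}$.
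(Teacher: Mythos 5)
You should know at the outset that the paper does not actually prove Lemma \ref{lema2.6}: its ``proof'' is the citation to Theorem I of \cite{katoma}, and Kato's own argument there uses the same abstract machinery you invoke but with the choices $X=L^2(\R)$, $Y=H^s(\R)$, $S=\Lambda^s$; the $C^1([0,T),H^{s-3})$ regularity is then read off from the equation itself, since $u_t=-u_{xxx}-\partial_x g(u)\in H^{s-3}$ once $u\in C^0([0,T),H^s)$. Your instantiation $X=H^{s-3}(\R)$, $S=\Lambda^3$ is not an innocuous variant, because for $3/2<s<3$ the low space has negative index, and this is exactly where your sketch has genuine gaps. (i) Condition \ref{cond2.4} requires $\Vert A(u)-A(v)\Vert_{{\cal L}(Y,X)}\lesssim\Vert u-v\Vert_X$, i.e. $\Vert (g'(u)-g'(v))\partial_x w\Vert_{H^{s-3}}\lesssim \Vert u-v\Vert_{H^{s-3}}\Vert w\Vert_{H^{s}}$. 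With $X=L^2$ this is immediate from the mean value theorem and $\partial_x w\in L^\infty$; with $X=H^{s-3}$ it would need a composition estimate $\Vert g'(u)-g'(v)\Vert_{H^{s-3}}\lesssim\Vert u-v\Vert_{H^{s-3}}$ and a product estimate $H^{s-3}\times H^{s-1}\to H^{s-3}$, neither of which is supplied by Lemmas \ref{lema2.1}--\ref{lema2.4} (the product estimate in particular fails for $3-s>s-1$, i.e. $s<2$). This condition is the reason Kato works with $L^2$ as the low space, and your proposal gives no substitute argument.

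(ii) Your verification of Condition \ref{cond2.2} by ``one integration by parts'' is an $L^2$ computation; in $H^{s-3}$ the pairing $\langle g'(u)\partial_x v,v\rangle_{H^{s-3}}$ produces, besides the $g''(u)u_x$ term, the commutator term $\langle[\Lambda^{s-3},g'(u)]\partial_x v,\Lambda^{s-3}v\rangle_{L^2}$, whose control is a commutator estimate in a space of (possibly negative) fractional order that you have not stated, and the range condition for quasi-$m$-accretivity in $H^{s-3}$ is not addressed at all. (iii) The Kato--Ponce inequality is an $L^2$-based estimate; the version you wrote, with $[\Lambda^3,g'(u)]$ measured in $H^{s-3}$, is not the standard statement and is precisely what would have to be proved to get Condition \ref{cond2.3} with $S=\Lambda^3$. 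Finally, the claim that continuous dependence in $H^s$ follows ``by applying the same estimates to the difference of two solutions'' is too quick: difference estimates close only in the weaker $X$-norm, and continuity of the data-to-solution map in the $Y$-norm is a separate (and delicate) part of Kato's theorem. In short, the overall strategy coincides with the one behind the cited result, but with your choice of spaces the key hypotheses \ref{cond2.2}--\ref{cond2.4} are not verified and at least \ref{cond2.4} cannot be verified by the tools you list; reverting to Kato's choices $X=L^2$, $Y=H^s$, $S=\Lambda^s$ (and recovering the $C^1$-in-$H^{s-3}$ statement from the equation) repairs the argument.
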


\begin{proof}
See \cite{katoma}, Theorem I.
\end{proof}

This lemma is actually enough to prove our Theorem \ref{teo1.1} for the case $\epsilon=0$. For the remaining part we need a little more effort to prove it. The next results will play a vital role to this end.

\begin{lemma}\label{lema2.7}
Let  $b$ be a constant, $g\in C^{m+3}(\R)$, with $m\geq2$ and $g(0)=0$, and $u\in H^{s}(\R)$, with $s>3/2$. Then the operator 
\bb\label{2.3.2}
A(u)=(b+g(u))\p_x
\ee
satisfies conditions $\ref{cond2.2}$ and $\ref{cond2.4}$.
\end{lemma}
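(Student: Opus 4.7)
Plan for proving Lemma \ref{lema2.7}.

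Following the Kato framework of Subsection \ref{subsec2.2}, I take $X=H^{s-1}(\R)$, $Y=H^{s}(\R)$, $S=\Lambda$ (with $s>3/2$), and let $W\subseteq Y$ be the open ball of radius $R$ centred at the origin. I must verify that $A(u)=(b+g(u))\partial_x$ satisfies Conditions \ref{cond2.2} and \ref{cond2.4} uniformly for $u\in W$.

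I would begin with Condition \ref{cond2.4}, which is the softer one. Split $A(u)v=b\partial_x v+g(u)\partial_x v$. The first summand sends $Y\to X$ continuously by \textbf{F3}. For the second, Lemma \ref{lema2.2} gives
\[
\|g(u)\partial_x v\|_{H^{s-1}}\leq c_s\|g(u)\|_{H^s}\|\partial_x v\|_{H^{s-1}},
\]
and since $g(0)=0$ and $g\in C^{m+3}(\R)$, Lemma \ref{lema2.4} bounds $\|g(u)\|_{H^s}$ in terms of $\|u\|_\infty$ and $\|u\|_{H^s}$, so $Y\subseteq\dom(A(w))$ with a uniform bound on $W$. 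For the Lipschitz estimate, $A(u)-A(v)=(g(u)-g(v))\partial_x$, and I would use the Moser-type product inequality
\[
\|fh\|_{H^{s-1}}\leq C\bigl(\|f\|_{L^\infty}\|h\|_{H^{s-1}}+\|f\|_{H^{s-1}}\|h\|_{L^\infty}\bigr)
\]
together with the embedding $H^{s-1}\hookrightarrow L^\infty$ (valid since $s>3/2$). Applied to $f=g(u)-g(v)$ and $h=\partial_x w$, the mean-value theorem consequence of Lemma \ref{lema2.4} and the chain rule give $\|g(u)-g(v)\|_{H^{s-1}}+\|g(u)-g(v)\|_{L^\infty}\leq M(R)\|u-v\|_{H^{s-1}}$, yielding $\|A(u)-A(v)\|_{{\cal L}(Y;X)}\leq c_3\|u-v\|_X$.

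Next, for Condition \ref{cond2.2} I must show that $-A(u)$ generates a $C_0$ semigroup on $X$ with exponential bound $e^{\beta\tau}$ uniform in $u\in W$. By the Lumer--Phillips/Phillips theorem it suffices to establish, for some $\beta=\beta(R)$, that $A(u)+\beta I$ is $m$-accretive on $X$, i.e.\ (a)~$\mathrm{Re}\,\langle A(u)v,v\rangle_X\geq -\beta\|v\|_X^2$ for $v\in Y$, and (b)~$(\lambda I+A(u))Y=X$ for some $\lambda>\beta$. For (a), writing $\langle A(u)v,v\rangle_X=\langle \Lambda^{s-1}(b+g(u))\partial_x v,\Lambda^{s-1}v\rangle_{L^2}$ and inserting the commutator decomposition
\[
\Lambda^{s-1}\bigl[(b+g(u))\partial_x v\bigr]=(b+g(u))\partial_x\Lambda^{s-1}v+[\Lambda^{s-1},g(u)]\partial_x v,
\]
the main term is controlled by integration by parts, producing $\tfrac12\|\partial_x g(u)\|_{L^\infty}\|v\|_X^2$; the commutator term is estimated by a Kato--Ponce-type inequality $\|[\Lambda^{s-1},f]\partial_x v\|_{L^2}\leq C\|f\|_{H^s}\|v\|_{H^{s-1}}$ with $f=g(u)$, after which Lemma \ref{lema2.4} absorbs $\|g(u)\|_{H^s}$ into a constant depending only on $R$. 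For (b), I would solve $\lambda w+(b+g(u))\partial_x w=h$ on the real line by Friedrichs mollification / viscosity regularisation, passing to the limit using the energy estimate obtained in (a).

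The main obstacle will be the commutator estimate in step (a): it is the ingredient that forces $s>3/2$ (via the embedding $H^{s-1}\hookrightarrow L^\infty$ applied to $\partial_x g(u)$) and which must be handled with care so that the constant $\beta$ depends only on $R$, not on $u$ itself. Everything else is either a direct application of the multiplicative lemmas \ref{lema2.1}--\ref{lema2.4} or a standard semigroup argument once quasi-accretivity is in hand.
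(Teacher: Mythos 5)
Your plan is essentially the argument the paper relies on: the paper omits the proof of Lemma \ref{lema2.7} and defers to Lemmas 3.2--3.4 of the cited work of Liu and Yin, and what you outline (Condition \ref{cond2.4} via the algebra/composition estimates of Lemmas \ref{lema2.2} and \ref{lema2.4}, Condition \ref{cond2.2} via quasi-accretivity of $A(u)$ in $H^{s-1}(\R)$ obtained from the decomposition $\Lambda^{s-1}\bigl[(b+g(u))\p_x v\bigr]=(b+g(u))\p_x\Lambda^{s-1}v+[\Lambda^{s-1},g(u)]\p_x v$, integration by parts, and a Kato--Ponce commutator estimate) is exactly that standard route, with the correct assignment $Y=H^{s}(\R)$, $X=H^{s-1}(\R)$, $S=\Lambda$. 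The only point where you deviate is the range condition: you propose a direct Friedrichs mollification/viscosity argument, whereas the usual treatment (Kato, Liu--Yin, Rodriguez-Blanco) first establishes quasi-$m$-accretivity of $a(x)\p_x$ with $a=b+g(u)\in H^{s}(\R)$ in $L^{2}(\R)$, where surjectivity of $\lambda I+A(u)$ is elementary, and then transfers it to $H^{s-1}(\R)$ by conjugating with $\Lambda^{s-1}$, using precisely the boundedness of the commutator operator that also underlies Lemma \ref{lema2.8}; your variant can be made to work but requires extra care to show the regularised solutions converge in $H^{s-1}(\R)$ with constants depending only on $R$, so the conjugation route is the cleaner way to close that step.
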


\begin{lemma}\label{lema2.8}
Let $b$, $g$ and $A(u)$ as in Lemma \ref{lema2.7}. Then the operator $B(u):=[\Lambda,A(u)]\Lambda^{-1}$, with $u\in H^s(\R)$ and $s>3/2$, satisfies condition $\ref{cond2.3}$.
\end{lemma}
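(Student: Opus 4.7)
With the choices $X=H^{s-1}(\R)$, $Y=H^{s}(\R)$, $S=\Lambda$ dictated by Condition \ref{cond2.1}, the verification of Condition \ref{cond2.3} splits into three items: membership of $B(u)$ in ${\cal L}(X)$, a uniform bound $\|B(u)\|_{{\cal L}(X)}\le c_1$ for $u\in W$, and the Lipschitz estimate $\|B(u)-B(v)\|_{{\cal L}(X)}\le c_2\|u-v\|_{Y}$ on $W$. My first step is to simplify $B(u)$: since $b$ is a constant and $\p_x$ commutes with $\Lambda$, one has $[\Lambda,(b+g(u))\p_x]=[\Lambda,g(u)]\,\p_x$, whence
$$
B(u)=[\Lambda,g(u)]\,\p_x\Lambda^{-1}.
$$
The Fourier multiplier $\p_x\Lambda^{-1}$ has symbol $i\xi(1+\xi^2)^{-1/2}$ of modulus at most $1$, hence $\|\p_x\Lambda^{-1}h\|_{H^{s-1}}\le\|h\|_{H^{s-1}}$ for every $h$; the problem therefore reduces to controlling the commutator $[\Lambda,g(u)]$ on $H^{s-1}(\R)$.

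The technical heart is then the inequality
$$
\|[\Lambda,f]w\|_{H^{s-1}}\le C\,\|f\|_{H^{s}}\|w\|_{H^{s-1}},\qquad s>3/2,
$$
for $f\in H^s(\R)$ and $w\in H^{s-1}(\R)$. I would derive it by combining the algebraic identity $\Lambda^{s-1}[\Lambda,f]w=[\Lambda^s,f]w-[\Lambda^{s-1},f]\Lambda w$ with Kato--Ponce commutator estimates, exploiting that $s>3/2$ places both $f'\in H^{s-1}\hookrightarrow L^\infty$ and $w\in L^\infty$ via Lemma \ref{lema2.3}, and using an off-diagonal Leibniz-type inequality to pair $\Lambda w\in H^{s-2}$ with $f\in H^s$. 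An equivalent route is to observe that $[\Lambda,f]$ is a pseudo-differential operator of order zero whose leading symbol is essentially $f'\cdot i\xi(1+\xi^2)^{-1/2}$, so that modulo lower-order terms $[\Lambda,f]w\simeq f'\cdot(\p_x\Lambda^{-1}w)$; the algebra property (Lemma \ref{lema2.1}), applicable because $s-1>1/2$, then delivers the required bound.

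Granted this commutator estimate, taking $f=g(u)$ and $w=\p_x\Lambda^{-1}h$ yields
$$
\|B(u)h\|_{H^{s-1}}\le C\,\|g(u)\|_{H^s}\|h\|_{H^{s-1}}.
$$
Because $g(0)=0$, Lemma \ref{lema2.4} applies and provides $\|g(u)\|_{H^s}\le\tilde g(\|u\|_\infty)\|u\|_{H^s}\le c_1(R)$ on the ball $W\subset Y$ of radius $R$, which settles items (i) and (ii). For the Lipschitz estimate I would write $B(u)-B(v)=[\Lambda,g(u)-g(v)]\,\p_x\Lambda^{-1}$, re-apply the commutator estimate to obtain $\|B(u)-B(v)\|_{{\cal L}(X)}\le C\,\|g(u)-g(v)\|_{H^s}$, and then invoke the locally Lipschitz estimate recorded immediately after Lemma \ref{lema2.4}, namely $\|g(u)-g(v)\|_{H^s}\le M\|u-v\|_{H^s}$ on $W$, to conclude (iii).

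The main obstacle I anticipate is precisely the commutator estimate at the borderline regularity $s>3/2$: a direct use of the algebra property only gives $\Lambda(g(u)w)\in H^s$, one derivative more than what we may spend, so the required gain must be extracted from the commutator structure alone. Handling this through the decomposition above, or equivalently by a pseudo-differential symbol calculation, is what ultimately recovers the loss of one derivative needed to land back in $H^{s-1}(\R)$.
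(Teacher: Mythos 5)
The paper contains no internal proof of this lemma: it is quoted verbatim from \cite{liu2011} (Lemmas 3.2--3.4), so the only meaningful comparison is with that standard argument, and yours is essentially it. Your reduction $B(u)=[\Lambda,g(u)]\,\partial_x\Lambda^{-1}$ (the constant $b$ drops out and $\partial_x\Lambda^{-1}$ is an order-zero multiplier), the uniform bound via Lemma \ref{lema2.4} using $g(0)=0$, and the Lipschitz step via $\|g(u)-g(v)\|_{H^s}\leq M\|u-v\|_{H^s}$ on $W$ are exactly the steps of the cited proof; you are also right to take $Y=H^{s}\subset X=H^{s-1}$ (the sentence after Condition \ref{cond2.1} swaps the two spaces, evidently a typo, and your choice is the one consistent with Theorem \ref{teo1.2}). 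The only place where your write-up is schematic is also the only real content of the lemma: the commutator bound $\|[\Lambda,f]w\|_{H^{s-1}}\leq C\|f\|_{H^s}\|w\|_{H^{s-1}}$ for $s>3/2$. Your identity $\Lambda^{s-1}[\Lambda,f]w=[\Lambda^s,f]w-[\Lambda^{s-1},f]\Lambda w$ is correct and the first term is a direct Kato--Ponce application, but note that in the second term $\Lambda w$ lies only in $H^{s-2}$, whose index may be negative when $s<2$, so the naive Kato--Ponce inequality (which would require $\|\Lambda w\|_{L^\infty}$) is not applicable there; the ``off-diagonal'' commutator estimate you allude to is genuinely needed, and it is precisely the classical ingredient (a Kato--Ponce-type commutator lemma valid in this range, cf. \cite{kato,KP,taylor}) on which the proof in \cite{liu2011} also rests. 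Citing that estimate closes your argument, so there is no gap in substance --- only an unproved but standard ingredient, which is exactly the status it has in the source the paper defers to.
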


The proofs of lemmas \ref{lema2.7} and \ref{lema2.8} can be found in \cite{liu2011} and, therefore, are omitted here, see Lemmas 3.2, 3.3 and 3.4 in the mentioned paper. In the same reference is proven that the function
\bb\label{2.3.3}
f(u):=\Lambda^{-2}(g(u)u_x)+\Lambda^{-2}\p_x\left(bu-h(u)-\f{g'(u)}{2}u_x^2\right)
\ee
satisfies condition \ref{cond2.5}. Although we confirm the result announced there, the demonstration presented in \cite{liu2011} seems to have a small mistake. Then we present a new demonstration, which follows closely the spirit of \cite{liu2011}, but corrects the problem.

In what follows we will employ several different constants, arising from estimates. To avoid a tedious notation, we shall make use of the following convention: we write $\|u\|_X \lesssim \|v\|_Y$ meaning that $\|u\|_X\leq c\|v\|_Y$, for some constant $c>0$.

\begin{lemma}\label{lema2.9}
Assume that $b$, $g$ and $s$ are as in Lemma \ref{lema2.7}, and $h\in C^{m+3}(\R)$, $m\geq 2$, with $h(0)=0$. Then the function in $\eqref{2.3.3}$ satisfies condition \ref{cond2.5}.
\end{lemma}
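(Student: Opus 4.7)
The plan is to verify both items of Condition~\ref{cond2.5} for the map $f$: boundedness in the $H^s(\R)$-norm on the ball $W$ of radius $R$, and Lipschitz continuity in the $H^{s-1}(\R)$-norm. The tools are the smoothing estimates $\|\Lambda^{-2}\varphi\|_{H^r}\leq\|\varphi\|_{H^{r-2}}$ and $\|\Lambda^{-2}\p_x\varphi\|_{H^r}\leq\|\varphi\|_{H^{r-1}}$ recalled at the end of Subsection~\ref{subsec2.1}, the algebra property in the forms of Lemmas~\ref{lema2.1} and~\ref{lema2.2} (both applicable since $s-1>1/2$), the composition bound of Lemma~\ref{lema2.4} together with the Lipschitz remark that follows it for smooth functions vanishing at zero, and the embedding $H^s\hookrightarrow L^\infty$ from Lemma~\ref{lema2.3} that controls $\|u\|_\infty$ uniformly on $W$.

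For the boundedness of $\|f(u)\|_{H^s}$, I would estimate the three contributions to $f(u)$ separately. The terms $bu$ and $h(u)$ are controlled in $H^{s-1}$ by $|b|R$ and by $\tilde h(R)R$ (Lemma~\ref{lema2.4}, since $h(0)=0$), and the smoothing $\Lambda^{-2}\p_x$ then brings them into $H^s$. The nonlinearity $g(u)u_x$ is handled by writing $\|g(u)u_x\|_{H^{s-2}}\leq c\|g(u)\|_{H^{s-1}}\|u_x\|_{H^{s-2}}$ via Lemma~\ref{lema2.2} and invoking Lemma~\ref{lema2.4} on $g$. The delicate term is $g'(u)u_x^2$: because $g'(0)$ need not vanish, Lemma~\ref{lema2.4} does \emph{not} apply to $g'$, and a naive use of the algebra property would make $\|g'(u)\|_{H^{s-1}}$ appear as a factor, which in general is infinite. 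I would resolve this by the splitting
\[
g'(u)=g'(0)+\bigl(g'(u)-g'(0)\bigr),
\]
treating the constant piece through $\|g'(0)u_x^2\|_{H^{s-1}}\leq|g'(0)|c_{s-1}\|u\|_{H^s}^2$ (Lemma~\ref{lema2.1}), and applying Lemma~\ref{lema2.4} to the $C^{m+2}$ function $g'-g'(0)$, which does vanish at zero. This identification appears to be exactly the point overlooked in \cite{liu2011} and is the main technical step of the lemma.

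For the Lipschitz estimate in the $H^{s-1}$-norm I would telescope each difference,
\[
g(u)u_x-g(v)v_x=g(u)(u_x-v_x)+\bigl(g(u)-g(v)\bigr)v_x,
\]
\[
g'(u)u_x^2-g'(v)v_x^2=g'(u)(u_x+v_x)(u_x-v_x)+\bigl(g'(u)-g'(v)\bigr)v_x^2,
\]
and analogously for $h(u)-h(v)$. The estimate $\|\Lambda^{-2}\p_x\psi\|_{H^{s-1}}\leq\|\psi\|_{H^{s-2}}$ reduces everything to bounding the differences in $H^{s-2}$, which is the right level of regularity: it allows $\|u_x-v_x\|_{H^{s-2}}\leq\|u-v\|_{H^{s-1}}$ with no loss of derivative. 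Applying Lemma~\ref{lema2.2} in the form $\|\varphi\psi\|_{H^{s-2}}\leq c\|\varphi\|_{H^{s-1}}\|\psi\|_{H^{s-2}}$, together with the Lipschitz remark after Lemma~\ref{lema2.4} invoked on $g$ and on $g'-g'(0)$ (both of which vanish at the origin), one obtains a constant depending only on $R$. The main obstacle is, once again, to prevent the nonzero value $g'(0)$ from breaking the estimates: whenever $g'$ enters as a coefficient, I would split it as above into the constant $g'(0)$ (absorbed by Lemma~\ref{lema2.1}) and the vanishing-at-zero correction $g'(u)-g'(0)$ (handled by Lemma~\ref{lema2.4}), and this splitting must be carried consistently through all three difference terms to produce the Lipschitz constant $c_5$.
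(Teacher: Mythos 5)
Your proposal is correct, but it follows a genuinely different route from the paper's proof. The paper's main device is to write the nonlocal term as a perfect derivative, $\Lambda^{-2}(g(u)u_x)=\Lambda^{-2}\p_x\overline{g}(u)$ with $\overline{g}'=g$ and $\overline{g}(0)=0$, so that Lemma \ref{lema2.4} applies to $\overline{g}$; it then proves the Lipschitz estimate at the $H^s$ level, declares the $H^{s-1}$ case analogous, and treats the difficult factor $g'(u)$ only through the pointwise bound $|g'(u)|\leq\kappa:=\sup_{|y|\leq R}|g'(y)|$ together with $\|g'(u)-g'(v)\|_{H^s}\leq M\|u-v\|_{H^s}$, absorbing $g'(u)$ into the implicit constant of the product estimate. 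You instead work directly at the norms demanded by Condition \ref{cond2.5} (boundedness in $H^s$ on $W$, Lipschitz in $H^{s-1}$), estimate $g(u)u_x$ by Lemma \ref{lema2.2} plus Lemma \ref{lema2.4} applied to $g$ itself (legitimate, since $g(0)=0$ in the setting of Lemma \ref{lema2.7}), and, crucially, split $g'(u)=g'(0)+\bigl(g'(u)-g'(0)\bigr)$, handling the constant by Lemma \ref{lema2.1} and the remainder by Lemma \ref{lema2.4}. This splitting is more careful than the paper precisely where the paper is terse: an $L^\infty$ bound on $g'(u)$ does not by itself yield the $H^{s-1}$ product estimate via the stated lemmas (and $g'(u)$ need not lie in any $H^r$ when $g'(0)\neq0$), whereas your decomposition justifies that step with the tools available; it is also, implicitly, what makes the paper's Lipschitz claim for $g'$ legitimate. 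What the paper's antiderivative trick buys is a slightly slicker treatment of the term $\Lambda^{-2}(g(u)u_x)$ at the $H^s$ level (one extra smoothing derivative, reducing its difference estimate to $\|\overline{g}(u)-\overline{g}(v)\|_{H^s}$); what your approach buys is that every estimate is carried out exactly in the norms required by Condition \ref{cond2.5} and that the role of the nonvanishing $g'(0)$ — the very point at issue with respect to \cite{liu2011} — is made fully explicit. Either route establishes the lemma.
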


\begin{proof}
Due to the comments at the end of subsection \ref{subsec2.1}, we can choose $s$ such that if $u\in H^{s}(\R)$, then $f(u)\in H^{s}(\R)$, for a suitable choice of $f$. Let us rewrite $f(u)=f_1(u)+f_2(u)+f_3(u)$, where $f_1(u)=\Lambda^{-2}\p_x\overline{g}(u)$, $f_2(u)=\Lambda^{-2}\p_x(bu-h(u))$ and $f_3(u)=-\Lambda^{-2}\p_x(g'(u)u_x^2/2)$, where $\overline{g}$ is a function such that $\overline{g}'=g$. This condition does not guarantee the existence of a unique function $\overline{g}$, but we can take it unique if we impose the condition $\overline{g}(0)=0$. In particular, this choice makes $\overline{g}$ a function satisfying Lemma \ref{lema2.4}. In addition, since $u\in W$, then $\|u\|_s\leq R$ and $|g'(u)|\leq\sup\{|g'(y)|,\,|y|\leq R\}=: \kappa$. A similar argument also applies to the function $\overline{g}$.

Let us estimate $\|f_i(u)-f_i(v)\|_s$, $i=1, 2, 3$, where $\|\cdot\|_s$ denotes the norm $\|\cdot\|_{H^s}$ for sake of simplicity. We have
$$
\|f_1(u)-f_1(v)\|_s\lesssim\|\overline{g}(u)-\overline{g}(v)\|_s\lesssim\|u-v\|_s.
$$
Also,
$$\|f_2(u)-f_2(v)\|_s\lesssim\|u-v\|_{s-1}+\|h(u)-h(v)\|_{s-1}\lesssim\|u-v\|_{s-1}\lesssim\|u-v\|_{s},$$
and, finally,
$$
\ba{lcl}
\|f_3(u)-f_3(v)\|_s&\lesssim&\|g'(u)(u_x^2-v_x^2)+(g'(u)-g'(v))v_x^2\|_{s-1}\lesssim\|g'(u)\p_x(u+v)\p_x(u-v)\|_{s-1}\\
\\
&&+\|(g'(u)-g'(v))v_x^2\|_{s-1}\lesssim\|\p_x(u+v)\|_{s-1}\|\p_x(u-v)\|_{s-1}+\|v_x^2\|_{s-1}\|u-v\|_{s-1},
\ea
$$
where we used Lemma \ref{lema2.2} and the fact that $\|g'(u)-g'(v)\|_s\leq M\|u-v\|_s$, for some constant $M>0$. Therefore, $\|f_3(u)-f_3(v)\|_s\lesssim\|u-v\|_s$. As a consequence of these facts, we conclude that $\|f(u)-f(v)\|_s\lesssim\|u-v\|_s$. This proves part 3 of condition \ref{cond2.5}.

Now, let $u\in H^{s-1}(\R)$. Similarly to what we have done for $u\in H^{s}(\R)$, we easily conclude that $f(u)\in H^{s-1}$, which proves part 1 of condition \ref{cond2.5}. Moreover, the proof that $\|f(u)-f(v)\|_{s-1}\lesssim\|u-v\|_{s-1}$ is very similar to the previous one and we omit its demonstration. Noting that $f(0)=0$, the previous inequality, jointly with the fact that $u\in W$, implies part 2 of the condition \ref{cond2.5}.
\end{proof}

\subsection{Proof of Theorem \ref{teo1.2} and Corollary \ref{cor1.1}}\label{subsec2.4}

Let $b:=g(0)$, $G(u):=g(u)-b$ and $m:=u-u_{xx}$. Then $G$ satisfies the conditions in Lemma \ref{lema2.4}.

We note that
$$
\ba{lcl}
\ds{m_t+\p_x h(u)-\p_{x}\left(\f{g'(u)}{2}u_x^2+g(u)u_{xx}\right)}&=&\ds{m_t+\p_x h(u)-bu_{xxx}-\p_{x}\left(\f{G'(u)}{2}u_x^2+G(u)u_{xx}\right)}\\
\\
&=&\ds{m_t+\Lambda^2\left(bu_x+G(u)u_x\right)-G(u)u_x}\\
\\
&&\ds{-\p_x\left(bu-h(u)-\f{G'(u)}{2}u_x^2\right).}
\ea
$$

Application of the operator $\Lambda^{-2}$ to both sides of the last equation enable us to consider the problem
\bb\label{2.4.1}
\left\{
\ba{l}
\ds{u_t+(b+G(u))u_x=\Lambda^{-2}\left(G(u)u_x+\p_x\left(bu-h(u)-\f{G'(u)}{2}u_x^2\right)\right)},\\
\\
u(x,0)=u_0(x).
\ea
\right.
\ee

Note that (\ref{2.4.1}) is equivalent to (\ref{1.0.6}) and it is, therefore, enough to prove the well- posedness for \eqref{2.4.1}. We now observe that \eqref{2.4.1} is of the form \eqref{2.2.1}, with
\bb\label{2.4.2}
A:=(b+G(u))\p_x,\quad f(u):=\Lambda^{-2}(G(u)u_x)+\Lambda^{-2}\p_x\left(bu-h(u)-\f{G'(u)}{2}u_x^2\right),
\ee
and $h$ and $G$ satisfy the conditions in Lemma \ref{lema2.4}. This implies that $A$ and $f$ satisfy lemmas \ref{lema2.7}--\ref{lema2.9}. Then, by Lemma \ref{lema2.5} we have granted the existence and uniqueness of solutions to (\ref{2.4.1}), which implies Theorem \ref{teo1.2}.

Corollary \ref{cor1.1} can be demonstrated by replacing $g$ by $g-\Gamma$ and applying Theorem \ref{teo1.2} to the shifted function.

\subsection{Consequences of Theorem \ref{teo1.2} and proof of Theorem \ref{teo1.1}}\label{subsec2.5}

As consequences of Theorem \ref{teo1.2}, we have the following two corollaries:
\begin{corollary}{\bf (Liu, Yin, \cite{liu2011})}\label{cor2.1}
Assume that $h,\,g\in C^{m+3}(\R)$, $m\geq 2$ and $h(0)=g(0)=0$. Given $u_0\in H^{s}(\R)$, $3/2<s<m$, there exists a maximal time $T=T(u_0)>0$, and a unique solution $u$ to $\eqref{1.0.7}$ such that $u=u(\cdot,u_0)\in C^0(H^s(\R);[0,T))\cap C^{1}(H^s(\R),[0,T))$, continuously dependent on the initial data.
\end{corollary}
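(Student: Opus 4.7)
The plan is to observe that the corollary is a direct specialization of Theorem \ref{teo1.2}, obtained by absorbing the linear dispersion $\Gamma u_{xxx}$ into the nonlinear flux. Specifically, I would set $\tilde g(u):=g(u)+\Gamma$. A short computation gives
$$
\partial_x\!\left(\f{\tilde g'(u)}{2}u_x^2+\tilde g(u)\,u_{xx}\right)
=\partial_x\!\left(\f{g'(u)}{2}u_x^2+g(u)\,u_{xx}\right)+\Gamma u_{xxx},
$$
since $\tilde g'=g'$ and $\partial_x(\Gamma u_{xx})=\Gamma u_{xxx}$. Consequently the initial value problem \eqref{1.0.7} with data $(h,g,\Gamma)$ is, as a Cauchy problem, identical to \eqref{1.0.6} with data $(h,\tilde g)$, and it suffices to solve the latter.

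Next I would verify that $(h,\tilde g)$ fits the hypotheses of Theorem \ref{teo1.2}. Since $g\in C^{m+3}(\R)$ and $\tilde g$ differs from $g$ by a constant, $\tilde g\in C^{m+3}(\R)$; the assumption $h(0)=0$ of the corollary is inherited unchanged; and crucially, no condition on $\tilde g(0)$ is required (in our case $\tilde g(0)=\Gamma$ is arbitrary). Theorem \ref{teo1.2} then produces, for each $u_0\in H^{s}(\R)$ with $3/2<s<m$, a maximal time $T=T(u_0)>0$ and a unique solution lying in $C^{0}(H^{s}(\R);[0,T))\cap C^{1}(H^{s-1}(\R);[0,T))$ and depending continuously on $u_0$. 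This is precisely the statement of Corollary \ref{cor2.1}.

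Noticeably, the hypothesis $g(0)=0$ written in the statement is never used in this reduction; it is exactly the restriction we removed when passing from \cite{liu2011} to Theorem \ref{teo1.2}. The corollary is therefore strictly weaker than Corollary \ref{cor1.1}, and there is no genuine obstacle in the argument: the entire content is the one-line shift $g\mapsto g+\Gamma$, which makes transparent the sense in which Theorem \ref{teo1.2} subsumes and generalizes the well-posedness result of Liu and Yin.
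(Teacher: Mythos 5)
Your argument is correct and is essentially the paper's own: the paper likewise treats this corollary as an immediate specialization of Theorem \ref{teo1.2}, with the $\Gamma u_{xxx}$ term absorbed by the constant shift of $g$ (the same device used for Corollary \ref{cor1.1}), the hypothesis $g(0)=0$ being superfluous precisely because Theorem \ref{teo1.2} no longer requires it. Note only that Theorem \ref{teo1.2} yields $u\in C^{0}(H^{s}(\R);[0,T))\cap C^{1}(H^{s-1}(\R);[0,T))$, so the class $C^{1}(H^{s}(\R),[0,T))$ written in the corollary should be read as $C^{1}(H^{s-1}(\R),[0,T))$, as in \cite{liu2011}.
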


We observe that in Theorem \ref{teo1.2} we removed the limiting condition $g(0)=0$, which makes an improvement in the results in \cite{liu2011}. Another consequence of Theorem \ref{teo1.2} is (below $W$ is a ball in $H^{s-1}$):
\begin{corollary}\label{cor2.2}{\bf (Mustafa, \cite{mustafa})}
Assume that $u_0 \in W$. Then the problem $\eqref{1.0.7}$, with $\Gamma=0$ and $g(u)=u$ has a unique solution in $C^0(W,[0, T])\cap C^1(H^{s-1}(\R),[0, T]]$ such that $u(x,0) = u_0(x)$ for a certain $T >0$, and the solution depends continuously on the initial data.\end{corollary}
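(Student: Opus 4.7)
The plan is to deduce this statement as an immediate specialization of Theorem \ref{teo1.2} (equivalently, of Corollary \ref{cor1.1} with $\Gamma=0$). Setting $\Gamma=0$ and $g(u)=u$ in \eqref{1.0.7}, the right-hand side becomes
$$\partial_x\left(\frac{1}{2}u_x^2+u\,u_{xx}\right),$$
which coincides exactly with the nonlinearity of \eqref{1.0.6} when $g(u)=u$. Consequently, the Cauchy problem appearing in Mustafa's statement is literally the Cauchy problem already treated in Theorem \ref{teo1.2}, so no new PDE analysis is needed.

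It remains only to check that the hypotheses of Theorem \ref{teo1.2} are fulfilled for this concrete choice. Since $g(u)=u$ is a polynomial, it belongs to $C^{m+3}(\R)$ for every $m\ge 2$; in fact $g\in C^{\infty}(\R)$. Crucially, Theorem \ref{teo1.2} does not demand $g(0)=0$; it requires only $h(0)=0$ together with $h\in C^{m+3}(\R)$, which are part of Mustafa's standing assumptions. This is precisely the point of the refinement carried out in Lemma \ref{lema2.9}, where the restriction $g(0)=0$ present in \cite{liu2011} was removed, so that linear choices such as $g(u)=u$ are now covered without further manipulation.

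Applying Theorem \ref{teo1.2} thus yields, for each $u_0\in H^{s}(\R)$ with $s\in(3/2,m)$, a maximal time $T=T(u_0)>0$ and a unique solution $u\in C^{0}([0,T);H^{s}(\R))\cap C^{1}([0,T);H^{s-1}(\R))$ depending continuously on the initial datum. Because the lower bound on $T$ depends only on $\|u_0\|_{H^{s}}$, if one fixes a ball $W$ of radius $R$ as in Condition \ref{cond2.2}, the existence time $T>0$ can be chosen uniformly for every $u_0\in W$, which, combined with the continuous embedding $H^{s}\hookrightarrow H^{s-1}$, produces the stated membership $u\in C^{0}([0,T];W)\cap C^{1}([0,T];H^{s-1}(\R))$. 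There is no genuine analytical obstacle beyond what was already settled in proving Theorem \ref{teo1.2}; the main (and essentially only) task is bookkeeping to match Mustafa's notation for the ball $W$ with the Kato framework developed in Section \ref{sec2}.
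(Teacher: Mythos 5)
Your proposal is correct and follows exactly the paper's route: the paper states this corollary with no separate argument, presenting it (like Corollary \ref{cor2.1}) as an immediate specialization of Theorem \ref{teo1.2} with $\Gamma=0$ and $g(u)=u$, which is precisely what you do, including the uniform choice of $T$ on the ball $W$ inherited from the Kato framework. One small remark: since $g(u)=u$ already satisfies $g(0)=0$, the removal of that restriction is not actually needed for this particular corollary (it matters for Theorem \ref{teo1.1} and Corollary \ref{cor1.1}), so your ``crucially'' overstates its role here, but this does not affect the validity of the argument.
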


Now we prove Theorem \ref{teo1.1}. To do it, we only need to consider the cases $\epsilon=0$ and $\epsilon\neq0$. The first case is an immediate consequence of Lemma \ref{lema2.6}, as we have already pointed out in the comment after Lemma \ref{lema2.6}. Regarding the case $\epsilon\neq0$, let us take the global diffeomorphism $(x,t,u)\mapsto(x/ \epsilon,t/\epsilon,u)$, which makes possible to rewrite the equation (\ref{1.0.5}) (or \eqref{1.0.4}) as
$$
u_t- u_{txx}=uu_{xxx}+2 u_{x}u_{xx}+(\al-3u+\be u^2+\gamma u^3)u_x+\f{\Gamma}{\epsilon^2} u_{xxx},
$$
which is nothing but \eqref{1.0.7} with $\Gamma$ replaced by $\Gamma/\epsilon^2$ and
$$h(u)=-\alpha u+\f{3}{2}u^2-\f{\beta}{3}u^3-\f{\gamma}{4}u^4\quad\text{and}\quad g(u)=u.$$

Then Theorem \ref{teo1.1} is an immediate consequence of Theorem \ref{teo1.2}.

\section{Conservation laws}\label{sec3}

Let $x$ and $t$ be independent variables and $u=u(x,t)$ be a field variable. A smooth function $P$ depending on $x$, $t$, $u$ and derivatives of $u$ with respect to the independent variables up to a finite, but unspecified, order is called {\it differential function}. We shall denote by $P[u]$ a general differential function and, when its order play some relevance, we simply write $P[u_{(n)}]$, meaning that $P$ is a differential function up to $n-$th order. For further details, see \cite{olverbook}, page 288.

We observe that any $n-$th order differential equation in two independent variables $t$ and $x$ can be generically written as $E[u_{(n)}]=0$.

A conservation law for an equation $E[u_{(n)}]=0$, with two independent variables $(x,t)$ and a field variable $u=u(x,t)$, is a divergence expression
\bb\label{3.0.1}
D_tC^{0}+D_{x}C^1=0\quad\text{mod}\quad E[u_{(n)}]=0.
\ee

The sentence expressed in (\ref{3.0.1}) should be understood as follows: the divergence $D_tC^0+D_xC^1$ does not necessarily need to be $0$ everywhere, but it must vanish on the set (manifold) determined by $E[u_{(n)}]=0$ and on all of its differential consequences.%, that, it must hold on $E[u_{(n)}]=0$, $D_xE[u_{(n)}]=0$, $D_tE[u_{(n)}]=0$, $D_x^2E[u_{(n)}]=0$, $D_tD_xE[u_{(n)}]=0$, etc.

In (\ref{3.0.1}), 
$$
D_t=\f{\p}{\p t}+u_{t}\f{\p}{\p u}+u_{tx}\f{\p}{\p u_x}+u_{tt}\f{\p}{\p u_t}+\cdots,\quad\text{and}\quad
D_x=\f{\p}{\p x}+u_{x}\f{\p}{\p u}+u_{xx}\f{\p}{\p u_x}+u_{xt}\f{\p}{\p u_t}+\cdots
$$
are the total derivative operators with respect to $t$ and $x$, respectively. For further details, see \cite{olverbook}, chapter 5.

The pair $(C^0,C^1)$ satisfying (\ref{3.0.1}) is called {\it conserved current} (of the equation $E[u_{(n)}]=0$) and both of them are differential functions. We are in position to make three important observations.

\begin{remark}\label{rem3.1}
While the operator $d/dt$ is a vector field (from a geometrical point of view), the operator $D_t$ above is a contact distribution on the $r-$jet (for some $r$) $J^r(\R,n)$. Roughly speaking, the final result obtained after applying both operators is the same. However, they are conceptually different. For further details, see \cite{manno}.

\end{remark}

\begin{remark}\label{rem3.2}
Under the very mild hypothesis that $C^0$ and $C^1$ are continuous, we can integrate equation $(\ref{3.0.1})$ over a domain $\Omega=(a,b)\subseteq\R$ and interchange the total derivation and the integral over the domain, which gives
\bb\label{3.0.2}
\f{d}{d t}\int_{\Omega}C^0\,dx=-\left.C^1\right|_{a}^b.
\ee
Here we indeed allow $a=-\infty$ and $b=\infty$. In $(\ref{3.0.2})$, $C^1$ is the flux across the boundary and $C^0$ is the conserved density. Whenever the condition $\left.C^1\right|_{a}^b=0$ is satisfied to a certain solution $u$ of the equation $E[u]=0$, we conclude that the functional 
$$u\rightarrow J[u]:=\int_{\Omega}C^0\,dx$$
is independent of $t$.
\end{remark}

\begin{remark}\label{rem3.3}
Condition $(\ref{3.0.1})$ can be rewritten as
$$
D_tC^{0}+D_{x}C^1=Q[u]E[u].
$$
Above, the differential function $Q$ is called {\it characteristic of the conservation law}, see \cite{olverbook}, chapter 5.

Let 
$$E_u=\f{\p}{\p u}-D_t\f{\p}{\p u_t}-D_x\f{\p}{\p u_x}+\cdots$$ be the Euler-Lagrange operator. From Theorem 4.7 of \cite{olverbook}, we know that $E_u(L)=0$ if and only if there exist differential functions $P^0$ and $P^1$ such that $L=D_t P^0+D_x P^1$. Therefore, this result and the above equation read
$$
E_u(Q[u]E[u])=0.
$$
Moreover, we observe that if $Q_1[u]$ and $Q_2[u]$ are characteristics of conservation laws of $E[u]=0$, then $Q=\al Q_1[u]+\be Q_2[u]$, for any scalars $\al$ and $\be$, is also a characteristic of a certain conservation law of the same equation. This can be seen from the following fact: the linear combination of any conserved current of a given equation is still a conserved current of the same equation, see, {\it e.g.}, \cite{popPLA}.
\end{remark}

\begin{theorem}\label{teo3.1}\label{teo3.2}
 Let $m=u-\epsilon^2 u_{xx}$, $(C^0,C^1)$ a conserved current for equation $(\ref{1.0.4})$, with $\epsilon\neq0$, and $Q[u]$ a  characteristic up to second order. Then $Q=c_1+c_2u$, where $c_1$ and $c_2$ are arbitrary constants, for any values of the constants in $(\ref{1.0.4})$. In the particular case where $\be=\gamma=0$ and $\Gamma=-\al\epsilon^2$, we have a third characteristic given by $Q[u]=m^{-1/2}$.

Furthermore, the components $C^0$ and $C^1$ corresponding to the characteristics are:
\begin{enumerate}
    \item For the characteristic $Q=1$ we have the components
    $$
    C^0=u\quad\text{and}\quad
     C^1=\f{3}{2}u^2-\epsilon^2u_{tx}-\epsilon^2uu_{xx}-\f{\epsilon^2}{2}u_x^2-\al u-\f{\be}{3}u^3-\f{\gamma}{4}u^4-\Gamma u_{xx}.
     $$
      
    \item For the characteristic $Q=u$ we have the components
    
    $$C^0=\f{u^2+\epsilon^2 u_{x}^2}{2} \quad\text{and}\quad
    C^1=u^3-\epsilon^2 u^2u_{xx}-\epsilon^2uu_{tx}+\Gamma \f{u_x^2}{2}-\Gamma uu_{xx}-\al\f{u^2}{2}-\be\f{u^4}{4}-\f{\gamma}{5}u^5.$$
    
    \item For the characteristic $Q=\frac{1}{2}(u-\epsilon^2 u_{xx})^{-1/2}$ we have the components
    $$C^0=\sqrt{m}\quad\text{and}\quad  C^1=(u-\al)\sqrt{m},$$
    where $m=u-\epsilon^2 u_{xx}$, $\beta=\gamma=0$ and $\Gamma = -\alpha\epsilon^2$.
\end{enumerate}
\end{theorem}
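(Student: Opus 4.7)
Our plan is to solve the determining equation $E_u\bigl(Q[u]\,E[u]\bigr)=0$, where $E[u]:=m_t+um_x+2u_xm-\al u_x-\be u^2u_x-\gamma u^3u_x-\Gamma u_{xxx}$ is the left-hand side of $\eqref{1.0.4}$ moved to one side, and $E_u$ is the Euler--Lagrange operator from Remark $\ref{rem3.3}$. Since conservation laws are taken modulo $E[u]$ and its differential consequences, and since $\eqref{1.0.4}$ can be solved for $u_t$ after applying $\Lambda^{-2}$, we eliminate every $t$-derivative occurring in $Q[u_{(2)}]$; thus we reduce, without loss of generality, to $Q=Q(t,x,u,u_x,u_{xx})$.

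I would then expand $E_u(QE)$ as a polynomial in the independent jet coordinates of order at most five (including $u_{xxx}$, $u_{txx}$, $u_{xxxx}$, $u_{txxx}$ and $u_{xxxxx}$) and equate every coefficient to zero. A top-order analysis (from the fifth-order coefficients) forces $Q_{u_{xx}u_{xx}}=0$, so $Q=A(t,x,u,u_x)+B(t,x,u,u_x)u_{xx}$; the fourth-order coefficients then kill the $u_x$-dependence of $A$ and $B$, leaving $Q=A(t,x,u)+B(t,x,u)u_{xx}$. The lower-order coefficients produce a linear overdetermined system in $A$ and $B$ whose structure is polynomial in $u$, with the parameters $\al,\be,\gamma,\Gamma,\epsilon$ feeding into the coefficients. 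Generically, this system forces $B\equiv 0$, $A_t=A_x=0$ and $A_{uu}=0$, giving $Q=c_1+c_2u$. The only way a non-zero $B$ can be consistent is that the coefficients multiplying $\be$ and $\gamma$ vanish separately (hence $\be=\gamma=0$) together with a final compatibility between $B$ and the $\al,\Gamma$-terms which reduces to $\Gamma=-\al\epsilon^2$; a direct back-substitution in that case yields $B$ and $A$ proportional to $\epsilon^2 m^{-3/2}$ and $m^{-1/2}$, respectively, so $Q$ is a scalar multiple of $m^{-1/2}$.

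Having identified the three characteristics, the pairs $(C^0,C^1)$ are obtained by integrating $D_tC^0+D_xC^1=Q\,E[u]$. For $Q=1$, the identity $(u-\epsilon^2u_{xx})_t=D_x(\cdots)$ reads directly from $\eqref{1.0.4}$ after rewriting $-\epsilon^2u_{txx}=-D_x(\epsilon^2u_{tx})$, and a term-by-term antiderivative of the remaining right-hand side produces the announced $C^1$. For $Q=u$ one multiplies $\eqref{1.0.4}$ by $u$ and integrates by parts: the piece $-\epsilon^2uu_{txx}$ contributes the kinetic density $\epsilon^2u_x^2/2$ in $C^0$, while $\epsilon^2u^2u_{xxx}$ and $2\epsilon^2uu_xu_{xx}$ combine into the flux $-\epsilon^2u^2u_{xx}$. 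For $Q=\tfrac12 m^{-1/2}$ under the restriction $\be=\gamma=0,\,\Gamma=-\al\epsilon^2$, one checks by the chain rule that $D_t\sqrt{m}=m_t/(2\sqrt{m})$ and then verifies directly that $\tfrac12 m^{-1/2}\,E[u]=D_t\sqrt{m}+D_x\bigl((u-\al)\sqrt{m}\bigr)$, delivering the third conserved current. The main obstacle is the second step: the determining system contains many simultaneously vanishing polynomial coefficients, and isolating which algebraic conditions on $(\al,\be,\gamma,\Gamma)$ permit a characteristic beyond the affine family $c_1+c_2u$ requires meticulous order-by-order bookkeeping to avoid overlooking the single exceptional branch $\be=\gamma=0$, $\Gamma=-\al\epsilon^2$.
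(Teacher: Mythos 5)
Your overall strategy coincides with the paper's: impose the determining equation $E_u\bigl(Q[u_{(2)}]\,E[u]\bigr)=0$, solve it for $Q$, and then obtain each pair $(C^0,C^1)$ by multiplying the equation by the characteristic and rearranging into divergence form. Your closing verification that $\tfrac12 m^{-1/2}E[u]=D_t\sqrt{m}+D_x\bigl((u-\alpha)\sqrt{m}\bigr)$ when $\beta=\gamma=0$ and $\Gamma=-\alpha\epsilon^2$ (using $-\alpha u_x+\alpha\epsilon^2u_{xxx}=-\alpha m_x$) is correct and is exactly the paper's ``multiply by the characteristic and manipulate'' step, as are the computations for $Q=1$ and $Q=u$.

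The genuine gap is in the classification (completeness) part, which is the real content of the theorem. You claim that the fifth-order coefficients force $Q_{u_{xx}u_{xx}}=0$ and that the fourth-order ones then give $Q=A(t,x,u)+B(t,x,u)\,u_{xx}$. This cannot be the correct outcome of the top-order analysis: the exceptional characteristic $m^{-1/2}=(u-\epsilon^2u_{xx})^{-1/2}$, which you yourself verify is a characteristic on the branch $\beta=\gamma=0$, $\Gamma=-\alpha\epsilon^2$, is not affine in $u_{xx}$, so an unconditional conclusion $Q_{u_{xx}u_{xx}}=0$ would exclude precisely the solution the theorem asserts. Indeed your own back-substitution result --- $A$ proportional to $m^{-1/2}$ and $B$ proportional to $\epsilon^2m^{-3/2}$ --- contradicts the ansatz $A=A(t,x,u)$, $B=B(t,x,u)$ you had just imposed, since those functions depend on $u_{xx}$. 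The correct analysis must leave room for nonlinear dependence on $u_{xx}$ (for instance dependence on $u$ and $u_{xx}$ only through $m=u-\epsilon^2u_{xx}$), and only then does the order-by-order bookkeeping isolate the affine family $c_1+c_2u$ generically together with the single exceptional branch. A secondary caveat: reducing $Q$ to a function of $(t,x,u,u_x,u_{xx})$ by ``solving for $u_t$'' is not immediate for $\epsilon\neq0$, since the equation is locally solvable for $u_{txx}$ rather than $u_t$ unless one invokes the nonlocal operator $\Lambda^{-2}$, which takes you outside the polynomial jet-space framework in which the coefficient comparison is performed.
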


\begin{proof}
Let $\Delta=m_t+u\,m_x+2u_xm-\al u_x-\be u^2u_x-\gamma u^3u_x-\Gamma u_{xxx}$ and $Q=Q[u_{(2)}]$. From Remark \ref{rem3.3} and the condition
$$E_u(Q[u_{(2)}]\Delta)=0$$
we obtain a system of differential equations to $Q$, whose solution is $Q=c_1+c_2u$, under no restrictions on the parameters, and $Q=(u-\epsilon^2u_{xx})^{-1/2}$ provided that $\be=\gamma=0$ and $\Gamma=-\al\epsilon^2$.

For the explicit form of the components $C^0$ and $C^1$, it is enough to multiply $\Delta$ by the respective characteristics and manipulate the resulting expression.
\end{proof}

%\begin{theorem}\label{teo3.2}
%The conserved currents $C=(C^0,C^1)$ corresponding to the characteristics are:
%\begin{enumerate}
%    \item For the characteristic $Q=1$ we have the components
%    $$
%    C^0=u\quad\text{and}\quad
%     C^1=\f{3}{2}u^2-\epsilon^2u_{tx}-\epsilon^2uu_{xx}-\f{\epsilon^2}{2}u_x^2-\al u-\f{\be}{3}u^3-\f{\gamma}{4}u^4-\Gamma u_{xx}.
%     $$
      
%    \item For the characteristic $Q=u$ we have the components
%    
%    $$C^0=\f{u^2+\epsilon^2 u_{x}^2}{2} \quad\text{and}\quad
%    C^1=u^3-\epsilon^2 u^2u_{xx}-\epsilon^2uu_{tx}+\Gamma \f{u_x^2}{2}-\Gamma uu_{xx}-\al\f{u^2}{2}-\be\f{u^4}{4}-\f{\gamma}{5}u^5.$$
    
%    \item For the characteristic ${\color{blue}Q=\frac{1}{2}(u-\epsilon^2 u_{xx})^{-1/2}}$ we have the components
%    $$C^0=\sqrt{m}\quad\text{and}\quad  C^1=(u-\al)\sqrt{m},$$
%    where $m=u-\epsilon^2 u_{xx}$, {\color{blue}$\beta=\gamma=0$ and $\Gamma = -\alpha\epsilon^2$.}
%\end{enumerate}
%\end{theorem}

\begin{remark}\label{rem3.4}
We note that the last conserved current is {\it formal}. However, it is {\it truly} a conserved current whenever $m$ is non-negative/non-positive, eventually replacing $m$ by $-m$ in case $m\leq0$. The same observation is also applied to the characteristic $m^{-1/2}$ in Theorem $\ref{teo3.1}$.
\end{remark}

\begin{remark}\label{rem3.5}
The fact that $m$ is non-negative/non-positive is of importance to prove global properties of the solutions of the CH equation, {\it e.g}, see Theorem 7 of \cite{escher}, or Theorem 4.1 of \cite{blanco}. We will also explore similar facts in Section \ref{sec4}.
\end{remark}

\begin{remark}\label{rem3.6}
The classification of conservation laws for the case $\epsilon=0$ is richer than the case considered in Theorem $\ref{teo3.1}$. Actually, if $\epsilon=\be=\gamma=0$, we have the KdV equation, which is known to have an infinite hierarchy of conservation laws \cite{miura}.  Moreover, even the case $\epsilon\neq0$ is very rich, since we then have the CH equation when $\al=\be=\gamma=\Gamma=0$, which has an infinite hierarchy of conservation laws, see \cite{chprl} and \cite{lenjpa}.
\end{remark}

\section{Properties of solutions derived from the conservation laws}\label{sec4}

Given a function $u=u(x,t)$, it is sometimes natural to ask whether the function $u(\cdot,t)$ belongs to $\in H^1(\R)$, for $t\in[0,\infty)$ fixed. Moreover, in case this fact holds to each value of $t$, we say that $u\in H^{1}(\R)$.

\begin{theorem}\label{teo4.1}
Let $u$ be a solution of $(\ref{1.0.4})$, with $\epsilon\neq0$, satisfying $u(x,0)=u_{0}(x),\,u_0\in H^1(\R)$, and such that $u(x,t),\,u_{x}(x,t)\rightarrow0$ as $x\rightarrow\pm\infty$ and whose second derivatives are bounded on the entire real line, for any $t\in[0,\infty)$. Then $u\in H^1(\R)$.
\end{theorem}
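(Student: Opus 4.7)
The plan is to exploit the conserved current associated with the characteristic $Q=u$ from Theorem \ref{teo3.1}, namely
$$
C^{0}=\frac{u^{2}+\epsilon^{2}u_{x}^{2}}{2},\qquad
C^{1}=u^{3}-\epsilon^{2}u^{2}u_{xx}-\epsilon^{2}uu_{tx}+\Gamma\frac{u_x^{2}}{2}-\Gamma uu_{xx}-\alpha\frac{u^{2}}{2}-\beta\frac{u^{4}}{4}-\frac{\gamma}{5}u^{5}.
$$
First I would invoke Remark \ref{rem3.2}: integrating the local identity $D_t C^0+D_x C^1=0$ over $\mathbb{R}$ and interchanging $d/dt$ with the spatial integral yields
$$
\frac{d}{dt}\int_{\mathbb R}\frac{u^{2}+\epsilon^{2}u_{x}^{2}}{2}\,dx
=-\lim_{x\to+\infty}C^{1}+\lim_{x\to-\infty}C^{1}.
$$
The interchange is legitimate because, by the regularity supplied by Theorem \ref{teo1.1}, $u(\cdot,t)\in H^{s}(\mathbb R)$ for some $s>3/2$ on the existence interval, so the time derivative of $C^{0}$ is locally integrable in $x$ uniformly on compacta in $t$.

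The next step is to verify that $C^{1}(x,t)\to 0$ as $x\to\pm\infty$ for each fixed $t\geq 0$. Using the hypothesis that $u(x,t),\,u_x(x,t)\to 0$ at spatial infinity and that every second order derivative of $u$ (including the mixed derivative $u_{tx}$) is bounded on $\mathbb R$, each monomial is handled as follows: the purely algebraic contributions $u^{3},u^{4},u^{5},u^{2},u_{x}^{2}$ vanish directly; the terms $u^{2}u_{xx}$, $uu_{xx}$ and $uu_{tx}$ are products of a factor going to zero by a uniformly bounded factor, so they also vanish in the limit. Consequently the boundary contributions on the right-hand side cancel and the quantity
$$
E(t):=\int_{\mathbb R}\bigl(u^{2}+\epsilon^{2}u_{x}^{2}\bigr)\,dx
$$
is constant in $t$.

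Finally, since $u_{0}\in H^{1}(\mathbb R)$, we have $E(0)=\|u_{0}\|_{L^{2}}^{2}+\epsilon^{2}\|(u_{0})_{x}\|_{L^{2}}^{2}<\infty$, and because $\epsilon\neq 0$ the identity $E(t)=E(0)$ gives, uniformly in $t\in[0,\infty)$,
$$
\|u(\cdot,t)\|_{H^{1}}^{2}\leq\max\bigl(1,\epsilon^{-2}\bigr)E(0)<\infty,
$$
which is exactly the claim $u\in H^{1}(\mathbb R)$. The main technical obstacle is the decay of the single term $uu_{tx}$ in $C^{1}$: one must be sure that the hypothesis ``second derivatives are bounded'' is understood to include the mixed derivative $u_{tx}$; otherwise, one would have to recover its boundedness from the equation \eqref{1.0.4} by solving for $u_{tx}=\partial_{x}\Lambda^{-2}(\cdots)$ and invoking the $L^{\infty}$ bound furnished by the Sobolev embedding of Lemma \ref{lema2.3}.
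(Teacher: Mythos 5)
Your proposal is correct and follows essentially the same route as the paper: integrate the conservation law coming from the characteristic $Q=u$ over $\R$, use the decay of $u,u_x$ together with boundedness of the second derivatives to kill the flux $C^1$ at $\pm\infty$, and conclude that $\int_\R(u^2+\epsilon^2u_x^2)\,dx$ is conserved and equivalent to the $H^1$ norm since $\epsilon\neq0$. Your extra care about the $uu_{tx}$ boundary term and the interchange of $d/dt$ with the integral only makes explicit what the paper leaves implicit.
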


Before proving Theorem \ref{teo4.1}, the following observation is necessary: if $\epsilon\neq0$, then
$$\int_{\R}\left(u^2(x,t) + \epsilon^2 u_x^2(x,t)\right)dx = \f{1}{|\epsilon|}\int_{\R}\left(u^2(x/\epsilon,t) + u_x^2(x/\epsilon,t)\right)dx.$$
It means that we can identify the space of functions $u$ such that $\int_{\R}\left(u^2(x,t) + \epsilon^2 u_x^2(x,t)\right)dx<\infty$ with $H^1(\R)$. Moreover, we have indeed
$$\min\{1,\epsilon^2\}\int_{\R}\left(u^2(x,t) + u_x^2(x,t)\right)dx\leq \int_{\R}\left(u^2(x,t) + \epsilon^2 u_x^2(x,t)\right)dx \leq \max\{1,\epsilon^2\} \int_{\R}\left(u^2(x,t) + u_x^2(x,t)\right)dx,$$ and we can now proceed with the proof of Theorem \ref{teo4.1}.

\begin{proof}
Let us first define
$$
J[u]=\f{1}{2}\int_{\R}(u^2+\epsilon^2u_{x}^2)dx\quad\text{and}\quad J[u_0]=\f{1}{2}\int_{\R}(u^2_0+\epsilon^ 2 u_{0x}^2)dx,\quad u_{0x}:=u_x(x,0)=u_{0}'(x).
$$
We observe that if we prove that $J[u]$ is constant, then we automatically show that $u(\cdot,t)\in H^1(\R)$, for any $t$, since $\|u\|^2_{H^{1}}=2 J[u]$.

Integrating the conservation law obtained from the characteristic $Q=u$, we have
$$
\ba{lcl}
\ds{\f{d}{dt}J[u]}&=&\ds{\int_{\R}D_t\left(\f{u^2+\epsilon^2 u_{x}^2}{2}\right)}\\
\\
&=&-
\ds{\left.\left(u^3-\epsilon^2 u^2u_{xx}-\epsilon^2uu_{tx}+\Gamma \f{u_x^2}{2}-\Gamma uu_{xx}-\al\f{u^2}{2}-\be\f{u^4}{4}-\f{\gamma}{5}u^5\right)\right|_{-\infty}^{+\infty}=0}.
\ea
$$

This implies that $J[u]=c$ and, at $t=0$, we have $c=J[u_0]$, meaning that $J[u]=J[u_0]$, for all $t$.
\end{proof}

\begin{theorem}\label{teo4.2}
Let u=u(x,t) be a solution of $(\ref{1.0.4})$, $u_0(x):=u(x,0)$, such that $u(\cdot,t),\,u_x(\cdot,t)$ and $u_{xx}(\cdot,t)$ are integrable and vanishing at $\pm\infty$, for all $t\in[0,\infty)$. Let $m=u-\epsilon^2 u_{xx}$ and $m_0:=u_0-\epsilon^2 u_0''$. Then
$$\int_\R m\,dx=\int_\R m_0\,dx.$$
\end{theorem}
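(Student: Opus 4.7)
The plan is to deduce the statement as a direct corollary of the conservation law obtained from the characteristic $Q=1$ in Theorem~\ref{teo3.1}. Concretely, that conservation law reads
$$
D_t(u) + D_x\!\left(\tfrac{3}{2}u^2 - \epsilon^2 u_{tx} - \epsilon^2 u u_{xx} - \tfrac{\epsilon^2}{2}u_x^2 - \alpha u - \tfrac{\beta}{3}u^3 - \tfrac{\gamma}{4}u^4 - \Gamma u_{xx}\right) = 0
$$
on solutions of \eqref{1.0.4}. First I would invoke Remark~\ref{rem3.2}: under the standing assumption that $u(\cdot,t)$, $u_x(\cdot,t)$ and $u_{xx}(\cdot,t)$ are integrable and vanish at $\pm\infty$, each term in $C^1$ tends to $0$ as $x\to\pm\infty$ (the term $\epsilon^2 u_{tx}$ requires a small comment; see below). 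Hence integrating in $x$ and interchanging $d/dt$ with the integral gives $\frac{d}{dt}\int_\R u\,dx = 0$, so $\int_\R u(x,t)\,dx = \int_\R u_0(x)\,dx$ for all $t \ge 0$.

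Next I would relate $\int m\,dx$ to $\int u\,dx$. Since $m = u - \epsilon^2 u_{xx}$ and $u_x(\cdot,t) \to 0$ at $\pm\infty$, the fundamental theorem of calculus gives
$$
\int_\R m(x,t)\,dx \;=\; \int_\R u(x,t)\,dx \;-\; \epsilon^2\,[u_x(x,t)]_{x=-\infty}^{x=+\infty} \;=\; \int_\R u(x,t)\,dx,
$$
and the same identity at $t=0$ yields $\int_\R m_0\,dx = \int_\R u_0\,dx$. Combining with the previous step immediately gives $\int_\R m\,dx = \int_\R m_0\,dx$.

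The only subtle point is the flux term $\epsilon^2 u_{tx}$, because the hypotheses do not directly tell us that $u_{tx}$ vanishes at $\pm\infty$. To avoid that issue I would rewrite the conservation law in an equivalent form without a mixed derivative in the flux. Starting from \eqref{1.0.3} one can write $u\,m_x + 2u_x m = (um)_x + u_x m$ and $u_x m = (u^2/2)_x - \epsilon^2(u_x^2/2)_x$, which converts \eqref{1.0.3} into the purely spatial divergence form
$$
m_t + \partial_x\!\left(um + \tfrac{1}{2}u^2 - \tfrac{\epsilon^2}{2}u_x^2 - \alpha u - \tfrac{\beta}{3}u^3 - \tfrac{\gamma}{4}u^4 - \Gamma u_{xx}\right) = 0.
$$
Integrating this identity over $\R$ and using the decay and integrability of $u$, $u_x$, $u_{xx}$ at $\pm\infty$ gives $\frac{d}{dt}\int_\R m\,dx = 0$ directly, hence the claim. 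This second route is cleaner and is the one I would present; the main (minor) obstacle is simply verifying that every boundary term in the flux is handled by the prescribed vanishing hypotheses, which the rewriting above makes transparent.
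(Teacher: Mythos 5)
Your proposal is correct and follows essentially the paper's own route: the paper takes the $Q=1$ conservation law and transfers the $\epsilon^2 u_{tx}$ term into the time derivative, yielding exactly the divergence form $D_t m + D_x\widetilde{C}=0$ that you derive directly from \eqref{1.0.3} (your flux $um+\tfrac12 u^2-\tfrac{\epsilon^2}{2}u_x^2-\alpha u-\tfrac{\beta}{3}u^3-\tfrac{\gamma}{4}u^4-\Gamma u_{xx}$ coincides with the paper's $\widetilde{C}$), after which both arguments integrate in $x$ and use the decay hypotheses. Your caution about the mixed derivative $u_{tx}$ and the resulting rewriting is precisely the manipulation the paper performs, so the two proofs are the same in substance.
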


\begin{proof}
Let us consider the first conserved current in Theorem \ref{teo3.2}. We have
\bb\label{3.1.1}
\ba{lcl}
0&=&\ds{ D_t(u)+D_x\left(\f{3}{2}u^2-\epsilon^2u_{tx}-\epsilon^2uu_{xx}-\f{\epsilon^2}{2}u_x^2-\al u-\f{\be}{3}u^3-\f{\gamma}{4}u^4-\Gamma u_{xx}\right)}\\
\\
&=&\ds{D_t\left(u -\epsilon^2 u_{xx}\right)+D_x\left(\f{3}{2}u^2-\epsilon^2uu_{xx}-\f{\epsilon^2}{2}u_x^2-\al u-\f{\be}{3}u^3-\f{\gamma}{4}u^4-\Gamma u_{xx}\right)}\\
\\
&=:&\ds{D_tm+D_x\tilde{C}},
\ea
\ee
after transferring the term $\epsilon^2u_{tx}$ from the second part of the first equation to the first part of the second equation above. We observe that $\tilde{C}\rightarrow0$ as $|x|\rightarrow\infty$ with our assumptions.

Define
$$A(t):=\int_{\R}m(x,t)\,dx.$$
Then
$$A(0)=\int_{\R}m(x,0)\,dx=\int_{\R}m_0\,dx$$
and
$$
\f{d A}{dt}=\int_{\R}D_tm\,dx=\left.\tilde{C}\right|_{-\infty}^{+\infty}=0,
$$
which means that $A(t)=A(0)$, for all $t$.
\end{proof}

We observe that both theorems \ref{teo4.1} and \ref{teo4.2} are valid for $\epsilon=0$. In the case of Theorem \ref{teo4.2}, $H^1(\R)$ is replaced by $L^2(\R)$. More interestingly, although these results were already known for the CH equation, they were proven previously using slightly different methodologies and conditions, 
% For the CH equation these theorem are well known, {\it e.g}, see \cite{const1998-1,escher}. However, in these works usually the theorems are proved using a slightly different way (and conditions),
after proving the existence of solutions to a initial problem involving (\ref{1.0.4}) and using that $m_0$ does not change its sign. Thus, in that case one can use the monotone convergence theorem, and Lebesgue's dominated convergence theorem. See, for instance, \cite{const1998-1}.

\begin{corollary}\label{cor4.1}
Under the conditions of Theorem $\ref{teo4.2}$, 
$$\int_\R u\,dx=\int_{\R}m\,dx.$$
\end{corollary}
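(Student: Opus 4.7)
The plan is to reduce the claimed identity to showing that $\int_\R u_{xx}\,dx = 0$, which will follow immediately from the hypotheses on the decay of $u_x$ at $\pm\infty$. Since $m = u - \epsilon^2 u_{xx}$ by definition, we formally have
$$\int_\R m\,dx = \int_\R u\,dx - \epsilon^2 \int_\R u_{xx}\,dx,$$
so the corollary is equivalent to the vanishing of $\int_\R u_{xx}\,dx$.

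The hypotheses of Theorem \ref{teo4.2} provide exactly what is needed: $u_x(\cdot,t)$ is assumed to be integrable on $\R$ and to vanish at $\pm\infty$ for each $t\in[0,\infty)$. Therefore, by the fundamental theorem of calculus,
$$\int_\R u_{xx}(x,t)\,dx = \lim_{R\to\infty}\bigl[u_x(R,t) - u_x(-R,t)\bigr] = 0.$$
Substituting this into the previous display yields the claim. The argument is also trivially valid in the case $\epsilon = 0$, where $m$ and $u$ coincide.

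There is essentially no obstacle here; the only subtlety worth mentioning is that one should observe that the hypothesis of Theorem \ref{teo4.2} (the vanishing of $u_x$ at infinity together with integrability of $u_{xx}$) is precisely what legitimises the boundary-term calculation, so the corollary is really just a bookkeeping statement extracted from that theorem.
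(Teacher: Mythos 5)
Your argument is correct, and it is in fact more economical than the one in the paper. You observe that, at each fixed time $t$, the identity follows from $\int_\R m\,dx=\int_\R u\,dx-\epsilon^2\int_\R u_{xx}\,dx$ together with $\int_\R u_{xx}(x,t)\,dx=u_x(x,t)\big|_{x=-\infty}^{x=+\infty}=0$, which uses only the hypotheses of Theorem \ref{teo4.2} (integrability of $u_{xx}$ and decay of $u_x$), not its conclusion. The paper instead integrates the conservation law \eqref{3.1.1} over $\R$ to get that $\frac{d}{dt}\int_\R u\,dx=0$, and then performs your boundary-term computation only at $t=0$, writing $\int_\R u\,dx=\int_\R u_0\,dx=\int_\R\bigl(m_0+\epsilon^2 u_{xx}(x,0)\bigr)dx=\int_\R m_0\,dx$, finally invoking Theorem \ref{teo4.2} to replace $\int_\R m_0\,dx$ by $\int_\R m\,dx$. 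So the two proofs share the same kernel (the vanishing of $\int_\R u_{xx}\,dx$ via the decay of $u_x$), but your route is pointwise in time and dispenses with the conservation machinery altogether, while the paper's route yields as a by-product the additional information that $\int_\R u\,dx$ is itself a conserved quantity equal to $\int_\R u_0\,dx$. The only small point worth making explicit in your write-up is that the splitting of $\int_\R m\,dx$ into the two separate integrals is legitimate because $u(\cdot,t)$ and $u_{xx}(\cdot,t)$ are each assumed integrable, so no cancellation-at-infinity issue arises.
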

\begin{proof}
Integrating equation (\ref{3.1.1}) over $\R$, we conclude that
$$
\f{d}{dt}\int_\R u\,dx=\f{d}{dt}\int_R m\,dx=0.
$$
Noticing that $u=m+\epsilon^2 u_{xx}$ and remembering that $u$ and its derivatives vanish at infinity, the last equality and Theorem \ref{teo4.2} yields
$$
\int_\R u\,dx=\int_\R u_0(x)\,dx=\int_{\R}(m_0+\epsilon^2 u_{xx}(x,0))\,dx=\int_{\R}m_0\,dx=\int_{\R}m\,dx.
$$
\end{proof}

In what follows, $\sign{x}$ denotes the sign function, that is, $\sign{(x)}=+1$, if $x>0$, and $\sign{(x)}=-1$, if $x<0$.

\begin{corollary}\label{cor4.2}
Let $u$ be a solution of $(\ref{1.0.4})$ in $\R\times[0,T)$, for a certain $T>0$, $u_0(x):=u(x,0)$, $m:=u-\epsilon^2 u_{xx}$, $m_0=u_0-\epsilon^2 u_0''$. Assume that $m_0\in L^1(\R)\cap H^1(\R)$, its sign does not change and $\sign{(m)}=\sign{(m_0)}$, for any $(x,t)$. Then
\begin{enumerate}
    \item $\sign{(u)}=\sign{(u_0)}$ and they do not change;
    \item $-\epsilon^2 u_{x}(x,t)\leq\|m_0\|_{L^1}$, for any $(x,t)\in\R\times[0,T)$.
\end{enumerate}
\end{corollary}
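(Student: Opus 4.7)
The plan is to use the integral representation of $u$ in terms of $m$ via the Green's function of $1-\epsilon^2\partial_x^2$. Writing $u(x,t)=(p_\epsilon\ast m)(x,t)$ with $p_\epsilon(x)=\frac{1}{2|\epsilon|}e^{-|x|/|\epsilon|}$ (the case $\epsilon=1$ is recalled right after \textbf{F4} in Subsection~\ref{subsec2.1}, and the general case follows by rescaling), I note that $p_\epsilon>0$ pointwise. Therefore, if $m_0\geq 0$ (respectively $\leq 0$) everywhere, then the hypothesis $\sign(m)=\sign(m_0)$ immediately yields $u(x,t)\geq 0$ (respectively $\leq 0$) for every $(x,t)$. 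Evaluating the same convolution at $t=0$ shows that $u_0$ inherits the sign of $m_0$ in exactly the same way, so $\sign(u)=\sign(u_0)$, which proves item 1.

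For item 2 I would differentiate the convolution under the integral sign, using $\partial_x p_\epsilon(x)=-\sign(x)\,e^{-|x|/|\epsilon|}/(2\epsilon^2)$, to obtain
$$
-\epsilon^2 u_x(x,t)=\frac{1}{2}\int_{-\infty}^{x}e^{-(x-y)/|\epsilon|}m(y,t)\,dy-\frac{1}{2}\int_{x}^{+\infty}e^{-(y-x)/|\epsilon|}m(y,t)\,dy.
$$
If $m\geq 0$ the second integral is nonnegative and may be discarded from the upper bound, while the exponential in the first integral is at most $1$; if $m\leq 0$ the same reasoning works with the two integrals interchanged. Either way,
$$
-\epsilon^2 u_x(x,t)\leq \tfrac{1}{2}\,\|m(\cdot,t)\|_{L^1}.
$$

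Finally, since $m$ does not change sign, $\|m(\cdot,t)\|_{L^1}=\bigl|\int_{\R}m(y,t)\,dy\bigr|$, and Theorem~\ref{teo4.2} identifies this with $\bigl|\int_{\R}m_0(y)\,dy\bigr|=\|m_0\|_{L^1}$. Hence $-\epsilon^2 u_x(x,t)\leq \tfrac{1}{2}\|m_0\|_{L^1}\leq \|m_0\|_{L^1}$, giving the claimed estimate (in fact a slightly sharper one). The only mildly delicate point I expect is justifying the differentiation under the integral and the splitting at $y=x$; this is routine because $m(\cdot,t)\in L^1(\R)$ by the sign-preservation assumption together with Theorem~\ref{teo4.2}, and both $p_\epsilon$ and its $x$-derivative are bounded and integrable. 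No regularity or decay on $u$ beyond the standing hypotheses is used, so the argument remains within the framework set up in Sections~\ref{sec2}--\ref{sec4}.
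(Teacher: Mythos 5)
Your argument is correct. For item 1 it coincides with the paper's proof: positivity of the kernel in $u=p_\epsilon\ast m$ forces $\sign{(u)}=\sign{(m)}=\sign{(m_0)}=\sign{(u_0)}$. For item 2 you take a genuinely different route. The paper does not use the convolution there; it integrates the relation $m=u-\epsilon^2u_{xx}$ over $(-\infty,x]$, so that for $m_0\geq0$
\begin{equation*}
\|m_0\|_{L^1}=\int_{\R}m\,dy\;\geq\;\int_{-\infty}^{x}m\,dy\;=\;\int_{-\infty}^{x}u\,dy-\epsilon^2u_x(x,t)\;\geq\;-\epsilon^2u_x(x,t),
\end{equation*}
the first equality being Theorem \ref{teo4.2} and the last inequality using $u\geq0$ from item 1 (the case $m_0\leq0$ is handled symmetrically). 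You instead differentiate $u=p_\epsilon\ast m$ under the integral, split at $y=x$, discard the half-line term with the favourable sign and bound the exponential weight by $1$. Both routes rest on the same two ingredients---$m$ keeps the constant sign of $m_0$, and $\int_\R m\,dx=\int_\R m_0\,dx$ from Theorem \ref{teo4.2}, which is also what gives $m(\cdot,t)\in L^1(\R)$ and justifies differentiating under the integral---but your computation buys two things: it does not need the positivity of $u$ established in item 1, and it yields the sharper estimate $-\epsilon^2u_x(x,t)\leq\frac{1}{2}\|m_0\|_{L^1}$, the factor $\frac{1}{2}$ being essentially optimal (it is saturated in the limit of peaked profiles $\lambda e^{-|x|/\epsilon}$, for which $m$ degenerates to a multiple of a Dirac mass). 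The only caveat, shared with the paper's own proof, is that identifying $u$ with $p_\epsilon\ast m$ and invoking Theorem \ref{teo4.2} tacitly uses the decay and integrability hypotheses of that theorem, which the corollary's statement leaves implicit.
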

\begin{proof}
The first part follows from the fact that $u=p\ast m$, where $p(x)=e^{-|x|}/2$. Since $p>0$, then $\sign{(u)}=\sign{(m)}$, for any $(x,t)\in\R\times[0,T)$.

To prove the second part, let us first assume $m_0\geq0$. By Theorem \ref{teo4.2} we have
$$
\|m_0\|_{L^1(\R)}=\int_\R m_0\,dx=\int_{\R}m\,dx\geq\int_{-\infty}^xm\,dx=\left(\int_{-\infty}^x u\right)-\epsilon^2 u_x(x,t). 
$$
Since $u\geq0$ and
$$
0\leq\int_{-\infty}^xu\,dx\leq\int_\R u\,dx<\infty,
$$
we have $\|m_0\|_{L^1(\R)}\geq -\epsilon^2 u_{x}(x,t)$. 

Let us now prove the inequality whenever $m_0\leq0$. In this case, we have $-m\geq 0$ and $-u\geq 0$ as well. Then
$$
-\int_{\R} u dx = -\int_{\R} m dx = -\int_{\R}m_0dx < \infty.
$$
Since
$$0\geq \int_{-\infty}^xmdx = \int_{-\infty}^x(u-\epsilon^2 u_{xx})dx = \int_{-\infty}^x(u-\epsilon^2u_{xx})dx,$$
we have
$$
-\epsilon^2 u_x(x,t)\leq -\int_{-\infty}^x u\,dx\leq -\int_\R m_0\,dx.
$$
As a consequence we have
$$
-\epsilon^2 u_x(x,t)\leq -\int_{-\infty}^x u\,dx\leq \int_\R -u\,dx=\int_\R -m_0\,dx=\|m_0\|_{L^1(\R)}. 
$$
\end{proof}

\section{Classification of bounded travelling wave solutions}\label{sec5}

Here we proceed  with a qualitative analysis of the travelling wave solutions of equation (\ref{1.0.4}). We shall closely follow the ideas developed by Lenells in \cite{len} for the Camassa-Holm equation. Let us first recall that for $\al=\be=\gamma=\Gamma=0$ and $\epsilon=1$, the classification of bounded traveling wave solutions was made in \cite{len}. For $\be=\gamma=0$ corresponding to the Dullin-Gottwald-Holm equation, the analysis was recently considered by the first author in \cite{raspa}. Therefore, in what follows, we will only consider $(\gamma,\beta)\neq (0,0)$.

Our strategy to classify the bounded travelling waves for \eqref{1.0.4} is the following: firstly we use the first conservation law in Theorem \ref{teo3.2} to obtain a quadrature to equation \eqref{1.0.4}. Then we carry out a classification of the waves following the ideas of \cite{len}, see also \cite{raspa}.

\subsection{Quadrature of \eqref{1.0.4}}\label{subsec5.1}
Let us now consider the conservation law obtained from the first conserved current in Theorem \ref{teo4.2}, that is,
\bb\label{5.1.1}
D_t(u)+D_x\left(\f{3}{2}u^2-\epsilon^2u_{tx}-\epsilon^2uu_{xx}-\f{\epsilon^2}{2}u_x^2-\al u-\f{\be}{3}u^3-\f{\gamma}{4}u^4-\Gamma u_{xx}\right)=0.
\ee
Let $z:=x-ct$, $c\neq0$, $u=\phi(z)$. Equation (\ref{5.1.1}), after integration, reads:
\bb\label{5.1.2}
-c\phi+\f{3}{2}\phi^2-\epsilon^2\phi\phi''-\f{\epsilon^2}{2}(\phi')^2+c\epsilon^2\phi''-\al\phi-\f{\be}{3}\phi^3-\f{\gamma}{4}\phi^4-\Gamma\phi''=A,
\ee
where $A$ is a constant of integration. Multiplying the latter equation by $\phi'$ and integrating again, we have:
\bb\label{5.1.3}
(\phi')^2(\epsilon^2(c-\phi)-\Gamma)=B+A\phi+(c+\al)\phi^2-\phi^3+\f{\be}{6}\phi^4+\f{\gamma}{10}\phi^5,
\ee
where $B$ is another constant of integration. Renaming the constants, we can rewrite (\ref{5.1.3}) in the following more convenient form
\bb\label{5.1.4}
(\phi')^2=\f{P(\phi)}{\epsilon^2(c-\phi)-\Gamma},\quad\text{with}\quad P(\phi)=B+2A\phi+(c+\al)\phi^2-\phi^3+\f{\be}{6}\phi^4+\f{\gamma}{10}\phi^5.
\ee

\subsection{Preliminaries and types of waves}\label{subsec5.2}

Before proceeding with the results, we shall consider some aspects of the theory that will be used here. In some parts of the text we follow closely the presentation in \cite{raspa,len,lenjmaa}.% For more information and a deeper discussion, see \cite{len,lenjmaa}.

We begin with notation: $z\uparrow z_0$ means that $z<z_0$ and $z\rightarrow z_0$, whereas $z\downarrow z_0$ means $z\rightarrow z_0$, but $z>z_0$. The space $H^1_{loc}(\R)$ is consisted of functions $u$ which belong to the space $H^1(K)$ for every compact subset $K\subset \R$.

\begin{definition}
A function $\phi(z)\in H^1_{loc}(\R)$, where $z=x-ct$ and $c$ denotes the wave speed of $\phi$, is said to be a traveling wave solution for \eqref{1.0.4} if it satisfies \eqref{5.1.2} in the sense of distributions.
\end{definition}

In \cite{len}, the author classified the existence of traveling wave solutions of the quadrature form
\begin{align}\label{5.2.1}
    (\phi')^2= F(\phi),
\end{align}
where $F$ represents a rational function of $\phi$, based on a qualitative analysis of real zeros and poles of the function $F$. The discussion presented in \cite{len} can be summarized as the following:
\begin{enumerate}
    \item If $F$ has only a simple zero at $z_1$ and $F(\phi)>0$ for $z_1<\phi$, then no bounded traveling wave solutions will exist for $\phi>z_1$.
    \item If $F$ has two simple zeros $z_1$ and $z_2$ and $F(\phi)>0$ for $z_1<\phi<z_2$, then there exists a smooth periodic traveling wave solution $\phi$ of \eqref{5.2.1} with $z_1=\min\limits_{z\in\R} \phi(z)$ and $z_2=\max\limits_{z\in\R} \phi(z)$.
    \item If $F$ has a double zero $z_1$, a simple zero $z_2$ and $F(\phi)>0$ for $z_1<\phi<z_2$, there exists a smooth solution $\phi$ of \eqref{5.2.1} with $z_1=\inf\limits_{z\in\R}\phi(z)$, $z_2=\max\limits_{z\in\R} \phi(z)$ and $\phi\downarrow z_1$ as $z\to\pm\infty$.
\end{enumerate}

In terms of weak solutions, the following discussion can be drawn for the poles of $F$:
\begin{enumerate}
    \item[4.] Peakons will exist when $\phi$ satisfies \eqref{5.2.1} and the pole $a$ of $F$ is removable:
    \begin{align*}
        0 \neq \lim\limits_{z\uparrow z_0} \phi'(z) = - \lim\limits_{z\downarrow z_0} \phi'(z) \neq \pm \infty,
    \end{align*}
    where $z_0\in \R$ is such that $\phi(z_0)=a$.
    \item[5.] Cuspons will exist when $\phi$ satisfies \eqref{5.2.1} and the pole $a=\min\limits_{z\in\R} \phi(z)$ (or taken as the maximum) of $F$ is non-removable:
    \begin{align*}
        \lim\limits_{z\uparrow z_0} \phi'(z) = - \lim\limits_{z\downarrow z_0} \phi'(z) = \pm \infty.
    \end{align*}
\end{enumerate}

Regarding the existence of weak solutions, we observe that it is necessary to analyse their behaviour before and after points $z_0$ such that $\phi(z_0)=a$, where $a$ denotes a pole of $F$. As we will see for the case $\epsilon\neq 0$, weak solutions will lose differentiability only on $z_0$. Consequently, apart from the fact that a zero can remove the singularity of $F$ and lead to peakons or cuspons, the zeros of $F$ will only tell the smooth qualitative behaviour (e.g. periodic, with decay) of $\phi$ away from $z_0$, while the weak feature of solutions is determined by the existence of poles in $F$.

 Before proceeding with the classification results, we observe that any constant function $u(x,t)=u_0$ is a solution of \eqref{1.0.4} for any choice of the parameters and for this reason we make the weak assumption that solutions found in the classifications are not constant.

Consider equation (\ref{5.1.2}). Since the analysis of existence of bounded traveling waves  will be purely qualitative, from now on we shall conveniently take the scalings $\gamma\mapsto 10\gamma$ and $\beta\mapsto 6\beta$ when necessary to write
\begin{align}\label{ast}
    -2c\phi + 3\phi^2-2\epsilon^2\phi \phi''-\epsilon^2(\phi')^2+2c \epsilon^2\phi''-2\alpha \phi -4\beta\phi^3-5\gamma\phi^4-2\Gamma \phi'' = 2A
\end{align}
and the quadrature \eqref{5.1.4} in the form
\begin{align}\label{5.2.2}
    (\phi')^2 = \f{P(\phi)}{\epsilon^2(c-\phi)-\Gamma}, \quad P(\phi) = \gamma \phi^5 + \beta \phi^4 - \phi^3 +(c+\alpha)\phi^2+2A\phi + B.
\end{align}
If $\phi$ is a bounded smooth solution of (\ref{5.2.2}), let
\begin{align*}
    m=\inf\limits_{z\in\R}\phi(z),\quad M=\sup\limits_{z\in\R}\phi(z).
\end{align*}
Then we use the fact that $\phi$ is continuous and $\phi'\to 0$ as $z\to m$ or $z\to M$ to obtain that the infimum and supremum of smooth solutions $\phi$ are zeros of $P(\phi)$. At points where $\phi=(\epsilon^2 c - \Gamma)/\epsilon^2$, the behaviour of infimum and supremum may change since $\phi'$ blows up.

According to the theory briefly discussed, we must analyse the zeros and the sign of $P(\phi)$ based on the placement of $\phi$ among the zeros. However, it is important to observe that, from the quadrature form \eqref{5.1.4}, we must guarantee that the condition
$$F(\phi) = \f{P(\phi)}{\epsilon^2(c-\phi)-\Gamma}>0$$ holds. However, whenever $\epsilon=0,$ the pole of $F(\phi)$ is removed, no weak solutions will exist and we can proceed with the classification.

In what follows, we separately prove the classification for $\epsilon=0$ and $\epsilon\neq0$.

\subsection{Case $\epsilon=0$}

The choice $\epsilon=0$ leads to the quadrature
\begin{align*}
    (\phi')^2 = -\f{P(\phi)}{\Gamma},
\end{align*}
where $P(\phi)$ is given as in \eqref{5.2.2}, poles are removed and no bounded weak traveling wave solutions will exist. Observe that this quadrature makes sense since we assumed $(\epsilon,\Gamma)\neq (0,0)$.

Now consider the polynomial $P(\phi)$ and observe that it will have one, three or five real zeros if $\gamma\neq 0$ and none, two or four real zeros if $\gamma=0$. Before proceeding with our classification results, we shall take a look into compatibility conditions according to the number of zeros of each case.

Firstly assume $\gamma\neq 0$ and suppose that $P(\phi)$ has three real zeros $r_1\leq r_2 \leq r_3$ counted without their multiplicities and a complex zero $z_0$ in such a way we can write
\begin{align*}
    \gamma \phi^5 + \beta \phi^4 - \phi^3 + (\alpha+c)\phi^2+ 2A\phi + B = \gamma(\phi-r_1)(\phi-r_2)(\phi-r_3)|\phi-z_0|^2.
\end{align*}
Comparing the coefficients of $\phi^4,\phi^3,\phi^2$ and $\phi$, the compatibility conditions will be given by
\begin{align}\label{5.3.1}
    \begin{aligned}
    2 \text{Re}(z_0) &= -\f{\beta+\gamma(r_1+r_2+r_3)}{\gamma},\\
    \gamma(r_1^2+&r_2^2+r_3^2)+\beta(r_1+r_2+r_3) + \gamma(r_1r_2+r_1r_3+r_2r_3) -\gamma |z_0|^2=1,\\
    \gamma|z_0|^2(&r_1r_2+r_1r_3+r_2r_3) - \gamma r_1r_2r_3(r_1+r_2+r_3)-\beta r_1r_2r_3=2A, 
    \end{aligned}
\end{align}
while the constant term yields
\begin{align}\label{5.3.2}
    B= -\gamma r_1r_2r_3\vert z_0\vert^2.
\end{align}
For the sake of classification, two cases will arise:
\begin{enumerate}
    \item all real zeros are simple: $r_1<r_2<r_3$. In this case, we can have $r_1<\phi< r_2<r_3$ or $r_1<r_2<\phi< r_3$.
    \begin{itemize}
        \item If $r_1\leq\phi\leq r_2<r_3$, then $-P(\phi)/\Gamma >0$ if and only if $\tilde{\gamma}:=\gamma/\Gamma <0$. For this case we know that there will exist a smooth periodic travelling wave solution $\phi$ with $r_1=\min\limits_{z\in\R}\phi(z)$ and $r_2=\max\limits_{z\in\R}\phi(z)$.
        
        \item If $r_1<r_2\leq\phi\leq r_3$, then $-P(\phi)/\Gamma >0$ if and only if $\tilde{\gamma}:=\gamma/\Gamma >0$. In this case there will exist a smooth periodic travelling wave solution $\phi$ with $r_2=\min\limits_{z\in\R}\phi(z)$ and $r_3=\max\limits_{z\in\R}\phi(z)$, finishing the case of three simple zeros.
    \end{itemize}
    \item only one real zero is double: $r_1=r_2<r_3$ or $r_1<r_2=r_3$. The possibilities for $\phi$ are now $r_1=r_2<\phi<r_3$ or $r_1<\phi<r_2=r_3$.
    \begin{itemize}
        \item If $r_1=r_2<\phi\leq r_3$, then $-P(\phi)/\Gamma >0$ if and only if $\tilde{\gamma}>0$ and there will exist a smooth travelling wave solution $\phi$ with $r_2=\inf\limits_{z\in\R}\phi(z)$, $r_3=\max\limits_{z\in\R}\phi(z)$ and $\phi\downarrow r_2$ as $z\to\pm\infty$.
        
        \item If $r_1\leq \phi<r_2=r_3$, then $-P(\phi)/\Gamma >0$ if and only if $\tilde{\gamma}<0$ and there will exist a smooth travelling wave solution $\phi$ with $r_1=\min\limits_{z\in\R}\phi(z)$, $r_2=\sup\limits_{z\in\R}\phi(z)$ and $\phi\uparrow r_2$ as $z\to\pm\infty$.
    \end{itemize}
\end{enumerate}

These two cases prove the following theorem:
\begin{theorem}\label{teo5.1}\textbf{(Case $\gamma\neq 0$ with three real zeros)}
Let $\gamma\neq 0$, $\tilde{\gamma}=\gamma/\Gamma$ and suppose $r_1\leq r_2\leq r_3\in\R$ and $z_0\in\mathbb{C}\setminus \R$ satisfy \eqref{5.3.1} and \eqref{5.3.2}. Then
\begin{enumerate}
    \item smooth periodic travelling wave solutions $\phi$ will exist if $r_1<r_2<r_3$ and
    \begin{enumerate}
        \item $\tilde{\gamma}>0$, with $r_2=\min\limits_{z\in\R}\phi(z)$ and $r_3=\max\limits_{z\in\R}\phi(z)$;
        \item $\tilde{\gamma}<0$, with $r_1=\min\limits_{z\in\R}\phi(z)$ and $r_2=\max\limits_{z\in\R}\phi(z)$.
    \end{enumerate}
    \item smooth travelling wave solutions $\phi$ with horizontal asymptotes will exist if
    \begin{enumerate}
        \item $r_1=r_2<r_3$ and $\tilde{\gamma}>0$ with $r_2=\inf\limits_{z\in\R}\phi(z)$, $r_3 = \max\limits_{z\in\R}\phi(z)$ and $\phi\downarrow r_2$ as $z\to\pm\infty$;
        \item $r_1<r_2=r_3$ and $\tilde{\gamma}<0$, with $r_1=\min\limits_{z\in\R}\phi(z)$, $r_2 = \sup\limits_{z\in\R}\phi(z)$ and $\phi\uparrow r_2$ as $z\to\pm\infty$.
    \end{enumerate}
\end{enumerate}
\end{theorem}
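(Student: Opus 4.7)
The plan is to factor the quintic $P(\phi)$ using the assumed real zeros and the complex conjugate pair, reduce the sign of $(\phi')^2=-P(\phi)/\Gamma$ to a simpler cubic factor, and then invoke the general qualitative criteria (items 2 and 3) from Subsection \ref{subsec5.2} in each of the four listed sub-cases.

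More precisely, I would first observe that since $\gamma\neq 0$ and $P$ is a real quintic with three real zeros $r_1\le r_2\le r_3$ counted without multiplicity, the remaining two zeros must be a complex conjugate pair, giving the factorization
$$P(\phi)=\gamma(\phi-r_1)(\phi-r_2)(\phi-r_3)|\phi-z_0|^2.$$
Matching the coefficients of $\phi^4,\phi^3,\phi^2,\phi,1$ yields exactly the compatibility relations \eqref{5.3.1}--\eqref{5.3.2}; this step is purely algebraic and essentially fixes $A,B$ and $\mathrm{Re}(z_0),|z_0|^2$ in terms of $r_1,r_2,r_3,\beta,\gamma,\alpha+c$. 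Since $|\phi-z_0|^2>0$ for all real $\phi$ (as $z_0\notin\R$), the sign of $(\phi')^2=-P(\phi)/\Gamma$ agrees with the sign of $-\tilde\gamma\,(\phi-r_1)(\phi-r_2)(\phi-r_3)$, where $\tilde\gamma=\gamma/\Gamma$.

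Next I would run the sign analysis of the cubic $(\phi-r_1)(\phi-r_2)(\phi-r_3)$. In the simple-zero case $r_1<r_2<r_3$, the cubic is negative on $(r_1,r_2)$ and positive on $(r_2,r_3)$. Hence $-\tilde\gamma\,(\phi-r_1)(\phi-r_2)(\phi-r_3)\ge 0$ on $[r_1,r_2]$ iff $\tilde\gamma<0$, and on $[r_2,r_3]$ iff $\tilde\gamma>0$. In each of these two situations the right-hand side of $(\phi')^2=F(\phi)$ has exactly two simple zeros at the endpoints of the admissible interval (since the $|\phi-z_0|^2$ factor is strictly positive there), and no poles (the $\epsilon=0$ quadrature has no pole at all). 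This is precisely the setting of item 2 of Subsection \ref{subsec5.2}, and it yields a smooth periodic travelling wave attaining the two endpoints as minimum and maximum, which gives items 1(a) and 1(b) of the theorem.

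For the double-zero cases, either $r_1=r_2<r_3$ or $r_1<r_2=r_3$, the sign analysis is similar: writing the cubic as $(\phi-r_2)^2(\phi-r_3)$ or $(\phi-r_1)(\phi-r_2)^2$ shows that positivity of $-\tilde\gamma\cdot(\text{cubic})$ on the relevant interval requires $\tilde\gamma>0$ in the first case (admissible interval $(r_2,r_3]$) and $\tilde\gamma<0$ in the second (admissible interval $[r_1,r_2)$). Now the admissible $F(\phi)$ has a double zero at the repeated root and a simple zero at the other endpoint, which is exactly the hypothesis of item 3 of Subsection \ref{subsec5.2}, yielding a smooth solution with horizontal asymptote at the double root and extremum at the simple root, as stated in items 2(a) and 2(b).

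The routine algebra is harmless; the only genuine subtlety is making sure that, even though $\tilde\gamma$ only controls the sign of the leading coefficient of the cubic factor, the $|\phi-z_0|^2$ factor never vanishes and therefore does not introduce spurious zeros or sign changes inside the admissible interval. With that observation in hand, the statement follows immediately by matching each of the four configurations to the corresponding general existence result recalled in Subsection \ref{subsec5.2}.
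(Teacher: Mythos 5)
Your proposal is correct and follows essentially the same route as the paper's own argument: factor $P(\phi)$ with the strictly positive factor $|\phi-z_0|^2$, match coefficients to obtain \eqref{5.3.1}--\eqref{5.3.2}, perform the interval-by-interval sign analysis of $-\tilde{\gamma}$ times the cubic $(\phi-r_1)(\phi-r_2)(\phi-r_3)$, and invoke the Lenells-type criteria (items 2 and 3) recalled in Subsection \ref{subsec5.2}, including the mirrored version of item 3 when the double zero lies above the admissible interval. One slip worth noting: for $r_1<r_2<r_3$ the cubic is positive on $(r_1,r_2)$ and negative on $(r_2,r_3)$, the opposite of what you wrote, although the conditions you then state ($\tilde{\gamma}<0$ on $[r_1,r_2]$, $\tilde{\gamma}>0$ on $[r_2,r_3]$) are the correct ones and agree with the theorem.
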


Still for $\gamma\neq 0$, assume now that $P(\phi)$ has all five real zeros $r_1\leq r_2\leq r_3\leq r_4\leq r_5$ counted without their multiplicities. In this case, we write
\begin{align*}
    \gamma \phi^5 + \beta \phi^4 - \phi^3 + (\alpha+c)\phi^2+ 2A\phi + B = \gamma(\phi-r_1)(\phi-r_2)(\phi-r_3)(\phi-r_4)(\phi-r_5).
\end{align*}
After a new comparison of coefficients, we obtain the following compatibility conditions:
\begin{align}\label{5.3.3}
    \begin{aligned}
    r_1 =& -\f{\beta + \gamma(r_2+r_3+r_4+r_5)}{\gamma},\\
    \gamma(r_2^2+&r_3^2+r_4^2+r_5^2) + \beta(r_2+r_3+r_4+r_5)+\gamma(r_2+r_3)(r_4+r_5) + \gamma(r_2r_3+r_4r_5)=1,\\
    \beta(r_2+&r_3)(r_4+r_5) +\beta(r_2r_3+r_4r_5)+\gamma(r_2+r_3)(r_2r_3+r_4r_5)\\
    &+ \gamma r_2r_3(r_2+r_3)+\gamma(r_2+r_3)(r_4^2+r_5^2+r_4r_5) +\gamma(r_4+r_5)(r_2^2+r_3^2+r_2r_3)=c+\alpha,\\
    2A =& -\beta r_2r_3(r_4+r_5) -\beta r_4r_5(r_2+r_3)-\gamma(r_2r_3+r_4r_5)(r_2+r_3)(r_4+r_5)-\gamma r_2r_3(r_4+r_5)^2\\
    &-\gamma r_4r_5(r_2^2+r_3^2+r_2r_3),\\
    B=& -\gamma r_1r_2r_3r_4r_5.
    \end{aligned}
\end{align}
In this case, we have more possibilities regarding the existence of double zeros. In summary, we must have one of the following
\begin{enumerate}
    \item all zeros are simple: $r_1<r_2<r_3<r_4<r_5$. For all the possible distributions of $\phi$ among the zeros, we have the cases:
    \begin{itemize}
        \item if $r_1\leq\phi\leq r_2<r_3<r_4<r_5$, then $-P(\phi)/\Gamma>0$ if and only if $\tilde{\gamma}<0$. In this case there exists a smooth periodic travelling wave solution $\phi$ with $r_1=\min\limits_{z\in\R}\phi(z)$ and $r_2=\max\limits_{z\in\R}\phi(z)$;
        
        \item if $r_1<r_2\leq\phi\leq r_3<r_4<r_5$, then $-P(\phi)/\Gamma>0$ if and only if $\tilde{\gamma}>0$ and there exists a smooth periodic travelling wave solution $\phi$ with $r_2=\min\limits_{z\in\R}\phi(z)$ and $r_3=\max\limits_{z\in\R}\phi(z)$;
        
        \item if $r_1<r_2<r_3\leq\phi\leq r_4<r_5$, then $-P(\phi)/\Gamma>0$ if and only if $\tilde{\gamma}<0$ and there exists a smooth periodic travelling wave solution $\phi$ with $r_3=\min\limits_{z\in\R}\phi(z)$ and $r_4=\max\limits_{z\in\R}\phi(z)$;
        
        \item if $r_1<r_2<r_3<r_4\leq \phi\leq r_5$, then  $-P(\phi)/\Gamma>0$ if and only if $\tilde{\gamma}>0$ and there exists a smooth periodic travelling wave solution $\phi$ with $r_4=\min\limits_{z\in\R}\phi(z)$ and $r_5=\max\limits_{z\in\R}\phi(z)$;
    \end{itemize}
    \item three zeros are simple and one is double:
    \begin{itemize}
        \item if $r_1=r_2<\phi\leq r_3<r_4<r_5$, then $-P(\phi)/\Gamma>0$ if and only if $\tilde{\gamma}>0$, and there exists a smooth travelling wave solution $\phi$ with $r_2=\inf\limits_{z\in\R}\phi(z)$, $r_3 = \max\limits_{z\in\R}\phi(z)$ and $\phi\downarrow r_2$ as $z\to\pm\infty$;
        
        \item if $r_1\leq \phi<r_2=r_3<r_4<r_5$, then $-P(\phi)/\Gamma>0$ if and only if $\tilde{\gamma}<0$, and there will exist a smooth travelling wave solution $\phi$ with $r_1=\min\limits_{z\in\R}\phi(z)$, $r_2 = \sup\limits_{z\in\R}\phi(z)$ and $\phi\uparrow r_2$ as $z\to\pm\infty$;
        
        \item if $r_1<r_2=r_3<\phi\leq r_4<r_5$, then $-P(\phi)/\Gamma>0$ if and only if $\tilde{\gamma}<0$, and there will exist a smooth travelling wave solution $\phi$ with $r_3=\inf\limits_{z\in\R}\phi(z)$, $r_4 = \max\limits_{z\in\R}\phi(z)$ and $\phi\downarrow r_3$ as $z\to\pm\infty$;
        
        \item if $r_1<r_2<r_3=r_4<\phi\leq r_5$, then $-P(\phi)/\Gamma>0$ if and only if $\tilde{\gamma}>0$, and there will exist a smooth travelling wave solution $\phi$ with $r_4=\inf\limits_{z\in\R}\phi(z)$, $r_5 = \max\limits_{z\in\R}\phi(z)$ and $\phi\downarrow r_4$ as $z\to\pm\infty$;
        
        \item if $r_1<r_2\leq\phi<r_3=r_4<r_5$, then $-P(\phi)/\Gamma>0$ if and only if $\tilde{\gamma}>0$, and there will exist a smooth travelling wave solution $\phi$ with $r_2=\min\limits_{z\in\R}\phi(z)$, $r_3 = \sup\limits_{z\in\R}\phi(z)$ and $\phi\uparrow r_3$ as $z\to\pm\infty$;
        
        \item if $r_1<r_2<r_3\leq \phi < r_4=r_5$, then $-P(\phi)/\Gamma>0$ if and only if $\tilde{\gamma}<0$, and there will exist a smooth travelling wave solution $\phi$ with $r_3=\min\limits_{z\in\R}\phi(z)$, $r_4 = \sup\limits_{z\in\R}\phi(z)$ and $\phi\uparrow r_4$ as $z\to\pm\infty$.
    \end{itemize}
    \item only one zero is simple and two are double:
    \begin{itemize}
        \item if $r_1=r_2<\phi<r_3=r_4<r_5$, then $-P(\phi)/\Gamma>0$ if and only if $\tilde{\gamma}>0$, and there will exist a smooth traveling wave solution $\phi$ with $r_2=\inf\limits_{z\in\R}\phi(z)$, $r_3 = \sup\limits_{z\in\R}\phi(z)$ and $\phi\downarrow r_2$, $\phi\uparrow r_3$ as $z\to\pm\infty$;
        
        \item if $r_1<r_2=r_3<\phi<r_4=r_5$, then $-P(\phi)/\Gamma>0$ if and only if $\tilde{\gamma}<0$, and there will exist a smooth traveling wave solution $\phi$ with $r_3=\inf\limits_{z\in\R}\phi(z)$, $r_4 = \sup\limits_{z\in\R}\phi(z)$ and $\phi\downarrow r_3$, $\phi\uparrow r_4$ as $z\to\pm\infty$.
    \end{itemize}
\end{enumerate}

The cases obtained above lead to the following theorem:

\begin{theorem}\label{teo5.2}\textbf{(Case $\gamma\neq 0$ with five real zeros)}
Let $\gamma\neq 0$, $\tilde{\gamma}=\gamma/\Gamma$ and suppose $r_1\leq r_2\leq r_3\leq r_4\leq r_5\in\R$ satisfy \eqref{5.3.3}. Then
\begin{enumerate}
    \item smooth periodic travelling wave solutions $\phi$ will exist if $r_1<r_2<r_3<r_4<r_5$ and
    \begin{enumerate}
        \item $\tilde{\gamma}>0$, with $r_2=\min\limits_{z\in\R}\phi(z)$ and $r_3=\max\limits_{z\in\R}\phi(z)$;
        \item $\tilde{\gamma}>0$, with $r_4=\min\limits_{z\in\R}\phi(z)$ and $r_5 = \max\limits_{z\in\R}\phi(z)$;
        \item $\tilde{\gamma}<0$, with $r_1=\min\limits_{z\in\R}\phi(z)$ and $r_2=\max\limits_{z\in\R}\phi(z)$;
        \item $\tilde{\gamma}<0$, with $r_3=\min\limits_{z\in\R}\phi(z)$ and $r_4 = \max\limits_{z\in\R}\phi(z)$.
    \end{enumerate}
    \item smooth travelling wave solutions $\phi$ with horizontal asymptotes will exist if
    \begin{enumerate}
        \item $r_1=r_2<r_3<r_4<r_5$ and $\tilde{\gamma}>0$, with $r_2=\inf\limits_{z\in\R}\phi(z)$, $r_3 = \max\limits_{z\in\R}\phi(z)$ and  $\phi\downarrow r_2$ as $z\to\pm\infty$;
        
        \item $r_1=r_2<r_3=r_4<r_5$ and $\tilde{\gamma}>0$, with $r_2=\inf\limits_{z\in\R}\phi(z)$, $r_3 = \sup\limits_{z\in\R}\phi(z)$ and $\phi\uparrow r_3$, $\phi\downarrow r_2$ as $z\to\pm\infty$.
        
        \item $r_1<r_2<r_3=r_4<r_5$ and $\tilde{\gamma}>0$, with $r_4=\inf\limits_{z\in\R}\phi(z)$, $r_5 = \max\limits_{z\in\R}\phi(z)$ and $\phi\downarrow r_4$ as $z\to\pm\infty$;
        
        \item $r_1<r_2<r_3=r_4<r_5$ and $\tilde{\gamma}>0$, with $r_2=\min\limits_{z\in\R}\phi(z)$, $r_3 = \sup\limits_{z\in\R}\phi(z)$ and $\phi\uparrow r_3$ as $z\to\pm\infty$;
        
        \item $r_1<r_2=r_3<r_4<r_5$ and $\tilde{\gamma}<0$, with $r_3=\inf\limits_{z\in\R}\phi(z)$, $r_4 = \max\limits_{z\in\R}\phi(z)$ and $\phi\downarrow r_3$ as $z\to\pm\infty$;
        
        \item $r_1<r_2=r_3<r_4<r_5$ and $\tilde{\gamma}<0$, with $r_1=\min\limits_{z\in\R}\phi(z)$, $r_2 = \sup\limits_{z\in\R}\phi(z)$ and $\phi\uparrow r_2$ as $z\to\pm\infty$;
        
        \item $r_1<r_2<r_3<r_4=r_5$ and $\tilde{\gamma}<0$, with $r_3=\min\limits_{z\in\R}\phi(z)$, $r_4 = \sup\limits_{z\in\R}\phi(z)$ and $\phi\uparrow r_4$ as $z\to\pm\infty$;
        
        \item $r_1<r_2=r_3<r_4=r_5$ and $\tilde{\gamma}>0$, with $r_3=\inf\limits_{z\in\R}\phi(z)$, $r_4 = \sup\limits_{z\in\R}\phi(z)$ and $\phi\uparrow r_4$ and $\phi\downarrow r_3$ as $z\to\pm\infty$.
    \end{enumerate}
\end{enumerate}
\end{theorem}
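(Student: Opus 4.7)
With $\epsilon=0$ the quadrature \eqref{5.2.2} reduces to $(\phi')^2=-P(\phi)/\Gamma$, whose right-hand side is polynomial, so by the discussion in Subsection \ref{subsec5.2} no peakons or cuspons can arise and the classification reduces to two tasks: identifying the maximal bounded intervals on which $-P(\phi)/\Gamma\ge 0$, and invoking recipes (2) and (3) of that subsection for each such interval. Under the hypothesis that $P$ has five real zeros one has $P(\phi)=\gamma\prod_{j=1}^{5}(\phi-r_j)$, and the compatibility conditions \eqref{5.3.3} are precisely Vieta's relations matching the elementary symmetric polynomials in $r_1,\ldots,r_5$ to the coefficients $\beta$, $c+\alpha$, $2A$, $B$ of $P$; thus the classification question reduces to a purely qualitative sign analysis.

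Writing $-P(\phi)/\Gamma=-\tilde{\gamma}\prod_j(\phi-r_j)$, I would carry out that analysis by tracking the parity of negative factors as $\phi$ crosses each simple root (no flip at a double root). In the all-simple case this produces a sign chart in which $\prod_j(\phi-r_j)$ is negative on $(-\infty,r_1)\cup(r_2,r_3)\cup(r_4,r_5)$ and positive on $(r_1,r_2)\cup(r_3,r_4)\cup(r_5,+\infty)$; discarding the unbounded extreme intervals identifies
\[
[r_2,r_3]\ \text{and}\ [r_4,r_5]\ \text{when}\ \tilde{\gamma}>0,\qquad [r_1,r_2]\ \text{and}\ [r_3,r_4]\ \text{when}\ \tilde{\gamma}<0,
\]
as the admissible bounded intervals. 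Applying recipe (2) of Subsection \ref{subsec5.2} to each of them produces a smooth periodic wave with the endpoints as extrema, delivering parts 1(a)--(d). When a collision $r_i=r_{i+1}$ occurs the sign at that root is preserved instead of flipped, so the sign chart shortens by one interval per collision, and I would tabulate it exhaustively for each collision pattern.

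For each admissible interval whose closure contains a double root, recipe (3) of Subsection \ref{subsec5.2} produces a smooth travelling wave asymptotic to the double-root value as $z\to\pm\infty$ that attains the opposite endpoint as an extremum; when both endpoints of the admissible interval are distinct double roots the same integration yields a heteroclinic connecting the two double-root levels. Listing the outcomes by collision pattern ($r_1=r_2$, $r_2=r_3$, $r_3=r_4$, $r_4=r_5$, together with the two genuinely new double-collision patterns $\{r_1=r_2,\,r_3=r_4\}$ and $\{r_2=r_3,\,r_4=r_5\}$) produces parts 2(a)--(h). The principal obstacle I foresee is purely combinatorial: one must cross-check each collision pattern against the admissible-interval chart, pick the correct sign of $\tilde{\gamma}$, and verify that non-adjacent collisions (e.g.\ $r_1=r_2$ together with $r_4=r_5$) do not contribute new configurations but are absorbed into single-collision entries already listed. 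Organising the case work by a table indexed by (collision pattern, admissible interval, sign of $\tilde{\gamma}$) keeps the enumeration transparent and shows that the list of twelve cases in the theorem is exhaustive.
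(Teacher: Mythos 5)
Your proposal is correct and follows essentially the same route as the paper: factor $P$ using the compatibility (Vieta) relations \eqref{5.3.3}, determine where $-P(\phi)/\Gamma>0$ by a sign count across the roots (with no sign flip at a double root), and then invoke the qualitative recipes of Subsection \ref{subsec5.2} interval by interval, noting that $\epsilon=0$ removes the pole so only smooth waves occur. Your sign chart and the resulting admissible intervals for $\tilde{\gamma}>0$ and $\tilde{\gamma}<0$ agree with the paper's case-by-case enumeration, so the tabulation you describe reproduces parts 1(a)--(d) and 2(a)--(h).
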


The classification presented so far for $\gamma\neq 0$ does not contain the case where $P(\phi)$ has only one real zero. In fact, it $P(\phi)$ has only one real zero $r$ so that $P(\phi) = \gamma(\phi-r)|\phi-z_0|^2|\phi-z_1|^2$, for certain complex numbers $z_0,z_1$, then both conditions $\phi>r$ or $\phi<r$ will lead to the non-existence of bounded solutions.

Now consider $\gamma=0$ and $\beta\neq 0$ and suppose the polynomial $P(\phi)$ has two real zeros $r_1\leq r_2$ counted without their multiplicities and one complex zero $z_0$ so we can write
\begin{align*}
    \beta \phi^4 - \phi^3 + (\alpha+c)\phi^2+ 2A\phi + B = \beta(\phi-r_1)|\phi-z_0|^2.
\end{align*}
The compatibility conditions read
\begin{align}\label{5.3.4}
    \begin{aligned}
    Re(z_0) = \f{1-\beta(r_1+r_2)}{\beta},\\
    \beta |z_0|^2 - \beta(r_1+r_2)^2 +\beta r_1r_2+r_1+r_2 = c+\alpha\\
    2A = \beta r_1r_2(r_1+r_2) - r_1r_2 - \beta (r_1+r_2)|z_0|^2,\\
    B=\beta r_1r_2|z_0|^2
    \end{aligned}
\end{align}

In this case, the only possibility will be that the zeros are simple and $r_1<\phi<r_2$. Then $-P(\phi)/\Gamma>0$ if and only if $\tilde{\beta}:= \beta/\Gamma>0$ and we have the following result:

\begin{theorem}\label{teo5.3}\textbf{(Case $\gamma=0$ with two real zeros)}
Let $\gamma=0$, $\beta\neq 0$ and suppose $r_1< r_2$ satisfy \eqref{5.3.4}. If $\tilde{\beta}=\beta/\Gamma>0$, then there exists a smooth periodic travelling wave solution $\phi$ of \eqref{1.0.4} with $r_1=\min\limits_{z\in\R}\phi(z)$ and $r_2=\max\limits_{z\in\R}\phi(z)$.
\end{theorem}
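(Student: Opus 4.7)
My plan is to reduce Theorem \ref{teo5.3} directly to case 2 of the qualitative classification recalled in Subsection \ref{subsec5.2}, applied to the function
$$F(\phi)=-\frac{P(\phi)}{\Gamma}=-\frac{\beta(\phi-r_1)(\phi-r_2)|\phi-z_0|^2}{\Gamma}.$$
Since $\epsilon=0$, the rational function $F$ has no poles, so the question of weak solutions (peakons, cuspons) disappears and the whole analysis is governed by the zeros and sign of the polynomial $P(\phi)$.

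The first step is to record the factorisation. Since $\gamma=0$ and $\beta\neq 0$, $P$ is a degree-$4$ polynomial with real coefficients; the assumption that it has exactly two real zeros $r_1<r_2$ (and hence a conjugate pair $z_0,\bar z_0\in\mathbb{C}\setminus\R$) forces the factorisation above, and the parameters $A,B$ satisfying \eqref{5.3.4} are precisely the compatibility conditions that realise this. In particular, both real roots are simple because $z_0\notin\R$, so $P'(r_1)\neq 0$ and $P'(r_2)\neq 0$.

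The second step is the sign analysis on the open interval $(r_1,r_2)$. Here $|\phi-z_0|^2>0$ and $(\phi-r_1)(\phi-r_2)<0$, hence $\mathrm{sgn}(P(\phi))=-\mathrm{sgn}(\beta)$ and $\mathrm{sgn}(F(\phi))=\mathrm{sgn}(\beta/\Gamma)=\mathrm{sgn}(\tilde\beta)$. Therefore $F(\phi)>0$ throughout $(r_1,r_2)$ if and only if $\tilde\beta>0$, which is exactly the hypothesis of the theorem. At $\phi=r_i$, $i=1,2$, $F$ vanishes to first order.

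The final step is to invoke case 2 of the summary in Subsection \ref{subsec5.2}: if $F$ has only two simple zeros $r_1<r_2$ with $F>0$ in between, then $(\phi')^2=F(\phi)$ admits a smooth periodic travelling wave with $\min_{z\in\R}\phi(z)=r_1$ and $\max_{z\in\R}\phi(z)=r_2$. Concretely, one defines the half-period
$$T=\int_{r_1}^{r_2}\frac{d\phi}{\sqrt{F(\phi)}},$$
which is finite because the zeros are simple (the integrand behaves like $(\phi-r_i)^{-1/2}$ near $r_i$), and extends the resulting branch by reflection and $2T$-periodicity to produce a smooth periodic solution of the ODE on all of $\R$. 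Smoothness at the turning points follows from the fact that $F'(r_i)\neq 0$.

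I do not anticipate a genuine obstacle here: the only subtlety is verifying that the simplicity of $r_1,r_2$ is automatic under the hypotheses (so that $F$ is not of the degenerate double-root type that would produce a soliton-like profile with horizontal asymptotes instead of a periodic one), and this was already handled when observing $z_0\notin\R$. Everything else is a direct specialisation of the qualitative framework already laid out in Subsection \ref{subsec5.2}.
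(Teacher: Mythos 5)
Your proposal is correct and follows essentially the same route as the paper: write the quadrature as $(\phi')^2=-P(\phi)/\Gamma$, factor $P(\phi)=\beta(\phi-r_1)(\phi-r_2)\vert\phi-z_0\vert^2$ with the compatibility conditions \eqref{5.3.4}, observe that on $(r_1,r_2)$ positivity of the right-hand side is equivalent to $\tilde{\beta}>0$, and invoke case 2 of the classification recalled in Subsection \ref{subsec5.2} for two simple zeros. Your added remarks (simplicity of $r_1,r_2$ from $z_0\notin\R$ and the finite half-period integral) only make explicit what the paper delegates to that framework.
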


Moving on to the case where $\gamma=0$ and $P(\phi)$ has all four real zeros $r_1<r_2<r_3<r_4$, the compatibility conditions are given by
\begin{align}\label{5.3.5}
    \begin{aligned}
    r_1 = \frac{1-\beta(r_2+r_3+r_4)}{\beta},\\
    r_2+r_3+r_4 -\beta(r_2^2+r_3^2+r_4^2) - \beta(r_2r_3+r_2r_4+r_3r_4)=c+\alpha,\\
    2A = \beta(r3+r_4)r_2^2 + \beta(r_2+r_4)r_3^2+\beta(r_2+r_3)r_4^2 - r_2r_3-r_2r_4-r_3r_4 + 2\beta r_2r_3r_4,\\
    B=\beta r_1r_2r_3r_4,
    \end{aligned}
\end{align}
and the following conditions are possible:
\begin{enumerate}
    \item all four zeros $r_1<r_2<r_3<r_4$ are simple:
    \begin{itemize}
        \item if $r_1\leq\phi\leq r_2<r_3<r_4$, then $-P(\phi)/\Gamma>0$ if and only if $\tilde{\beta}>0$, and there will exist a smooth periodic travelling wave solution $\phi$ with $r_1 = \min\limits_{z\in\R}\phi(z)$ and $r_2 = \max\limits_{z\in\R}\phi(z)$;
        
        \item if $r_1<r_2\leq\phi\leq r_3<r_4$, then $-P(\phi)/\Gamma>0$ if and only if $\tilde{\beta}<0$, and there will exist a smooth periodic travelling wave solution $\phi$ with $r_2 = \min\limits_{z\in\R}\phi(z)$ and $r_3 = \max\limits_{z\in\R}\phi(z)$;
        
        \item if $r_1<r_2<r_3\leq\phi\leq r_4$, then $-P(\phi)/\Gamma>0$ if and only if $\tilde{\beta}>0$, and there will exist a smooth periodic travelling wave solution $\phi$ with $r_3 = \min\limits_{z\in\R}\phi(z)$ and $r_4 = \max\limits_{z\in\R}\phi(z)$.
    \end{itemize}
    
    \item two zeros are simple and one is double:
    \begin{itemize}
        \item if $r_1=r_2<\phi\leq r_3<r_4$, then $-P(\phi)/\Gamma>0$ if and only if $\tilde{\beta}<0$, and there will exist a smooth travelling wave solution $\phi$ with $r_2=\inf\limits_{z\in\R}\phi(z)$, $r_3=\max\limits_{z\in\R}\phi(z)$ and $\phi\downarrow r_2$ as $z\to\pm\infty$;
        
        \item if $r_1\leq \phi<r_2=r_3<r_4$, then $-P(\phi)/\Gamma>0$ if and only if $\tilde{\beta}>0$, and there will exist a smooth travelling wave solution $\phi$ with $r_1=\min\limits_{z\in\R}\phi(z)$, $r_2=\sup\limits_{z\in\R}\phi(z)$ and $\phi\uparrow r_2$ as $z\to\pm\infty$;
        
        \item if $r_1<r_2=r_3<\phi\leq r_4$, then $-P(\phi)/\Gamma>0$ if and only if $\tilde{\beta}>0$, and there will exist a smooth travelling wave solution $\phi$ with $r_3=\inf\limits_{z\in\R}\phi(z)$, $r_4=\max\limits_{z\in\R}\phi(z)$ and $\phi\downarrow r_3$ as $z\to\pm\infty$;
        
        \item if $r_1<r_2\leq \phi<r_3=r_4$, then $-P(\phi)/\Gamma>0$ if and only if $\tilde{\beta}<0$, and there will exist a smooth travelling wave solution $\phi$ with $r_2=\min\limits_{z\in\R}\phi(z)$, $r_3=\sup\limits_{z\in\R}\phi(z)$ and $\phi\uparrow r_3$ as $z\to\pm\infty$.
    \end{itemize}
    
    \item two zeros are double: if $r_1=r_2<\phi<r_3=r_4$, then $-P(\phi)/\Gamma>0$ if and only if $\tilde{\beta}<0$. In this case there will exist a smooth travelling wave solution $\phi$ with $r_2 = \inf\limits_{z\in\R}\phi(z)$, $r_3 = \sup\limits_{z\in\R}\phi(z)$ and $\phi\downarrow r_2$, $\phi\uparrow r_3$ as $z\to\pm\infty$.
\end{enumerate}

The following result comes from the discussion presented above.

\begin{theorem}\label{teo5.4}\textbf{(Case $\gamma=0$ with four real zeros)}\label{teo5.4}
Let $\gamma=0$, $\beta\neq 0$, $\tilde{\beta} = \beta/\Gamma$ and suppose $r_1\leq r_2\leq r_3\leq r_4$ satisfy \eqref{5.3.5}. Then
\begin{enumerate}
    \item whenever $\tilde{\beta}>0$
    \begin{enumerate}
        \item $r_1<r_2<r_3<r_4$, there is a smooth periodic travelling wave solution $\phi$ of \eqref{1.0.4} with $r_1=\min\limits_{z\in\R}\phi(z)$ and $r_2=\max\limits_{z\in\R}\phi(z)$;
        \item $r_1<r_2=r_3<r_4$, there is a smooth travelling wave solution $\phi$ of \eqref{1.0.4} with $r_1=\min\limits_{z\in\R}\phi(z)$, $r_2=\sup\limits_{z\in\R}\phi(z)$ and $\phi\uparrow r_2$ as $z\to\pm \infty$;
        \item[(a')] $r_1<r_2<r_3<r_4$, there is a smooth periodic travelling wave solution $\phi$ of \eqref{1.0.4} with $r_3=\min\limits_{z\in\R}\phi(z)$ and $r_4=\max\limits_{z\in\R}\phi(z)$;
        \item[(b')] $r_1<r_2=r_3<r_4$, there is a smooth travelling wave solution $\phi$ of \eqref{1.0.4} with $r_2=\inf\limits_{z\in\R}\phi(z)$, $r_4=\max\limits_{z\in\R}\phi(z)$ and $\phi\downarrow r_2$ as $z\to\pm \infty$.
    \end{enumerate}
    \item whenever $\tilde{\beta}<0$ and
    \begin{enumerate}
        \item $r_1<r_2<r_3<r_4$, there is a smooth periodic travelling wave solution $\phi$ of \eqref{1.0.4} with $r_2=\min\limits_{z\in\R}\phi(z)$ and $r_3=\max\limits_{z\in\R}\phi(z)$;
        \item $r_1=r_2<r_3<r_4$, there is a smooth travelling wave solution $\phi$ of \eqref{1.0.4} with $r_2=\inf\limits_{z\in\R}\phi(z)$, $r_3=\max\limits_{z\in\R}\phi(z)$ and $\phi\downarrow r_2$ as $z\to\pm \infty$;
        \item $r_1=r_2<r_3=r_4$, there is a smooth travelling wave solution $\phi$ of \eqref{1.0.4} with $r_1=\inf\limits_{z\in\R}\phi(z)$, with $r_2=\sup\limits_{z\in\R}\phi(z)$ and $\phi\downarrow r_2$, $\phi\uparrow r_3$ as $z\to \pm\infty$;
        \item[(b')]$r_1<r_2<r_3=r_4$, there is a smooth travelling wave solution $\phi$ of \eqref{1.0.4} with $r_2=\min\limits_{z\in\R}\phi(z)$, $r_3=\sup\limits_{z\in\R}\phi(z)$ and $\phi\uparrow r_3$ as $z\to\pm \infty$;
    \end{enumerate}
\end{enumerate}
\end{theorem}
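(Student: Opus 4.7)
The plan is to read the theorem off the quadrature derived in Subsection \ref{subsec5.1}. With $\epsilon=0$ and $\gamma=0$, equation \eqref{5.2.2} becomes
\begin{equation*}
(\phi')^2=-\frac{P(\phi)}{\Gamma},\qquad P(\phi)=\beta\phi^4-\phi^3+(c+\alpha)\phi^2+2A\phi+B,
\end{equation*}
which, crucially, has no denominator vanishing for finite $\phi$. Consequently no weak phenomena (peakons or cuspons) can arise, and every bounded travelling wave is smooth and corresponds to an orbit of this planar ODE.

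Next I would invoke the qualitative framework recalled at the start of Subsection \ref{subsec5.2}: bounded solutions live on maximal closed intervals $[m,M]$ where $-P(\phi)/\Gamma\geq 0$ and whose endpoints are zeros of $P$. A pair of adjacent simple zeros produces a smooth periodic orbit attaining $m$ and $M$ as its minimum and maximum; a double zero at an endpoint forces $\phi$ to approach it only as $z\to\pm\infty$, producing a horizontal asymptote and turning that endpoint from an attained extremum into an unattained $\inf$ or $\sup$. Under the hypothesis $\gamma=0$, $\beta\neq 0$ the polynomial $P$ has degree exactly four, so the factorisation $P(\phi)=\beta\prod_{i=1}^{4}(\phi-r_i)$ is available with the real roots postulated in the statement, and the compatibility conditions \eqref{5.3.5} amount to nothing more than the coefficient identities of this factorisation.

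The remaining step is to tabulate which of the intervals between consecutive $r_i$ satisfies the sign constraint, using two elementary facts: $\sign(P)=\sign(\beta)$ on $(r_4,\infty)$, and $\sign(P)$ alternates across each simple root while being preserved across each double root. The inequality $-P(\phi)/\Gamma>0$ then becomes a single condition on $\tilde{\beta}=\beta/\Gamma$ for each admissible interval. Running through the three compatible multiplicity patterns (four simple roots; two simple plus one double; two double roots) and matching each admissible interval to the required sign of $\tilde{\beta}$ yields precisely the subcases listed in items 1 and 2 of the statement, with $\min/\max$ occurring at simple endpoints and $\inf/\sup$ with horizontal asymptotes occurring at double endpoints.

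The main obstacle is purely organisational: there are several orderings and multiplicity patterns to tabulate, and one must carefully pair each sign of $\tilde{\beta}$ with the correct endpoint type and asymptotic behaviour. No new analytic input is needed beyond what was used for Theorems \ref{teo5.1}--\ref{teo5.3}, since the argument is a direct specialisation of the same sign-analysis scheme to a degree-four polynomial with no rational singularity.
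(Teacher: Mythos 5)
Your proposal is correct and follows essentially the same route as the paper: reduce to the quadrature $(\phi')^2=-P(\phi)/\Gamma$ with the quartic $P(\phi)=\beta\prod_{i=1}^4(\phi-r_i)$ constrained by \eqref{5.3.5}, note the absence of poles (hence no weak solutions), and run the sign analysis of $-P(\phi)/\Gamma$ interval by interval, pairing simple endpoint zeros with periodic waves and double endpoint zeros with horizontal asymptotes, exactly as in the enumeration preceding Theorem \ref{teo5.4}. The only difference is cosmetic: you encode the interval-by-interval sign check via the alternation rule for simple roots starting from $\sign(P)=\sign(\beta)$ on $(r_4,\infty)$, whereas the paper checks each placement of $\phi$ directly, and both tabulations produce the same list of subcases.
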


Theorem \ref{teo5.4} is the last result regarding case $\epsilon=0$. As we have just shown, due to the absence of poles in this case, it was enough to look at the sign of $P(\phi)$ and its relation to $\tilde{\gamma}$. When $\epsilon\neq 0$, as we will discuss in the next subsection, although the analysis of $P(\phi)$ is necessary, it is not sufficient to guarantee that the entire term $P(\phi)/\epsilon^2(\tilde{c}-\phi)$ is positive. However, since our discussion on the sign of $P(\phi)$ presented in this subsection will also be necessary, most of the calculations are similar to the ones we have just carried out.

\subsection{Case $\epsilon\neq0$}

Differently from the evolutive case, the case $\epsilon \neq 0$ has a pole and this will lead to the existence of weak travelling wave solutions. However, we need to know how those weak solutions will behave when $\phi$ approaches the pole. The next lemma will be of extreme importance for weak solutions as it tells that any travelling solution $\phi$ will be smooth with the exception of points $x_0$ such that $\phi(x_0)=c$.

\begin{lemma}\label{lema5.1}
Let $\alpha\in\R$ and, for $\epsilon \neq 0$, let $\tilde{c} = \f{\epsilon^2 c - \Gamma}{\epsilon^2}$. A function $\phi\in H^1_{loc}(\R)$ is a travelling wave solution for \eqref{1.0.4} with $\gamma\to 10\gamma$ and $\beta\to 6 \beta$ if and only if the following conditions hold
\begin{enumerate}
    \item[(a)] There are disjoint open intervals $E_i$, $i\geq 1$, and a closed set $C$ such that $\R\setminus C = \bigcup\limits_{i=1}^{\infty} E_i$, $\phi\in C^{\infty}(E_i)$ and
    \begin{align*}
        \begin{cases}
        \phi(z) = \tilde{c},& z\in C,\\
        \phi(z) \neq \tilde{c},& z\in\R\setminus C.
        \end{cases}
    \end{align*}
    \item[(b)] There is a constant $A\in\R$ such that for each $i\geq 1$ there exists $b_i\in\R$ such that
    \begin{align}\label{5.4.1}
        (\phi')^2 = \frac{1}{\epsilon^2}\frac{P(\phi)}{\tilde{c} - \phi}, \quad \phi\to \tilde{c}\quad \text{at any finite endpoint of}\,\,E_i,
    \end{align}
    with $$P(\phi) = \gamma\phi^5 + \beta\phi^4-\phi^3 + (c+\alpha)\phi^2+2A\phi+B_i.$$
    \item[(c)] If the Lebesgue measure of $C$ is not zero, then for the same $A$ of item $(b)$, we have $$2A = -5\gamma \tilde{c}^4 - 4\beta \tilde{c}^3 + 3\tilde{c}^2 - 2(\alpha+c)\tilde{c}.$$
    \item[(d)] $(\phi-\tilde{c})^2\in W^{2,1}_{loc}(\R)$.
\end{enumerate}
\end{lemma}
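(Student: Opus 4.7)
I prove both directions. For the forward direction, let $\phi\in H^1_{loc}(\R)$ be a distributional traveling wave solution of \eqref{1.0.4}. The one-dimensional Sobolev embedding gives $\phi\in C(\R)$, so the level set $C:=\phi^{-1}(\{\tilde c\})$ is closed, and its open complement decomposes as a countable disjoint union of intervals $E_i$, proving the topological part of (a). The conservation law \eqref{5.1.1} restricted to $u(x,t)=\phi(x-ct)$ reduces to $D_z(\mathrm{bracket})=0$ in $\mathcal{D}'(\R)$, where the bracket is a well-defined distribution because $\phi\in L^\infty_{loc}$ and $\phi'\in L^2_{loc}$. A distribution on $\R$ with vanishing derivative is a constant, so the bracket equals a single $A\in\R$; this is precisely the universal constant of (b) and (c).

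On each interval $E_i$ the coefficient $\epsilon^2(\tilde c-\phi)$ of $\phi''$ in \eqref{5.1.2} is non-zero, so solving for $\phi''$ and bootstrapping yields $\phi\in C^\infty(E_i)$, completing (a). Multiplying the now classical equation by $\phi'$ and integrating on $E_i$ produces the quadrature \eqref{5.4.1} with a local constant $B_i$ (the $b_i$ of the statement). Continuity of $\phi$ together with $\partial E_i\subset C$ forces $\phi\to\tilde c$ at every finite endpoint of $E_i$, establishing (b). For (c), positivity of $|C|$ forces $\phi'=0$ almost everywhere on $C$ by the standard Sobolev level-set lemma, and pairing the distributional form of \eqref{ast} against test functions concentrated near density points of $C$ kills all derivative terms and yields the stated algebraic identity. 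For (d), set $\psi:=(\phi-\tilde c)^2$; the quadrature gives $(\psi')^2=-4(\phi-\tilde c)P(\phi)/\epsilon^2$, which is bounded in $\phi$, so $\psi\in W^{1,\infty}_{loc}$. Computing $\psi''/2=(\phi')^2+(\phi-\tilde c)\phi''$ and using \eqref{ast} to substitute for $\phi''$ cancels the formal pole at $\phi=\tilde c$ between $(\phi')^2$ and $(\phi-\tilde c)\phi''$, leaving a locally bounded expression, hence $\psi\in W^{2,1}_{loc}$.

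The converse direction builds a distributional solution from (a)--(d). On each $E_i$, I reverse the derivation: differentiating \eqref{5.4.1} recovers the classical ODE \eqref{5.1.2} with the prescribed $A$; on $C$ the relation (c) together with the vanishing of derivatives makes \eqref{5.1.2} a pointwise algebraic equality. The delicate step, and the one I expect to require the most care, is gluing these two pieces into a single distributional identity on all of $\R$. This is precisely what (d) is designed to enable: the $W^{2,1}_{loc}$ regularity of $(\phi-\tilde c)^2$ controls the nonlinear combinations $\phi\phi''$, $(\phi')^2$ and $\Gamma\phi''$ appearing in \eqref{5.1.2} across the contact points $\partial E_i$ so that no jump distribution (Dirac masses or their derivatives) arises when integrating against test functions. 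Rewriting the full distributional pairing in terms of $\psi$ and integrating by parts twice on each $E_i$, the boundary contributions cancel because $\psi\in W^{2,1}_{loc}$, and the argument then reduces to the already verified smooth identity on $E_i$ plus the algebraic identity on $C$. This patching across $\partial E_i$ is the main obstacle I anticipate, since on $E_i$ and on the interior of $C$ the computations are routine.
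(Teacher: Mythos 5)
Your proposal is correct and follows essentially the same route as the paper: you rewrite the travelling-wave equation so that the problematic terms assemble into $[\epsilon^2(\phi-\tilde c)^2]''$ (the paper's identity \eqref{5.4.2}), obtain smoothness of $\phi$ off the level set $\phi=\tilde c$ (where the paper cites Lenells' Lemma 1 you bootstrap by hand), derive the quadrature on each $E_i$ by multiplying by $\phi'$ and integrating, get (c) from $\phi'=0$ and $[(\phi-\tilde c)^2]''=0$ a.e.\ on $C$, and glue the pieces in the converse exactly through the $W^{2,1}_{loc}$ regularity of item (d), which is the paper's mechanism as well. The only caveat is in your argument for (d): after substituting for $\phi''$ the resulting expression is $\tfrac12(\phi')^2$ plus polynomial terms, which is locally \emph{integrable} but not locally bounded (and the intermediate $W^{1,\infty}_{loc}$ claim would require uniform control of the constants $B_i$ on compact sets, which is unnecessary); since $W^{2,1}_{loc}$ only needs local integrability of the distributional second derivative — which is how the paper argues, reading \eqref{5.4.2} as a global distributional identity with $L^1_{loc}$ right-hand side — your conclusion stands.
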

\begin{proof}
First observe that
\begin{align*}
    \left[\epsilon^2\left(\phi-\tilde{c}\right)^2\right]'' = 2\epsilon^2(\phi')^2 + 2\epsilon^2\phi\phi'' - 2(\epsilon^2c-\Gamma)\phi'',
\end{align*}
and then equation \eqref{ast} can be written as 
\begin{align}\label{5.4.2}
    2A + \left[\epsilon^2\left(\phi-\tilde{c}\right)^2\right]'' = \epsilon^2(\phi')^2- 5\gamma\phi^4 -4 \beta\phi^3+3\phi^2 - 2(\alpha+c)\phi.
\end{align}
From \cite{len} (Lemma 1, page 404), taking $v=\phi-c$ and $p(v) =\ds{ -\f{1}{\epsilon^2}(5\gamma\phi^4 + 4\beta\phi^3-3\phi^2 + 2(\alpha+c)\phi-2A)}$ we conclude that $\phi$ is smooth with the exception of points $z_0\in\R$ where $\phi(z_0)=\tilde{c}$.

Using continuity of $\phi$, the set $C:=\phi^{-1}(\tilde{c})$ is closed and, therefore, there are disjoint open sets $E_i$, $i\geq1$, such that $\R\setminus C = \bigcup\limits_{i=1}^{\infty}E_i$, $\phi\in C^{\infty}(E_i)$ and
    \begin{align*}
        \begin{cases}
        \phi(z) = \tilde{c},& z\in C,\\
        \phi(z) \neq \tilde{c},& z\in\R\setminus C,
        \end{cases}
    \end{align*}
finishing the proof of item (a).

Fixing $i\geq 1$, consider the set $E_i$ and observe that \eqref{5.4.2} holds pointwise in $E_i$. After multiplying \eqref{5.4.2} by $\phi'$ and integrating, we obtain a constant $B_i$ such that
\begin{align}\label{5.4.3}
    (\phi')^2 = \frac{1}{\epsilon^2}\frac{\gamma\phi^5 + \beta\phi^4-\phi^3 + (c+\alpha)\phi^2+2A\phi+B_i}{\tilde{c} - \phi}
\end{align}
on $E_i$ and the proof of item (b) is complete.

For the item (d), observe that the RHS of \eqref{5.4.2} is locally integrable, which means that $(\left(\phi-\tilde{c}\right)^2)''$ is also locally integrable and $\left(\phi-\tilde{c}\right)^2 \in W^{2,1}_{loc}(\R).$

Assume that the Lebesgue measure of $C$ is not zero. Since $\phi\in H^1_{loc}(\R)$ and $(\phi-\tilde{c})^2\in W^{2,1}_{loc}(\R)$, from Lemmas 1 and 2 (page 405) of \cite{len} we have that $$\phi'=0,\quad [(\phi-\tilde{c})^2]''=0,\quad \text{a.e on}\,\,C.$$ Furthermore, since \eqref{5.4.2} holds on $\R$ and $\phi\equiv \tilde{c}$ on $C$ we have 
\begin{align*}
    2A = -5\gamma\tilde{c}^4 - 4\beta \tilde{c}^3 + 3\tilde{c}^2-2(\alpha+c)\tilde{c}\quad \text{on}\,\, C
\end{align*}
and item (c) is proven.

Conversely, suppose $\phi$ satisfies (a)-(d). Let $C$, $E_i$ ($i\geq 1)$ be as in (a) and $A$ as in (b)-(c). Differentiation of \eqref{5.4.2} shows that \eqref{5.4.3} holds on $\R\setminus C$. If the Lebesgue measure of $C$ is zero, then we have that \eqref{5.4.3} holds a.e on $\R$. Now using that $(\phi-\tilde{c})^2\in W^{2,1}_{loc}(\R)$ we conclude that $\phi$ is a solution of \eqref{1.0.4} subject to the scalings $\gamma\mapsto 10\gamma$ and $\beta\mapsto 6\beta$.

Now assume the Lebesgue measure of $C$ is not zero. Then using Lemmas 1 and 2 (page 405) of \cite{len}, we have that $\phi'=0$ and $[(\phi-\tilde{c})^2]''=0$ a.e on $C$ and, joining these conditions with item (c) we conclude that \eqref{5.4.2} holds a.e on $C$ and, therefore, $\phi$ is a solution of \eqref{1.0.4} subject to the scalings $\gamma\mapsto 10\gamma$ and $\beta\mapsto 6\beta$.
\end{proof}
%Our start point in this part is equation (\ref{5.1.4}). We have the following:
%\begin{theorem}\label{teo4.1}
%Classification of weak waves.
%\end{theorem}

%Consider equation \eqref{1.0.4} with $\Gamma \neq 0$ and suppose $u(x,t) =\phi(x-ct)$ is a travelling wave solution for \eqref{1.0.4}, where $c>0$ denotes the wave speed. The equation is rewritten as
%\begin{align*}
%    \epsilon^2 c \phi''' - c\phi' = (\alpha + 3\phi +\beta \phi^2+\gamma\phi^3)\phi' +\Gamma \phi'''+\epsilon^2\phi\phi''' + 2\epsilon^2\phi'\phi''
%\end{align*}
%After a direct integration, we can multiply the resulting equation by $\phi'$ and integrate a second time to obtain
%\begin{align}\label{quadrature1}
%    (\epsilon^2c-\Gamma -\epsilon^2\phi)(\phi')^2 = P(\phi),
%\end{align}
%where $P(\phi) = \frac{\gamma}{10}\phi^5 + \frac{\beta}{6}\phi^4-\phi^3 + (c+\alpha)\phi^2+A\phi+B,$ and $A$ and $B$ denote integration constants. 

Similarly to the case $\epsilon=0$, we will once again analyse the number of zeros of the polynomial $P(\phi)$ and its sign. However, since the quadrature here is given by
$$(\phi')^2 = \f{P(\phi)}{\epsilon^2(\tilde{c}-\phi)} =: F(\phi),$$
we are obligated to consider the sign of $\tilde{c}-\phi$ and its implications on the sign of $F(\phi)$. As mentioned previously, the case $(\gamma,\beta)=(0,0)$, which has its classification given in \cite{raspa}, will not be considered in this paper.
 
 The first case here considered will be $\gamma\neq 0$. Firstly, assume that $P(\phi)$ has only one real zero $r$ so it can be written as
 $$P(\phi) = \gamma(\phi-r)|\phi-z_0|^2|\phi-z_1|^2,$$
 where $z_0,z_1$ denote the complex zeros of $P(\phi)$. We then observe that if $r<\phi<\tilde{c}$ or $\tilde{c}<\phi<r$, then no matter the sign of $\gamma$, no bounded solutions will exist.
 
 Now assume $P(\phi)$ has three real zeros $r_1\leq r_2\leq r_3$ and a complex zero $z_0$ satisfying \eqref{5.3.1} so that we can write
 $$F(\phi)=\f{\gamma}{\epsilon^2}\f{(\phi-r_1)(\phi-r_2)(\phi-r_3)\vert \phi-z_0\vert^2}{\tilde{c}-\phi}.$$
The calculations for the existence of smooth solutions are quite similar to the ones presented in the evolutive case $\epsilon=0$. For example, if $r_1\leq\phi\leq r_2<r_3$, then
$$F(\phi)>0 \quad\text{if and only if}\quad \begin{cases}\gamma>0 \quad\text{and}\quad \tilde{c}>r_2,\\ \gamma<0 \quad\text{and}\quad \tilde{c}<r_1.\end{cases}$$
Therefore, a smooth periodic travelling wave solution $\phi$ will exist in this case with $r_1=\min\limits_{z\in\R}\phi(z)$ and $r_2=\max\limits_{z\in\R}\phi(z)$ if $\gamma>0$ and $\tilde{c}>r_2$ or $\gamma<0$ and $\tilde{c}<r_1$. The remaining cases of smooth solutions for the case of three real zeros are proven similarly and the proof will be omitted.

We shall consider now the existence of weak solutions. Peakon solutions will exist whenever the pole is removed and $F(\phi)>0$. Firstly, assume all three zeros are simple.
\begin{itemize}
	\item If $\tilde{c}=r_1\leq\phi\leq r_2<r_3$, then the pole of $F(\phi)$ is removed and $F(\phi)>0$ if and only if $\gamma<0$. For these choices, there will exist a periodic peakon solution $\phi$ with $r_1=\min\limits_{z\in\R} \phi(z)$ and $r_2=\max\limits_{z\in\R} \phi(z)$;
	
	\item If $r_1\leq \phi\leq r_2=\tilde{c}<r_3$, then $F(\phi)>0$ if and only if $\gamma>0$, and there will be a periodic peakon solution $\phi$ with $r_1=\min\limits_{z\in\R} \phi(z)$ and $r_2=\max\limits_{z\in\R} \phi(z)$;
	
	\item If $r_1<\tilde{c}=r_2\leq\phi\leq r_3$, then $F(\phi)>0$ if and only if $\gamma>0$, and there will exist a periodic peakon solution $\phi$ with $r_2=\min\limits_{z\in\R} \phi(z)$ and $r_3=\max\limits_{z\in\R} \phi(z)$;
	
	\item Finally, if $r_1<r_2\leq \phi\leq r_3=\tilde{c}$, then $F(\phi)>0$ if and only if $\gamma<0$, and there will exist a periodic peakon solution $\phi$ with $r_2=\min\limits_{z\in\R} \phi(z)$ and $r_3=\max\limits_{z\in\R} \phi(z)$, completing the case of periodic peaked solutions.
\end{itemize}
The case of peakon solutions with decay can be proven under the condition that one of the zeros is double:
\begin{itemize}
	\item If $\tilde{c}=r_1\leq\phi<r_2=r_3$, then $F(\phi)>0$ if and only if $\gamma<0$ and there will exist a peakon solution $\phi$ with $r_1=\min\limits_{z\in\R} \phi(z)$, $r_2=\sup\limits_{z\in\R} \phi(z)$ and $\phi\uparrow r_2$ as $z\to\pm\infty$;
	
	\item If $r_1=r_2<\phi\leq r_3=\tilde{c}$, then $F(\phi)>0$ if and only if $\gamma<0$ and there will exist a peakon solution $\phi$ with $r_1=\inf\limits_{z\in\R} \phi(z)$, $r_3=\max\limits_{z\in\R} \phi(z)$ and $\phi\downarrow r_1$ as $z\to\pm\infty$.
\end{itemize}

For cuspon solutions, we must assume that $\phi$ reaches the value $\tilde{c}$ (which is its maximum or minimum at the cusp) so that $\phi'$ can blow-up at this point and the pole is not removed:
\begin{itemize}
	\item If $r_1\leq\phi\leq \tilde{c}<r_2<r_3$, then $F(\phi)>0$ if and only if $\gamma>0$, and there will exist a periodic cusped solution $\phi$ with $r_1=\min\limits_{z\in\R} \phi(z)$ and $\tilde{c}=\max\limits_{z\in\R} \phi(z)$;
	
	\item If $r_1<\tilde{c}\leq\phi\leq r_2<r_3$, then $F(\phi)>0$ if and only if $\gamma<0$, and there will exist a periodic cuspon solution $\phi$ with $\tilde{c}=\min\limits_{z\in\R} \phi(z)$ and $r_2=\max\limits_{z\in\R} \phi(z)$;
	\item If $r_1<r_2\leq \phi\leq \tilde{c}<r_3$, then $F(\phi)>0$ if and only if $\gamma<0$, and there will exist a periodic cuspon solution $\phi$ with $r_2=\min\limits_{z\in\R} \phi(z)$ and $\tilde{c}=\max\limits_{z\in\R} \phi(z)$;
	
	\item If $r_1<r_2<\tilde{c}\leq \phi\leq r_3$, then $F(\phi)>0$ if and only if $\gamma>0$, and there will exist a periodic cuspon solution $\phi$ with $\tilde{c}=\min\limits_{z\in\R} \phi(z)$ and $r_3=\max\limits_{z\in\R} \phi(z)$.
\end{itemize} 
Similarly to peakon solutions with decay, cuspon solutions with decay will occur when one of the zeros is double:
\begin{itemize}
	\item If $r_1=r_2<\phi\leq\tilde{c}<r_3$, then $F(\phi)>0$ if and only if $\gamma<0$, and there will be a cuspon solution $\phi$ with $r_2=\inf\limits_{z\in\R} \phi(z)$, $\tilde{c}=\max\limits_{z\in\R} \phi(z)$ and $\phi\downarrow r_2$ as $z\to\pm\infty$;
	
	\item If $r_1<\tilde{c}\leq\phi<r_2=r_3$, then $F(\phi)>0$ if and only if $\gamma<0$ and there will be a cuspon solution $\phi$ with $\tilde{c}=\min\limits_{z\in\R} \phi(z)$, $r_2=\sup\limits_{z\in\R} \phi(z)$ and $\phi\uparrow r_2$ as $z\to\pm\infty$.
\end{itemize}
The next theorem summarizes the discussion presented above.

\begin{theorem}\label{teo5.5}{\textbf{(Case $\gamma\neq0$ with three real zeros)}}
Let $\gamma, \epsilon \neq 0$, $\tilde{c} = \ds{\f{\epsilon^2 c - \Gamma}{\epsilon^2}}$ and $r_1\leq r_2\leq r_3\in\R$ satisfy \eqref{5.3.1}. Then
\begin{enumerate}
    \item smooth periodic travelling wave solutions $\phi$ will exist if $r_1<r_2<r_3$,
    \begin{enumerate}
        \item $\gamma> 0$ and $\tilde{c}>r_2$, with $r_1=\min\limits_{z\in\R}\phi(z)$ and $r_2=\max\limits_{z\in\R}\phi(z)$;
        
        \item  $\gamma> 0$ and $\tilde{c}<r_2$, with $r_2=\min\limits_{z\in\R}\phi(z)$ and $r_3=\max\limits_{z\in\R}\phi(z)$;
        
        \item $\gamma<0$ and $\tilde{c}<r_1$, with $r_1=\min\limits_{z\in\R}\phi(z)$ and $r_2=\max\limits_{z\in\R}\phi(z)$;
        
        \item $\gamma<0$ and $\tilde{c}>r_3$, with $r_2=\min\limits_{z\in\R}\phi(z)$ and $r_3=\max\limits_{z\in\R}\phi(z)$.
    \end{enumerate}

    \item smooth solutions $\phi$ with horizontal asymptotes will exist if
    \begin{enumerate}
        \item $r_1=r_2<r_3$, $\gamma> 0$ and $\tilde{c}<r_2$, with $r_2=\inf\limits_{z\in\R}\phi(z)$ and $r_3=\max\limits_{z\in\R}\phi(z)$;
        
        \item $r_1<r_2=r_3$, $\gamma> 0$ and $\tilde{c} > r_2$, with $r_1=\min\limits_{z\in\R}\phi(z)$ and $r_2=\sup\limits_{z\in\R}\phi(z)$;
        
        \item $r_1=r_2<r_3$, $\gamma<0$ and $\tilde{c}>r_3$, with $r_2=\inf\limits_{z\in\R}\phi(z)$ and $r_3=\max\limits_{z\in\R}\phi(z)$;
        
        \item $r_1<r_2=r_3$, $\gamma<0$ and $\tilde{c}<r_1$, with $r_1=\min\limits_{z\in\R}\phi(z)$ and $r_2=\sup\limits_{z\in\R}\phi(z)$.
    \end{enumerate}

    \item periodic peaked travelling wave solutions $\phi$ will exist if
    \begin{enumerate}
        \item $r_1<r_2=\tilde{c}<r_3$ and $\gamma> 0$, with $r_1=\min\limits_{z\in\R}\phi(z)$ and $r_2=\max\limits_{z\in\R}\phi(z)$;
        
        \item $r_1<\tilde{c}=r_2<r_3$ and $\gamma> 0$, with $r_2=\min\limits_{z\in\R}\phi(z)$ and $r_3=\max\limits_{z\in\R}\phi(z)$;
        
        \item $\tilde{c}=r_1<r_2<r_3$ and $\gamma<0$, with $r_1=\min\limits_{z\in\R}\phi(z)$ and $r_2=\max\limits_{z\in\R}\phi(z)$;
        
        \item $r_1<r_2<r_3=\tilde{c}$ and $\gamma<0$, with $r_2=\min\limits_{z\in\R}\phi(z)$ and $r_3=\max\limits_{z\in\R}\phi(z)$.
    \end{enumerate}
    
    \item peaked travelling wave solutions $\phi$ with decay will exist if
    \begin{enumerate}
        \item $\tilde{c}=r_1<r_2=r_3$ and $\gamma<0$, with $r_1=\min\limits_{z\in\R}\phi(z)$, $r_2=\sup\limits_{z\in\R}\phi(z)$ and $\phi\uparrow r_2$ as $z\to\pm\infty$;
        
        \item $r_1=r_2<r_3=\tilde{c}$ and $\gamma<0$, with $r_2=\inf\limits_{z\in\R}\phi(z)$, $r_3=\max\limits_{z\in\R}\phi(z)$ and $\phi \downarrow r_2$ as $z\to\pm\infty$.
        \end{enumerate}
    
    \item periodic cusped travelling wave solutions $\phi$ will exist if 
    \begin{enumerate}
        \item $r_1<\tilde{c}<r_2<r_3$ and $\gamma> 0$, with $r_1=\min\limits_{z\in\R}\phi(z)$ and $\tilde{c}=\max\limits_{z\in\R}\phi(z)$;
        
        \item $r_1<r_2<\tilde{c}<r_3$ and $\gamma> 0$, with $\tilde{c}=\min\limits_{z\in\R}\phi(z)$ and $r_3=\max\limits_{z\in\R}\phi(z)$.
        
        \item $r_1<\tilde{c}<r_2<r_3$ and $\gamma<0$, with $\tilde{c}=\min\limits_{z\in\R}\phi(z)$ and $r_2=\max\limits_{z\in\R}\phi(z)$;
        
        \item $r_1<r_2<\tilde{c}<r_3$ and $\gamma<0$, with $r_2=\min\limits_{z\in\R}\phi(z)$ and $\tilde{c}=\max\limits_{z\in\R}\phi(z)$.
    \end{enumerate}
    
    \item cusped travelling wave solutions $\phi$ with decay will exist if
    \begin{enumerate}
        \item $r_1<\tilde{c}<r_2=r_3$ and $\gamma<0$, with $\tilde{c}=\min\limits_{z\in\R}\phi(z)$, $r_2=\sup\limits_{z\in\R}\phi(z)$ and $\phi\uparrow r_2$ as $z\to\pm\infty$;
        \item $r_1=r_2<\tilde{c}<r_3$ and $\gamma<0$, with $r_2=\inf\limits_{z\in\R}\phi(z)$,
        $\tilde{c}=\max\limits_{z\in\R}\phi(z)$ and $\phi\downarrow r_2$ as $z\to\pm\infty$.
    \end{enumerate}
\end{enumerate}
\end{theorem}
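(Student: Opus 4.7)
The plan is to analyse case by case the sign of $F(\phi) = P(\phi)/[\epsilon^2(\tilde{c}-\phi)]$ obtained from Lemma \ref{lema5.1} together with the factorization
$$P(\phi) = \gamma(\phi-r_1)(\phi-r_2)(\phi-r_3)|\phi-z_0|^2$$
coming from the compatibility conditions \eqref{5.3.1}--\eqref{5.3.2}. Since $|\phi-z_0|^2 > 0$ (the complex root is non-real), the sign of $F$ on an interval between consecutive real numbers in $\{r_1, r_2, r_3, \tilde{c}\}$ is determined entirely by the sign of $\gamma$ and by the relative position of $\tilde{c}$ with respect to the $r_i$. Part (b) of Lemma \ref{lema5.1} requires $(\phi')^2 = F(\phi)\geq 0$ on the range of $\phi$, and $\phi$ must take values in a connected component of $\{F\geq 0\}$ that is bounded; this immediately rules out any choice of interval with $\tilde{c}$ strictly between two of the $r_i$ that would force $\phi \equiv \tilde{c}$ on a set of positive measure without satisfying item (c).

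First I would treat the smooth cases. If all three $r_i$ are simple and distinct and $\tilde{c}$ lies outside the closed interval $[r_j, r_{j+1}]$ where $F\geq 0$, then $F$ is smooth on this interval, $F(r_j) = F(r_{j+1}) = 0$ with simple zeros, so the ODE $(\phi')^2 = F(\phi)$ admits a smooth periodic solution oscillating between these values (this is the standard argument reviewed in Subsection \ref{subsec5.2}, items 1--2). When two adjacent $r_i$ coincide with $\tilde{c}$ still outside, one of the endpoints becomes a double zero and the standard phase-plane analysis gives a smooth heteroclinic/homoclinic orbit with the corresponding horizontal asymptote, yielding items 1 and 2 of the statement with the sign of $\gamma$ and the position of $\tilde{c}$ determined precisely by requiring $F\geq 0$ inside the interval.

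Next I would attack the weak cases, where one endpoint of the admissible interval coincides with $\tilde{c}$. If $\tilde{c}$ equals some $r_j$, the simple pole at $\phi=\tilde{c}$ is cancelled by the simple zero and $F$ extends continuously, with $F(\tilde{c})\neq 0$; thus $\phi'$ has a nonzero finite limit at $\tilde{c}$ but jumps sign (since $\phi$ must turn around), producing a peakon in the sense of Subsection \ref{subsec5.2}(4). The period/decay distinction depends on whether the other boundary zero is simple (periodic peakon) or double (peakon with exponential decay to a horizontal asymptote). When $\tilde{c}$ is \emph{not} a zero of $P$ but $\phi$ is forced to reach $\tilde{c}$ as the endpoint of $\{F\geq 0\}$, the pole survives: $F(\phi)\sim C/(\tilde{c}-\phi)$ as $\phi\to\tilde{c}$ with $C\neq 0$, so $\phi'\to\pm\infty$, producing a cuspon, again periodic or with decay according to whether the other endpoint is a simple or double zero of $P$. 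In each sub-case the sign of $\gamma$ and the inequality constraints on $\tilde{c}$ are read off by demanding that $F$ stay non-negative on the chosen interval.

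The main obstacle is bookkeeping: there are many sub-cases (all three simple, one double in several positions, crossed against the four possible positions of $\tilde{c}$ relative to $r_1,r_2,r_3$), and one must consistently enforce (i) positivity of $F$, (ii) the hypothesis of Lemma \ref{lema5.1}(c) whenever $\phi$ would identically equal $\tilde{c}$ on a set of positive measure (which we avoid by requiring $\tilde{c}$ to lie at an endpoint, not strictly inside), and (iii) the $H^1_{\mathrm{loc}}$ regularity in the cuspon case, which follows from $(\phi-\tilde{c})^2\in W^{2,1}_{\mathrm{loc}}$ (item (d) of Lemma \ref{lema5.1}) once we verify that $\phi'\in L^2_{\mathrm{loc}}$ near the cusp via the integrable singularity $(\tilde{c}-\phi)^{-1}$. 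The actual ODE integration is standard; the content of the theorem is the precise enumeration, and the proof reduces to executing the sign analysis systematically.
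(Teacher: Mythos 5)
Your proposal is correct and follows essentially the same route as the paper: the quadrature from Lemma \ref{lema5.1}, the factorization of $P$ under the compatibility conditions \eqref{5.3.1}--\eqref{5.3.2}, and a systematic sign analysis of $F(\phi)=P(\phi)/[\epsilon^2(\tilde c-\phi)]$ combined with the zero/pole classification of Subsection \ref{subsec5.2} (simple zeros giving periodic waves, double zeros giving horizontal asymptotes, a removed pole giving peakons, a surviving pole reached by $\phi$ giving cuspons). The paper's own proof is exactly this bookkeeping, carried out explicitly for one smooth sub-case and for the peakon/cuspon sub-cases, with the remaining smooth cases omitted as analogous.
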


To finish the case where $\gamma\neq 0$, we present the existence classification for when $P(\phi)$ has all five zeros $r_1\leq r_2\leq r_3\leq r_4\leq r_5\in \R$ satisfying \eqref{5.3.3} so that $$F(\phi) = \f{\gamma}{\epsilon^2}\f{(\phi-r_2)(\phi-r_2)(\phi-r_3)(\phi-r_4)(\phi-r_5)}{\tilde{c}-\phi}.$$ We will omit the proof due to the exhaustive use of the same arguments of Theorem \ref{teo5.5} (and all other previous theorems in this section) that lead to a quite long and repetitive proof. We decide, however, to separate the classification of this case in two different but complementary results: $\gamma>0$ and $\gamma<0$, respectively.

\begin{theorem}\label{teo5.6}{\textbf{(Case $\gamma> 0$ with five real zeros)}}
Let $\gamma >0$, $\epsilon \neq 0$, $\tilde{c} = \ds{\f{\epsilon^2 c - \Gamma}{\epsilon^2}}$ and $r_1\leq r_2\leq r_3\leq r_4\leq r_5\in\R$ satisfy \eqref{5.3.3}. Then
\begin{enumerate}
    \item smooth periodic travelling wave solutions $\phi$ will exist if $r_1<r_2<r_3<r_4<r_5$ and
    \begin{enumerate}
        \item  $\tilde{c}>r_2$, with $r_1=\min\limits_{z\in\R}\phi(z)$ and $r_2=\max\limits_{z\in\R}\phi(z)$;

        \item $\tilde{c}<r_2$, with $r_2=\min\limits_{z\in\R}\phi(z)$ and $r_3=\max\limits_{z\in\R}\phi(z)$;

        \item $\tilde{c}>r_4$, with $r_3=\min\limits_{z\in\R}\phi(z)$ and $r_4=\max\limits_{z\in\R}\phi(z)$;

        \item $\tilde{c}<r_4$, with $r_4=\min\limits_{z\in\R}\phi(z)$ and $r_5=\max\limits_{z\in\R}\phi(z)$.
    \end{enumerate}
    \item smooth travelling wave solutions $\phi$ with horizontal asymptotes will exist if
    \begin{enumerate}
        \item $r_1=r_2<r_3<r_4<r_5$ and $\tilde{c}<r_2$, with $r_2=\inf\limits_{z\in\R}\phi(z)$, $r_3=\max\limits_{z\in\R}\phi(z)$ and $\phi\downarrow r_2$ as $z\to\pm\infty$;
        \item $r_1<r_2=r_3<r_4<r_5$ and $\tilde{c}>r_3$, with $r_1=\min\limits_{z\in\R}\phi(z)$, $r_2=\sup\limits_{z\in\R}\phi(z)$ and $\phi\uparrow r_2$ as $z\to\pm\infty$;
        \item $r_1<r_2=r_3<r_4<r_5$ and $\tilde{c}>r_4$, with $r_3=\inf\limits_{z\in\R}\phi(z)$, $r_4=\max\limits_{z\in\R}\phi(z)$ and $\phi\downarrow r_3$ as $z\to\pm\infty$;
        \item $r_1<r_2<r_3<r_4=r_5$ and $\tilde{c}>r_4$, with $r_3=\min\limits_{z\in\R}\phi(z)$, $r_4=\sup\limits_{z\in\R}\phi(z)$ and $\phi\uparrow r_4$ as $z\to\pm\infty$;
        \item $r_1<r_2<r_3=r_4<r_5$ and $\tilde{c}<r_2$, with $r_2=\min\limits_{z\in\R}\phi(z)$, $r_3=\sup\limits_{z\in\R}\phi(z)$ and $\phi\uparrow r_3$ as $z\to \pm\infty$;
        \item $r_1<r_2<r_3=r_4<r_5$ and $\tilde{c}<r_4$, with $r_4=\inf\limits_{z\in\R}\phi(z)$, $r_5=\max\limits_{z\in\R}\phi(z)$ and $\phi\downarrow r_4$ as $z\to\pm\infty$;
        \item $r_1=r_2<r_3=r_4<r_5$ and $\tilde{c}<r_2$, with
        $r_2=\inf\limits_{z\in\R}\phi(z)$, $r_3=\sup\limits_{z\in\R}\phi(z)$ and $\phi\downarrow r_2$, $\phi\uparrow r_3$ as $z\to\pm\infty$;
        \item $r_1<r_2=r_3<r_4=r_5$ and $\tilde{c}>r_4$, with
        $r_3=\inf\limits_{z\in\R}\phi(z)$, $r_4=\sup\limits_{z\in\R}\phi(z)$ and $\phi\downarrow r_3$, $\phi\uparrow r_4$ as $z\to\pm\infty$.
    \end{enumerate}
    \item periodic peaked travelling wave solutions $\phi$ will exist if
    \begin{enumerate}
        \item $r_1<r_2=\tilde{c}<r_3<r_4<r_5$, with $r_1=\min\limits_{z\in\R}\phi(z)$ and $r_2=\max\limits_{z\in\R}\phi(z)$;
        
        \item $r_1<\tilde{c}=r_2<r_3<r_4<r_5$, with $r_2=\min\limits_{z\in\R}\phi(z)$ and $r_3=\max\limits_{z\in\R}\phi(z)$;
        
        \item $r_1<r_2<r_3<r_4=\tilde{c}<r_5$, with $r_3=\min\limits_{z\in\R}\phi(z)$ and $r_4=\max\limits_{z\in\R}\phi(z)$;
        
        \item $r_1<r_2<r_3<\tilde{c}=r_4<r_5$, with $r_4=\min\limits_{z\in\R}\phi(z)$ and $r_5=\max\limits_{z\in\R}\phi(z)$.
    \end{enumerate}
    \item peaked travelling wave solutions $\phi$ with decay on infinity will exist if
    \begin{enumerate}
        \item $r_1<\tilde{c}=r_2<r_3=r_4<r_5$, with $r_2=\min\limits_{z\in\R}\phi(z)$, $r_3=\sup\limits_{z\in\R}\phi(z)$ and $\phi\uparrow r_3$ as $z\to\pm\infty$;
        
        \item $r_1<r_2=r_3<r_4=\tilde{c}<r_5$, with $r_3=\inf\limits_{z\in\R}\phi(z)$, $r_4=\max\limits_{z\in\R}\phi(z)$ and $\phi\downarrow r_3$ as $z\to\infty$.
    \end{enumerate}
    \item periodic cusped travelling wave solutions $\phi$ will exist if
    \begin{enumerate}
        \item $r_1<\tilde{c}<r_2<r_3<r_4<r_5$, with $r_1=\min\limits_{z\in\R}\phi(z)$ and $\tilde{c}=\max\limits_{z\in\R}\phi(z)$;
        
        \item $r_1<r_2<\tilde{c}<r_3<r_4<r_5$, with $\tilde{c}=\min\limits_{z\in\R}\phi(z)$ and $r_3=\max\limits_{z\in\R}\phi(z)$;
        
        \item $r_1<r_2<r_3<\tilde{c}<r_4<r_5$, with $r_3=\min\limits_{z\in\R}\phi(z)$ and $\tilde{c}=\max\limits_{z\in\R}\phi(z)$;
        
        \item $r_1<r_2<r_3<r_4<\tilde{c}<r_5$, with $\tilde{c}=\min\limits_{z\in\R}\phi(z)$ and $r_5=\max\limits_{z\in\R}\phi(z)$.
    \end{enumerate}
    \item cusped travelling wave solutions $\phi$ with decay on infinity will exist if
    \begin{enumerate}
        \item $r_1<r_2<\tilde{c}<r_3=r_4<r_5$, with $\tilde{c}=\min\limits_{z\in\R}\phi(z)$, $r_3=\sup\limits_{z\in\R}\phi(z)$ and $\phi\uparrow r_3$ as $z\to\pm\infty$;
        
        \item $r_1<r_2=r_3<\tilde{c}<r_4<r_5$, with $r_3=\inf\limits_{z\in\R}\phi(z)$, $\tilde{c}=\max\limits_{z\in\R}\phi(z)$ and $\phi\downarrow r_3$ as $z\to\infty$.
    \end{enumerate}
\end{enumerate}
\end{theorem}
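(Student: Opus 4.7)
The strategy is exactly the one applied in the proof sketch preceding Theorem \ref{teo5.5}: analyse the sign of
$$F(\phi) = \f{\gamma}{\epsilon^2}\f{(\phi-r_1)(\phi-r_2)(\phi-r_3)(\phi-r_4)(\phi-r_5)}{\tilde{c}-\phi}$$
on each open interval bounded by consecutive zeros of $P$, and determine, via the criteria listed in Subsection \ref{subsec5.2} together with Lemma \ref{lema5.1}, which type of bounded travelling wave each configuration produces. Since $\gamma>0$ and $P$ has degree five, the sign of $P$ alternates and is positive on $(r_1,r_2)$, $(r_3,r_4)$ and $(r_5,+\infty)$, and negative on the complementary open intervals. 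Consequently, the requirement $F(\phi)>0$ throughout the range of $\phi$ is equivalent to $\tilde{c}>\phi$ whenever the range sits in an interval where $P>0$, and to $\tilde{c}<\phi$ whenever it sits in an interval where $P<0$. This yields the fourfold dichotomy $\tilde{c}>r_2$ on $(r_1,r_2)$, $\tilde{c}<r_2$ on $(r_2,r_3)$, $\tilde{c}>r_4$ on $(r_3,r_4)$, $\tilde{c}<r_4$ on $(r_4,r_5)$, which is precisely the placement of $\tilde{c}$ appearing throughout items (1) and (2).

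I would then run through the possible multiplicity patterns of $r_1\leq r_2\leq r_3\leq r_4\leq r_5$ and positions of $\tilde{c}$ relative to them. When all five zeros are simple and $\tilde{c}$ lies strictly outside the relevant interval of positivity of $P$, the sign analysis combined with criterion (2) of Subsection \ref{subsec5.2} yields a smooth periodic solution, exhausting item (1). Promoting one of the two interior-boundary zeros to a double zero and keeping $\tilde{c}$ outside the interval produces, by criterion (3), a smooth solution with a horizontal asymptote at the double zero; when both bounding zeros become double, one obtains the heteroclinic profiles of items (2)(g)--(h).

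For the weak cases one appeals to Lemma \ref{lema5.1}, which localises the loss of smoothness of $\phi$ to the set $\phi^{-1}(\tilde{c})$. If $\tilde{c}$ coincides with a simple zero at the boundary of the range, the pole of $F$ is cancelled by that zero and $\phi'$ stays finite but nonzero at the apex, giving a peakon — periodic when the opposite endpoint is also a simple zero (item (3)) and with decay when the opposite endpoint is a double zero (item (4)). If instead $\tilde{c}$ lies strictly inside the interval of positivity of $P$, then $\phi'\to\pm\infty$ as $\phi\to\tilde{c}$, producing cuspons — periodic when the bounding zeros of the enlarged range are simple (item (5)) and with decay when one of them is double (item (6)). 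Each such sub-case is closed by a one-line verification that the stated hypothesis on $\tilde{c}$ is exactly what makes $F$ positive on the claimed range.

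The main obstacle is not conceptual but combinatorial: the number of sub-cases is large because one has to track simultaneously the multiplicity pattern of $(r_1,\ldots,r_5)$, which interval of positivity of $P$ houses the range of $\phi$, and whether $\tilde{c}$ lies outside that interval, coincides with one of its simple-zero endpoints, or lies strictly inside it. Each individual sub-case is settled by the same short sign argument already used in the proof of Theorem \ref{teo5.5}, so the only real work is the careful enumeration; for this reason the authors themselves announce that the proof will be omitted, and my plan follows the same philosophy — set up the four sign rules above and then tabulate the 24 configurations they produce.
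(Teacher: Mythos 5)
Your proposal is correct and takes essentially the same route as the paper: the authors explicitly omit this proof, stating that it follows by exhaustively repeating the sign analysis of $F(\phi)=P(\phi)/\bigl(\epsilon^2(\tilde{c}-\phi)\bigr)$ used for Theorem \ref{teo5.5}, together with the criteria of Subsection \ref{subsec5.2} and Lemma \ref{lema5.1}, which is precisely your plan. One wording caveat only: cuspons also arise when $\tilde{c}$ sits strictly inside an interval where $P<0$ (e.g. items 5(b), 5(d)), not only inside an interval of positivity of $P$, but your promised case-by-case verification that $F>0$ on the claimed range handles this automatically.
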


We shall now present the version of Theorem \ref{teo5.6} for $\gamma<0$.

\begin{theorem}\label{teo5.7}{\textbf{(Case $\gamma< 0$ with five real zeros)}}\label{teo5.7}
Let $\gamma <0$, $\epsilon \neq 0$, $\tilde{c} = \ds{\f{\epsilon^2 c - \Gamma}{\epsilon^2}}$ and $r_1\leq r_2\leq r_3\leq r_4\leq r_5\in\R$ satisfy \eqref{5.3.3}. Then
\begin{enumerate}
    \item smooth periodic travelling wave solutions $\phi$ will exist if $r_1<r_2<r_3<r_4<r_5$ and
    \begin{enumerate}
        \item $\tilde{c}<r_1$, with $r_1=\min\limits_{z\in\R}\phi(z)$ and $r_2=\max\limits_{z\in\R}\phi(z)$;
        
        \item $\tilde{c}>r_3$, with $r_2=\min\limits_{z\in\R}\phi(z)$ and $r_3=\max\limits_{z\in\R}\phi(z)$;
        
        \item $\tilde{c}<r_3$, with $r_3=\min\limits_{z\in\R}\phi(z)$ and $r_4=\max\limits_{z\in\R}\phi(z)$;
        
        \item $\tilde{c}>r_5$, with $r_4=\min\limits_{z\in\R}\phi(z)$ and $r_5=\max\limits_{z\in\R}\phi(z)$.
    \end{enumerate}
    \item smooth travelling wave solutions $\phi$ with horizontal asymptotes will exist if
    \begin{enumerate}
        \item $r_1=r_2<r_3<r_4<r_5$ and $\tilde{c}>r_3$, with $r_2=\inf\limits_{z\in\R}\phi(z)$, $r_3=\max\limits_{z\in\R}\phi(z)$ and $\phi\downarrow r_2$ as $z\to\pm\infty$;
        
        \item $r_1<r_2=r_3<r_4<r_5$ and $\tilde{c} <r_1$, with $r_1=\min\limits_{z\in\R}\phi(z)$, $r_2=\sup\limits_{z\in\R}\phi(z)$ and $\phi\uparrow r_2$ as $z\to\pm\infty$;
        
        \item $r_1<r_2=r_3<r_4<r_5$ and $\tilde{c} <r_3$, with $r_3=\inf\limits_{z\in\R}\phi(z)$, $r_4=\max\limits_{z\in\R}\phi(z)$ and $\phi\downarrow r_3$ as $z\to \pm\infty$;
        
        \item $r_1<r_2<r_3=r_4<r_5$ and $\tilde{c}>r_3$, with $r_2=\min\limits_{z\in\R}\phi(z)$, $r_3=\sup\limits_{z\in\R}\phi(z)$ and $\phi\uparrow r_3$ as $z\to\pm\infty$;
        
        \item $r_1<r_2<r_3=r_4<r_5$ and $\tilde{c} > r_5$, with $r_4=\inf\limits_{z\in\R}\phi(z)$, $r_5=\max\limits_{z\in\R}\phi(z)$ and $\phi\downarrow r_4$ as $z\to\pm\infty$;
        
        \item $r_1<r_2<r_3<r_4=r_5$ and $\tilde{c}<r_3$, with $r_3=\min\limits_{z\in\R}\phi(z)$, $r_4=\sup\limits_{z\in\R}\phi(z)$ and $\phi\uparrow r_4$ as $z\to\pm\infty$;
        
        \item $r_1=r_2<r_3=r_4<r_5$ and $\tilde{c}>r_3$, with
        $r_2=\inf\limits_{z\in\R}\phi(z)$, $r_3=\sup\limits_{z\in\R}\phi(z)$ and $\phi\downarrow r_2$, $\phi\uparrow r_3$ as $z\to\pm\infty$;
        
        \item $r_1<r_2=r_3<r_4=r_5$ and $\tilde{c}<r_3$, with
        $r_3=\inf\limits_{z\in\R}\phi(z)$, $r_4=\sup\limits_{z\in\R}\phi(z)$ and $\phi\downarrow r_3$, $\phi\uparrow r_4$ as $z\to\pm\infty$.
    \end{enumerate}
    \item periodic peaked travelling wave solutions $\phi$ will exist if
    \begin{enumerate}
        \item $\tilde{c}=r_1<r_2<r_3<r_4<r_5$, with $r_1=\min\limits_{z\in\R}\phi(z)$ and $r_2=\max\limits_{z\in\R}\phi(z)$;
        \item $r_1<r_2<r_3=\tilde{c}<r_4<r_5$, with $r_2=\min\limits_{z\in\R}\phi(z)$ and $r_3=\max\limits_{z\in\R}\phi(z)$;
        \item $r_1<r_2<\tilde{c}=r_3< r_4<r_5$, with $r_3=\min\limits_{z\in\R}\phi(z)$ and $r_4=\max\limits_{z\in\R}\phi(z)$;
        \item $r_1<r_2<r_3<r_4<r_5=\tilde{c}$, with $r_4=\min\limits_{z\in\R}\phi(z)$ and $r_5=\max\limits_{z\in\R}\phi(z)$.
    \end{enumerate}
    \item peaked travelling wave solutions $\phi$ with decay will exist if
    \begin{enumerate}
        \item $\tilde{c}=r_1<r_2=r_3<r_4<r_5$, with $r_1=\min\limits_{z\in\R}\phi(z)$, $r_2=\sup\limits_{z\in\R}\phi(z)$ and $\phi\uparrow r_2$ as $z\to\pm\infty$;
        \item $r_1=r_2<r_3=\tilde{c}<r_4<r_5$, with $r_2=\inf\limits_{z\in\R}\phi(z)$, $r_3=\max\limits_{z\in\R}\phi(z)$ and $\phi\downarrow r_2$ as $z\to\pm\infty$;
        \item $r_1<r_2<\tilde{c}=r_3<r_4=r_5$, with $r_3=\min\limits_{z\in\R}\phi(z)$, $r_4=\sup\limits_{z\in\R}\phi(z)$ and $\phi\uparrow r_4$ as $z\to\pm\infty$;
        \item $r_1<r_2<r_3=r_4<r_5=\tilde{c}$, with $r_4=\inf\limits_{z\in\R}\phi(z)$, $r_5=\max\limits_{z\in\R}\phi(z)$ and $\phi\downarrow r_4$ as $z\to\pm\infty$.
    \end{enumerate}
    \item periodic cusped travelling wave solutions $\phi$ will exist if
    \begin{enumerate}
        \item $r_1<\tilde{c}<r_2<r_3<r_4<r_5$, with $\tilde{c}=\min\limits_{z\in\R}\phi(z)$ and $r_2=\max\limits_{z\in\R}\phi(z)$;
        \item $r_1<r_2<\tilde{c}<r_3<r_4<r_5$, with $r_2=\min\limits_{z\in\R}\phi(z)$ and $\tilde{c}=\max\limits_{z\in\R}\phi(z)$;
        \item $r_1<r_2<r_3<\tilde{c}<r_4<r_5$, wtih $\tilde{c}=\min\limits_{z\in\R}\phi(z)$ and $r_4=\max\limits_{z\in\R}\phi(z)$;
        \item $r_1<r_2<r_3<r_4<\tilde{c}<r_5$, with $r_4=\min\limits_{z\in\R}\phi(z)$ and $\tilde{c}=\max\limits_{z\in\R}\phi(z)$.
    \end{enumerate}

    \item cusped travelling wave solutions $\phi$ with decay will exist if
    \begin{enumerate}
        \item $r_1<\tilde{c}<r_2=r_3<r_4<r_5$, with $\tilde{c}=\min\limits_{z\in\R}\phi(z)$, $r_2=\sup\limits_{z\in\R}\phi(z)$ and $\phi\uparrow r_2$ as $z\to\pm\infty$;

        \item $r_1=r_2<\tilde{c}<r_4<r_5$, with $r_1=\inf\limits_{z\in\R}\phi(z)$, $\tilde{c}=\max\limits_{z\in\R}\phi(z)$ and $\phi\downarrow r_1$ as $z\to\pm\infty$;

        \item $r_1<r_2<r_3<\tilde{c}<r_4=r_5$, with $\tilde{c}=\min\limits_{z\in\R}\phi(z)$, $r_4=\sup\limits_{z\in\R}\phi(z)$ and $\phi\uparrow r_4$ as $z\to\pm\infty$;

        \item $r_1<r_2<r_3=r_4<\tilde{c}<r_5$, with $r_4=\inf\limits_{z\in\R}\phi(z)$, $\tilde{c}=\max\limits_{z\in\R}\phi(z)$ and $\phi\downarrow r_4$ as $z\to\pm\infty$.
    \end{enumerate}
\end{enumerate}
\end{theorem}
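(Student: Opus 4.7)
The plan is to follow the same qualitative scheme developed for Theorem \ref{teo5.5} and Theorem \ref{teo5.6}, adapted to the sign pattern that arises when $\gamma<0$. The starting point is the quadrature \eqref{5.4.1} provided by Lemma \ref{lema5.1}, written in the factored form
$$
F(\phi)=(\phi')^2=\frac{\gamma}{\epsilon^2}\,\frac{(\phi-r_1)(\phi-r_2)(\phi-r_3)(\phi-r_4)(\phi-r_5)}{\tilde{c}-\phi},
$$
where the $r_i$ satisfy the compatibility conditions \eqref{5.3.3}. Since $\gamma<0$, a direct sign count on the intervals determined by $r_1<\dots<r_5$ yields $P(\phi)>0$ precisely on $(-\infty,r_1)\cup(r_2,r_3)\cup(r_4,r_5)$ and $P(\phi)<0$ precisely on $(r_1,r_2)\cup(r_3,r_4)\cup(r_5,+\infty)$. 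The entire classification reduces to pairing each admissible oscillation interval of $\phi$ with a placement of the pole $\tilde{c}$ such that $F(\phi)>0$ on the open interior of that interval.

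First I would handle the smooth cases. If all zeros are simple, the only candidates for a smooth periodic wave are the intervals $[r_i,r_{i+1}]$ on which $P$ is nonnegative and which do not contain $\tilde{c}$; writing out the sign of $\tilde{c}-\phi$ needed to match the sign of $P$ on each of the four consecutive intervals produces exactly items 1(a)--1(d). The horizontal-asymptote cases in 2 arise when one of the zeros bounding the admissible interval becomes a double root, so that $\phi$ approaches it only as $z\to\pm\infty$; the sign requirement on $\tilde{c}$ is unchanged, which is why, for example, both 2(b) and 2(c) appear from the same double zero $r_2=r_3$ (the role of the admissible side of the double root changes). The two-sided decays in 2(g)--2(h) follow identically by imposing two double zeros.

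Next I would pass to the weak solutions, using Lemma \ref{lema5.1} to guarantee $\phi\in C^{\infty}$ away from the level set $\{\phi=\tilde{c}\}$. Periodic peakons correspond to the pole $\tilde{c}$ coinciding with a simple zero of $P$ that bounds the admissible interval: the cancellation removes the pole, $\phi'$ stays finite but has a jump, and the remaining smooth-sign analysis dictates exactly items 3(a)--3(d). Peakons with decay in 4 come from the same cancellation at one endpoint combined with a double zero at the other. Cuspons, periodic or with decay, are produced by placing $\tilde{c}$ strictly inside the admissible interval so that the pole survives and $|\phi'|\to\infty$ as $\phi\to\tilde{c}$; enumerating the four open subintervals against the sign of $P$ yields 5(a)--5(d) and, when paired with a double zero of $P$ at the far endpoint, items 6(a)--6(d).

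The main obstacle is nothing conceptual; it is simply the exhaustive bookkeeping required to verify that no compatible configuration of $(r_1,\ldots,r_5,\tilde{c})$ with $\gamma<0$ has been missed and that each configuration listed is internally consistent with \eqref{5.3.3} and with the convexity/monotonicity constraints imposed by $F(\phi)\geq 0$ and by $\phi(z_0)=\tilde{c}$ at the finite endpoints of the components $E_i$ produced by Lemma \ref{lema5.1}. Because every individual case is identical in spirit to one already treated in Theorem \ref{teo5.6} (only the sign of $\gamma$ and hence the parity of the permitted intervals changes), and because the peakon/cuspon dichotomy is settled by the removable-pole criterion recalled in Subsection \ref{subsec5.2}, the proof can be presented as in Theorem \ref{teo5.6} and safely omitted in the interest of brevity.
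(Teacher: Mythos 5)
Your proposal follows essentially the same route that the paper itself indicates (and explicitly omits) for this theorem: the factorization of $P$ over its five real zeros, the sign analysis of $F(\phi)=P(\phi)/\bigl(\epsilon^2(\tilde{c}-\phi)\bigr)$ interval by interval, and the structure/zero-pole criteria of Lemma \ref{lema5.1} and Subsection \ref{subsec5.2}, exactly as in the proof of Theorem \ref{teo5.5}, so the argument is correct in substance. One wording slip worth fixing: since $\gamma<0$, smooth periodic waves also occur on intervals where $P<0$ (items 1(a) and 1(c), where $\tilde{c}$ lies below the oscillation interval so that the denominator is negative as well), so your opening restriction to intervals on which ``$P$ is nonnegative'' should be dropped; the sign-matching rule you state immediately afterwards is the correct criterion and already produces all four periodic cases.
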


With the last two theorems we finish the classification for $\gamma\neq 0$. We now proceed with the case where $\gamma=0$ and the quartic term $u^3u_x$ is eliminated from the equation. The proofs for this case will also be omitted due to their lenght and repetition of the arguments of previous theorems.

\begin{theorem}\label{teo5.8}{\textbf{(Case $\gamma=0$, $\beta\neq0$ with two real zeros)}}\label{teo5.9}
Let $\gamma =0$, $\epsilon \neq 0$, $\beta\neq0$, $\tilde{c} = \ds{\f{\epsilon^2 c - \Gamma}{\epsilon^2}}$ and $r_1\leq r_2\in\R$ satisfy \eqref{5.3.4}. Then
\begin{enumerate}
    \item smooth periodic travelling wave solutions $\phi$ will exist if $r_1<r_2$ and
    \begin{enumerate}
        \item $\tilde{c}<r_1$ and $\beta>0$, with $r_1=\min\limits_{z\in\R}\phi(z)$ and $r_2=\max\limits_{z\in\R}\phi(z)$;
        \item $\tilde{c}>r_2$ and $\beta<0$, with $r_1=\min\limits_{z\in\R}\phi(z)$ and $r_2=\max\limits_{z\in\R}\phi(z)$.
    \end{enumerate}
    \item periodic peaked travelling wave solutions $\phi$ will exist if
    \begin{enumerate}
        \item $\tilde{c}=r_1<r_2$ and $\beta>0$, with $r_1=\min\limits_{z\in\R}\phi(z)$ and $r_2=\max\limits_{z\in\R}\phi(z)$;
        \item $r_1<r_2=\tilde{c}$ and $\beta<0$, with $r_1=\min\limits_{z\in\R}\phi(z)$ and $r_2=\max\limits_{z\in\R}\phi(z)$.
    \end{enumerate}
    \item periodic cusped solutions $\phi$ will exist if $r_1<\tilde{c}<r_2$ and
    \begin{enumerate}
        \item $\beta>0$, with $\tilde{c}=\min\limits_{z\in\R}\phi(z)$ and $r_2=\max\limits_{z\in\R}\phi(z)$;
        \item $\beta<0$, with $r_1=\min\limits_{z\in\R}\phi(z)$ and
        $\tilde{c}=\max\limits_{z\in\R}\phi(z)$.
    \end{enumerate}
\end{enumerate}
\end{theorem}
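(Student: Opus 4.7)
My plan is to mirror the strategy used for Theorems \ref{teo5.5}--\ref{teo5.7} but specialise to the quartic case. Specialising \eqref{5.2.2} to $\gamma=0$, $\beta\neq 0$ and using the hypothesis that $P(\phi)$ has exactly two real zeros $r_1\le r_2$ (counted without multiplicity), I factor
$$
P(\phi)=\beta(\phi-r_1)(\phi-r_2)|\phi-z_0|^2,
$$
where $z_0\in\mathbb{C}\setminus\mathbb{R}$ is a root of $P$ (its conjugate being the other), with the compatibility relations \eqref{5.3.4}. Substituting this into the quadrature gives
$$
F(\phi):=(\phi')^2=\frac{\beta(\phi-r_1)(\phi-r_2)|\phi-z_0|^2}{\epsilon^2(\tilde c-\phi)},
$$
and because $|\phi-z_0|^2>0$ and $\epsilon^2>0$, the sign of $F(\phi)$ is governed solely by $\beta\,(\phi-r_1)(\phi-r_2)/(\tilde c-\phi)$. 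This reduces everything to a short sign analysis on the five intervals determined by $r_1$, $r_2$, $\tilde c$.

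The second step is to invoke Lemma \ref{lema5.1} to translate positivity of $F$ on an interval plus the boundary behaviour into an actual traveling wave. As recalled in Subsection \ref{subsec5.2}, simple zeros of $F$ that bracket a positivity interval produce smooth periodic waves, while the pole at $\phi=\tilde c$ gives rise to weak solutions: a peakon when the pole is cancelled by a coinciding simple zero (so $\phi'$ has opposite non-zero limits from the two sides), and a cuspon when $\phi$ reaches $\tilde c$ without cancellation, forcing $|\phi'|\to\infty$. I would then run through the admissible configurations:
\begin{enumerate}
\item $\tilde c\notin\{r_1,r_2\}$ and $\tilde c\not\in(r_1,r_2)$: only the two simple zeros contribute, yielding the smooth periodic cases (1a) ($\tilde c<r_1$, $\beta>0$) and (1b) ($\tilde c>r_2$, $\beta<0$);
\item $\tilde c\in\{r_1,r_2\}$: the pole is removable and one checks that $F>0$ on $(r_1,r_2)$ holds exactly under the sign pairings (2a) and (2b), giving periodic peakons;
\item $r_1<\tilde c<r_2$: the pole lies strictly between the zeros, and $F>0$ on one of the two subintervals $(r_1,\tilde c)$ or $(\tilde c,r_2)$, producing the periodic cuspons of (3a) and (3b).
\end{enumerate}

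In each of these subcases a short sign table shows that the hypothesised inequalities on $\beta$ and $\tilde c$ are both necessary and sufficient for $F(\phi)>0$ on the prescribed interval, and that no other configuration admits a non-constant bounded traveling wave: indeed, if $\tilde c$ is outside $[r_1,r_2]$ with the wrong sign of $\beta$ then $F<0$ throughout $(r_1,r_2)$, while on the unbounded components of $\mathbb{R}\setminus\{r_1,r_2,\tilde c\}$ either $F<0$ or $\phi$ escapes to $\pm\infty$. The constancy of $A$ across components demanded by item (b) of Lemma \ref{lema5.1}, together with item (c) when the pole is attained on a set of positive measure, is automatic here because the bounded components produced are of a single type; this rules out unwanted gluings.

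The only genuine obstacle is the peakon/cuspon bookkeeping at $\phi=\tilde c$: one must verify, via Lemma \ref{lema5.1}(d) and the one-sided limits $\phi'\to\pm\sqrt{F(\phi)}$, that the constructed weak solution actually belongs to $H^{1}_{\mathrm{loc}}(\mathbb{R})$ and that $(\phi-\tilde c)^2\in W^{2,1}_{\mathrm{loc}}(\mathbb{R})$. This is identical to the argument already carried out in detail in the proof skeleton of Theorem \ref{teo5.5}, so I would simply record it once for the peakon case (finite non-zero one-sided slopes with a removable pole) and once for the cuspon case (divergent one-sided slopes with an integrable singularity $\sim 1/\sqrt{\tilde c-\phi}$), and then omit the remaining repetitions, exactly as the authors do for the larger theorems.
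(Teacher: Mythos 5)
Your proposal is correct and is essentially the paper's argument: the authors omit the proof of this theorem precisely because it repeats the routine of Theorems \ref{teo5.5}--\ref{teo5.7}, namely factor $P(\phi)=\beta(\phi-r_1)(\phi-r_2)\vert\phi-z_0\vert^2$ under \eqref{5.3.4}, analyse the sign of $F(\phi)=P(\phi)/(\epsilon^2(\tilde{c}-\phi))$ on the intervals cut out by $r_1$, $r_2$, $\tilde{c}$, and convert positivity plus the zero/pole structure into smooth periodic waves, peakons (pole removed by a coinciding simple zero) and cuspons (pole attained, non-removable) via Lemma \ref{lema5.1} and the dictionary of Subsection \ref{subsec5.2}; your sign tables reproduce exactly the six listed cases. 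One caveat: your side claim that ``no other configuration admits a non-constant bounded travelling wave'' is unnecessary and in fact false --- for instance, if $\beta<0$ and $\tilde{c}<r_1$ then $F>0$ on $(\tilde{c},r_1)$, producing a periodic cuspon with $\tilde{c}=\min_{z\in\R}\phi(z)$ and $r_1=\max_{z\in\R}\phi(z)$ (and symmetrically $\beta>0$, $\tilde{c}>r_2$ on $(r_2,\tilde{c})$); since the theorem is only a list of sufficient conditions, this does not affect what you are asked to prove, but the exhaustiveness assertion should be dropped.
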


Next we consider the case of four real zeros and also separate the cases $\beta>0$ and $\beta<0$:

\begin{theorem}\label{teo5.9}{\textbf{(Case $\gamma=0$, $\beta>0$ with four real zeros)}}
Let $\gamma =0$, $\epsilon \neq 0$, $\beta>0$, $\tilde{c} = \ds{\f{\epsilon^2 c - \Gamma}{\epsilon^2}}$ and $r_1\leq r_2\leq r_3\leq r_4\in\R$ satisfy \eqref{5.3.5}. Then
\begin{enumerate}
    \item smooth periodic travelling wave solutions $\phi$ will exist if $r_1<r_2<r_3<r_4$ and
    \begin{enumerate}
        \item $\tilde{c}<r_1$, with $r_1=\min\limits_{z\in\R}\phi(z)$ and $r_2=\max\limits_{z\in\R}\phi(z)$;
        \item $\tilde{c}>r_3$, with $r_2=\min\limits_{z\in\R}\phi(z)$ and $r_3=\max\limits_{z\in\R}\phi(z)$.
        \item $\tilde{c}<r_3$, with $r_3=\min\limits_{z\in\R}\phi(z)$ and $r_4=\max\limits_{z\in\R}\phi(z)$.
    \end{enumerate}
    \item smooth travelling wave solutions $\phi$ with horizontal asymptotes will exist if
    \begin{enumerate}
        \item $r_1=r_2<r_3<r_4$ and $\tilde{c}>r_3$, with $r_2=\inf\limits_{z\in\R}\phi(z)$, $r_3=\max\limits_{z\in\R}\phi(z)$ and $\phi\downarrow r_2$ as $z\to\pm\infty$;
        \item $r_1<r_2=r_3<r_4$ and $\tilde{c}<r_1$, with $r_1=\min\limits_{z\in\R}\phi(z)$,
        $r_2=\sup\limits_{z\in\R}\phi(z)$ and $\phi\uparrow r_2$ as $z\to\pm\infty$;
        \item $r_1<r_2=r_3<r_4$ and $\tilde{c}<r_3$, with $r_3=\inf\limits_{z\in\R}\phi(z)$,
        $r_4=\max\limits_{z\in\R}\phi(z)$ and $\phi\downarrow r_3$ as $z\to\pm\infty$;
        \item $r_1<r_2<r_3=r_4$ and $\tilde{c}>r_3$, with $r_2=\min\limits_{z\in\R}\phi(z)$,
        $r_3=\sup\limits_{z\in\R}\phi(z)$ and $\phi\uparrow r_3$ as $z\to\pm\infty$;
        \item $r_1=r_2<r_3=r_4$ and $\tilde{c}>r_3$, with $r_2=\inf\limits_{z\in\R} \phi(z)$, $r_3=\sup\limits_{z\in\R} \phi(z)$ and $\phi\downarrow r_2$, $\phi\uparrow r_3$ as $z\to\pm\infty$.
    \end{enumerate}
    \item periodic peaked travelling wave solutions $\phi$ will exist if
    \begin{enumerate}
        \item $\tilde{c}=r_1<r_2<r_3<r_4$, with $r_1=\min\limits_{z\in\R}\phi(z)$ and $r_2=\max\limits_{z\in\R}\phi(z)$;
        \item $r_1<r_2<r_3=\tilde{c}<r_4$, with $r_2=\min\limits_{z\in\R}\phi(z)$ and $r_3=\max\limits_{z\in\R}\phi(z)$;
        \item $r_1<r_2<r_3=\tilde{c}<r_4$, with $r_3=\min\limits_{z\in\R}\phi(z)$ and $r_4=\max\limits_{z\in\R}\phi(z)$.
    \end{enumerate}
    
    \item peaked travelling wave solutions $\phi$ with decay will exist if
    \begin{enumerate}
        \item $\tilde{c} = r_1<r_2=r_3<r_4$, with $r_1 = \min\limits_{z\in\R}\phi(z)$, $r_2=\sup\limits_{z\in\R} \phi(z)$ and $\phi\uparrow r_2$ as $z\to\pm\infty$;
        
        \item $r_1=r_2<r_3=\tilde{c}<r_4$, with $r_2 = \inf\limits_{z\in\R}\phi(z)$, $r_3=\max\limits_{z\in\R} \phi(z)$ and $\phi\downarrow r_2$ as $z\to\pm\infty$;
    \end{enumerate}
    
    \item periodic cusped travelling wave solutions $\phi$ will exist if
    \begin{enumerate}
        \item $r_1<\tilde{c}<r_2<r_3<r_4$, with $\tilde{c}=\min\limits_{z\in\R}\phi(z)$ and $r_2=\max\limits_{z\in\R}\phi(z)$;
        \item $r_1<r_2<\tilde{c}<r_3<r_4$, with $r_2=\min\limits_{z\in\R}\phi(z)$ and
        $\tilde{c}=\max\limits_{z\in\R}\phi(z)$;
        \item $r_1<r_2<r_3<\tilde{c}<r_4$, with $\tilde{c}=\min\limits_{z\in\R}\phi(z)$ and
        $r_4=\max\limits_{z\in\R}\phi(z)$;
    \end{enumerate}
    
    \item cusped travelling wave solutions $\phi$ with decay will exist if
    \begin{enumerate}
        \item $r_1<\tilde{c}<r_2=r_3<r_4$, with $\tilde{c}=\min\limits_{z\in\R}\phi(z)$, $r_2=\sup\limits_{z\in\R}\phi(z)$ and $\phi\uparrow r_2$ as $z\to\pm\infty$;
        \item $r_1=r_2<\tilde{c}<r_3<r_4$, with $r_2=\inf\limits_{z\in\R}\phi(z)$,
        $\tilde{c}=\max\limits_{z\in\R}\phi(z)$ and $\phi\downarrow r_2$ as $z\to\pm\infty$.
    \end{enumerate}
\end{enumerate}
\end{theorem}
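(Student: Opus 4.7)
The plan is to proceed by sign analysis, exactly as in Theorem~\ref{teo5.6} (the case $\gamma > 0$) and mirroring the evolutive result Theorem~\ref{teo5.4}. Setting $\gamma = 0$ in \eqref{5.2.2} yields
\begin{equation*}
(\phi')^2 = F(\phi), \qquad F(\phi) := \frac{P(\phi)}{\epsilon^2(\tilde c - \phi)}, \qquad P(\phi) = \beta(\phi-r_1)(\phi-r_2)(\phi-r_3)(\phi-r_4),
\end{equation*}
where the compatibility of the coefficients of $P$ with $\alpha$, $\beta$, $c$, $A$, $B$ and the $r_i$ is precisely \eqref{5.3.5}. Since $\beta>0$, the sign of $P$ alternates at each simple root: in the generic case $r_1<r_2<r_3<r_4$ we have $P>0$ on $(-\infty,r_1)\cup(r_2,r_3)\cup(r_4,+\infty)$ and $P<0$ on $(r_1,r_2)\cup(r_3,r_4)$. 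Combining this sign table with the sign of $\tilde c - \phi$, which is fixed as soon as the location of $\tilde c$ with respect to the $r_i$ is prescribed, one reads off directly the sub-intervals $I\subseteq[r_1,r_4]$ on which $F>0$.

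Once each admissible $I$ has been located, the type of bounded travelling wave supported on it is dictated by whether its endpoints are simple zeros of $P$, double zeros of $P$, or coincide with the pole $\tilde c$. If both endpoints are simple zeros of $P$ and $\tilde c\notin\overline{I}$, Lemma~\ref{lema5.1}(a)--(b) ensures that $\phi\in C^{\infty}(\R)$, and the standard ODE argument quoted after \eqref{5.2.1} yields the smooth periodic wave oscillating between them; this furnishes group~(1). Replacing one of these endpoints by a double zero of $P$ only alters the asymptotic behaviour, for $F$ vanishes to higher order there and the corresponding extremum is attained only as $z\to\pm\infty$; this gives group~(2). If instead $\tilde c$ coincides with a \emph{simple} zero of $P$, the pole of $F$ is removed by cancellation, the sign of $\phi'$ flips across $\tilde c$, and two smooth pieces can be glued into a peakon: group~(3), respectively group~(4) when an additional double zero produces a decaying tail. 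Finally, if $\tilde c$ sits strictly between two zeros of $P$ without being a zero itself, the pole is non-removable, so $\phi\to\tilde c$ with $|\phi'|\to\infty$: this produces the periodic cuspons of group~(5), with group~(6) being the decaying variant obtained by inserting a double zero at the other endpoint.

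The main obstacle is purely bookkeeping. For every item one must (i) verify that $F>0$ on the claimed sub-interval by multiplying the tabulated sign of $P$ by that of $\tilde c - \phi$; (ii) read off the correct labels, distinguishing the cases where the extremum is contributed by the pole (cuspons) from those where it is contributed by a zero of $P$ (smooth and peakon solutions); and (iii) decide whether an endpoint $\phi=\tilde c$ is a removable pole of $F$ or a genuine simple pole, which is what separates peakons from cuspons and is detected by computing $\lim_{\phi\to\tilde c} F(\phi)(\tilde c-\phi)$. Since none of these steps introduces new ideas beyond those already used in Theorem~\ref{teo5.6}, the argument is a systematic transcription of that reasoning to the quartic $P$ with $\gamma=0$ and $\beta>0$, precisely as the authors indicated when they announced that the proof could be safely omitted.
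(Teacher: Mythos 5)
Your proposal is correct and follows exactly the route the paper intends: the authors omit the proof of this theorem, stating it repeats the arguments of Theorem \ref{teo5.5}, namely factoring $P(\phi)=\beta(\phi-r_1)(\phi-r_2)(\phi-r_3)(\phi-r_4)$, combining its sign table with the sign of $\tilde c-\phi$, and then invoking Lemma \ref{lema5.1} together with the criteria of Subsection \ref{subsec5.2} (simple zeros give periodic waves, double zeros give horizontal asymptotes, a removable pole at a simple zero gives peakons, a non-removable pole gives cuspons). Your sketch is a faithful transcription of that argument to the quartic case with $\beta>0$, so it matches the paper's proof in substance.
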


With the next theorem we finish the results concerning the classification of bounded travelling wave solutions.

\begin{theorem}\label{teo5.10}{\textbf{(Case $\gamma=0$, $\beta<0$ with four real zeros)}}\label{teo4.11}
Let $\gamma =0$, $\epsilon \neq 0$, $\beta>0$, $\tilde{c} = \ds{\f{\epsilon^2 c - \Gamma}{\epsilon^2}}$ and $r_1\leq r_2\leq r_3\leq r_4\in\R$ satisfy \eqref{5.3.5}. Then
\begin{enumerate}
    \item smooth periodic travelling wave solutions $\phi$ will exist if $r_1<r_2<r_3<r_4$ and
    \begin{enumerate}
        \item $\tilde{c}>r_2$, with $r_1=\min\limits_{z\in\R}\phi(z)$ and $r_2=\max\limits_{z\in\R}\phi(z)$;
        
        \item $\tilde{c}<r_2$, with $r_2=\min\limits_{z\in\R}\phi(z)$ and $r_3=\max\limits_{z\in\R}\phi(z)$;
        
        \item $\tilde{c}>r_4$, with $r_3=\min\limits_{z\in\R}\phi(z)$ and $r_4=\max\limits_{z\in\R}\phi(z)$.
    \end{enumerate}
    
    \item smooth travelling wave solutions $\phi$ with horizontal asymptotes will exist if
    \begin{enumerate}
        \item $r_1=r_2<r_3<r_4$ and $\tilde{c}<r_2$, with $r_2=\inf\limits_{z\in\R}\phi(z)$, $r_3=\max\limits_{z\in\R}\phi(z)$ and $\phi\downarrow r_2$ as $z\to\pm\infty$;
        
        \item $r_1<r_2=r_3<r_4$ and $\tilde{c}>r_2$, with $r_1=\min\limits_{z\in\R}\phi(z)$,
        $r_2=\sup\limits_{z\in\R}\phi(z)$ and $\phi\uparrow r_2$ as $z\to\pm\infty$;
        
        \item $r_1<r_2=r_3<r_4$ and $\tilde{c}>r_4$, with $r_3=\inf\limits_{z\in\R}\phi(z)$,
        $r_4=\max\limits_{z\in\R}\phi(z)$ and $\phi\downarrow r_3$ as $z\to\pm\infty$;
        
        \item $r_1<r_2<r_3=r_4$ and $\tilde{c}<r_2$, with $r_2=\min\limits_{z\in\R}\phi(z)$,
        $r_3=\sup\limits_{z\in\R}\phi(z)$ and $\phi\uparrow r_3$ as $z\to\pm\infty$;
        
        \item $r_1=r_2<r_3=r_4$ and $\tilde{c}<r_2$, with $r_2=\inf\limits_{z\in\R} \phi(z)$, $r_3=\sup\limits_{z\in\R} \phi(z)$ and $\phi\downarrow r_2$, $\phi\uparrow r_3$ as $z\to\pm\infty$.
    \end{enumerate}
    
    \item periodic peaked travelling wave solutions $\phi$ will exist if
    \begin{enumerate}
        \item $r_1<r_2=\tilde{c}<r_3<r_4$, with $r_1=\min\limits_{z\in\R}\phi(z)$ and $r_2=\max\limits_{z\in\R}\phi(z)$;
        
        \item $r_1<\tilde{c}=r_2<r_3<r_4$, with $r_2=\min\limits_{z\in\R}\phi(z)$ and $r_3=\max\limits_{z\in\R}\phi(z)$;
        
        \item $r_1<r_2<r_3<r_4=\tilde{c}$, with $r_3=\min\limits_{z\in\R}\phi(z)$ and $r_4=\max\limits_{z\in\R}\phi(z)$.
    \end{enumerate}
    
    \item peaked travelling wave solutions $\phi$ with decay will exist if
    \begin{enumerate}
        \item $r_1<\tilde{c}=r_2<r_3=r_4$, with $r_2= \min\limits_{z\in\R}\phi(z)$, $r_3=\sup\limits_{z\in\R} \phi(z)$ and $\phi\uparrow r_2$ as $z\to\pm\infty$;
        
        \item $r_1<r_2=r_3<r_4=\tilde{c}$, with $r_3 = \inf\limits_{z\in\R}\phi(z)$, $r_4=\max\limits_{z\in\R} \phi(z)$ and $\phi\downarrow r_2$ as $z\to\pm\infty$.
    \end{enumerate}
    
    \item periodic cusped solutions $\phi$ will exist if
    \begin{enumerate}
        \item $r_1<\tilde{c}<r_2<r_3<r_4$, with $r_1=\min\limits_{z\in\R}\phi(z)$ and $\tilde{c}=\max\limits_{z\in\R}\phi(z)$;
        
        \item $r_1<r_2<\tilde{c}<r_3<r_4$, with $\tilde{c}=\min\limits_{z\in\R}\phi(z)$ and
        $r_3=\max\limits_{z\in\R}\phi(z)$;
        
        \item $r_1<r_2<r_3<\tilde{c}<r_4$, with $r_3=\min\limits_{z\in\R}\phi(z)$ and
        $\tilde{c}=\max\limits_{z\in\R}\phi(z)$.
    \end{enumerate}
    
    \item cusped travelling wave solutions $\phi$ with decay will exist if
    \begin{enumerate}
        \item $r_1<r_2<\tilde{c}<r_3=r_4$, with $\tilde{c}=\min\limits_{z\in\R}\phi(z)$, $r_3=\sup\limits_{z\in\R}\phi(z)$ and $\phi\uparrow r_3$ as $z\to\pm\infty$;
        
        \item $r_1<r_2=r_3<\tilde{c}<r_4$, with $r_3=\inf\limits_{z\in\R}\phi(z)$,
        $\tilde{c}=\max\limits_{z\in\R}\phi(z)$ and $\phi\downarrow r_2$ as $z\to\pm\infty$.
    \end{enumerate}
\end{enumerate}
\end{theorem}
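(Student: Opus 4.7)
The plan is to reduce Theorem \ref{teo5.10} to a sign analysis of the right-hand side of the quadrature \eqref{5.2.2} together with an application of Lemma \ref{lema5.1}, following verbatim the template used in the proofs of Theorems \ref{teo5.5}--\ref{teo5.9}. With $\gamma=0$ and $\beta<0$, I would factor
$$P(\phi)=\beta(\phi-r_1)(\phi-r_2)(\phi-r_3)(\phi-r_4),$$
so that, in the generic case of four simple zeros, $P$ is negative on $(-\infty,r_1)\cup(r_2,r_3)\cup(r_4,\infty)$ and positive on $(r_1,r_2)\cup(r_3,r_4)$. Since a bounded travelling wave must satisfy $(\phi')^2=F(\phi):=P(\phi)/[\epsilon^2(\tilde c-\phi)]\geq 0$, the admissible oscillation intervals for $\phi$ are precisely those on which $P(\phi)$ and $\tilde c-\phi$ share a sign.

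For smooth solutions, I would examine each interval $[r_1,r_2]$ and $[r_3,r_4]$ on which $P>0$ and identify the placements of $\tilde c$ that keep $\tilde c-\phi>0$ throughout the oscillation; on $[r_1,r_2]$ this forces $\tilde c>r_2$, giving item 1(a), and on $[r_3,r_4]$ it forces $\tilde c>r_4$, giving item 1(c). The single interval where $P<0$ (arising between $r_2$ and $r_3$) requires $\tilde c<r_2$, yielding item 1(b). Smooth solutions with horizontal asymptotes are then produced by coalescing a simple zero of $P$ into a double zero at the bounded endpoint: the integral $\int d\phi/\sqrt{F(\phi)}$ diverges as $\phi$ tends to such an endpoint, exactly as in Lenells's argument for the CH equation, accounting for items 2(a)--(e).

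Weak solutions are handled via Lemma \ref{lema5.1}, which guarantees that any $H^1_{\mathrm{loc}}$ travelling wave is $C^\infty$ except on $\phi^{-1}(\tilde c)$ and that the componentwise quadrature holds on each open interval of smoothness. Peakons arise only when $\tilde c$ coincides with one of the zeros $r_j$, so that the pole of $F$ is removed; items 3(a)--(c) and 4(a)--(b) come from checking, for each placement $\tilde c=r_j$, on which side $F$ remains non-negative and whether the opposite endpoint of the oscillation is a simple zero (periodic peakon) or a double zero (peakon with decay). Cuspons require $\tilde c$ to lie strictly between two consecutive zeros on an interval where $P>0$, so that $F(\phi)\sim C/(\tilde c-\phi)$ near $\tilde c$ forces $\phi'\to\pm\infty$; the same simple-versus-double distinction for the remaining endpoint then yields items 5(a)--(c) and 6(a)--(b).

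The main obstacle will be the exhaustive bookkeeping: one must verify that every admissible combination of (i) the multiplicity pattern of the $r_j$ when $\beta<0$, (ii) the position of $\tilde c$ among them, and (iii) the sign compatibility of $F$ corresponds to exactly one entry in the list, and, conversely, that each listed entry is genuinely realisable under the compatibility conditions \eqref{5.3.5}. The analysis is mechanical but must be pursued systematically; the sign pattern of $P$ for $\beta<0$ is the mirror image of that for $\beta>0$ in Theorem \ref{teo5.9}, which explains why the inequalities on $\tilde c$ in Theorem \ref{teo5.10} are precisely the reflections of those appearing there.
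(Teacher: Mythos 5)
Your proposal follows essentially the same route as the paper, which in fact omits the proof of Theorem \ref{teo5.10} and defers to the arguments of Theorems \ref{teo5.5}--\ref{teo5.9}: the sign analysis of $F(\phi)=P(\phi)/[\epsilon^2(\tilde{c}-\phi)]$ on the intervals determined by the zeros of $P$, Lemma \ref{lema5.1} for the weak solutions, peakons when $\tilde{c}$ coincides with a zero (removable pole), cuspons when $\phi$ attains $\tilde{c}$ strictly between zeros, and double zeros giving horizontal asymptotes or decay; your factorization and sign pattern of $P$ for $\beta<0$, and the resulting placements of $\tilde{c}$, are correct. One minor imprecision: cuspons do not require $\tilde{c}$ to sit in an interval where $P>0$ --- in items 5(b) and 6(a) one has $\tilde{c}\in(r_2,r_3)$ where $P<0$, and positivity of $F$ on the admissible side is restored by the factor $\tilde{c}-\phi<0$; your earlier, general criterion that $P(\phi)$ and $\tilde{c}-\phi$ must share a sign is the one to apply there, so the slip is cosmetic and does not affect the method.
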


To finish this section, we would like to discuss the possibility of composing weak solutions to obtain new waves called \textit{composite waves}. In \cite{len}, the author shows that provided that the constant $A$ in the quadrature form for the Camassa-Holm equation (see equation \eqref{1.0.3} with $\gamma=\beta=0$) satisfies a certain condition (see item (c) of Lemma \ref{lema5.1} with the same choices for $\gamma$ and $\beta$) and the Lebesgue measure of the set $C=\phi^{-1}(\tilde{c})$ is zero, then one can glue peaked and cusped solutions to obtain composite wave solutions for the Camassa-Holm equation. If the Lebesgue measure of $C$ is not zero, then one can glue cusped solutions to obtain the \textit{stumpon} solutions provided that the coefficients are in a certain ellipsoid. In our case we strongly believe that this will also happen, but so far we have not been able to explicitly obtain the manifold from the geometrical conditions. For this reason we leave the results of composite wave solutions for an upcoming paper.

\section{Explicit travelling waves solutions}\label{sec6}

In this section we look for explicit travelling waves of equation \eqref{1.0.4}. Our main tool for constructing such types of solutions are the conservation laws we established in Section \ref{sec3} and their consequences.

\subsection{Travelling waves I: elliptic integrals}

Let $P(\phi)$ be given by (\ref{5.1.4}) and $c$ be such that $P(\phi_c)=0$, where $\phi_c:=c-\Gamma/\epsilon^2$. We can rewrite $P(\phi)$ as $P(\phi)=-\epsilon^2(\phi-\phi_c)q(\phi)$ and by $\deg{(q)}$ we denote the degree of $q$. We observe that $2\leq\deg{(q)}\leq 4$. 

\begin{theorem}\label{teo6.1}
Let $u=\phi(x-ct)$, $c\neq0$, be a travelling wave solution of $(\ref{1.0.4})$, $P(\phi)$ be the polynomial defined in $(\ref{5.1.4})$ and $q(\phi)$ such that $P(\phi)=-\epsilon^2(\phi-\phi_c)q(\phi)$, where  $\phi_c = c-\Gamma/\epsilon^2$.
\begin{enumerate}
    \item If $(\be,\gamma)\neq(0,0)$, then $3\leq \deg{(q)}\leq 4$ and we have a solution given in terms of elliptic integral
    \bb\label{6.1.1}
    \int\f{d\phi}{\sqrt{q(\phi)}}=\pm z+\lambda,
    \ee
where $\lambda$ is a constant. 

\item If $\be=\gamma=0$, then $q(\phi)=\phi^2/\epsilon^2+b\phi+d$, where $b=(\phi_c-(c+\al))/\epsilon^2$, $\epsilon^2\phi_cd=B$, and
\bb\label{6.1.2}
\int\f{d\phi}{\sqrt{\f{\phi^2}{\epsilon^2}+b\phi+d}}=\pm z+\lambda.
\ee
\end{enumerate}
\end{theorem}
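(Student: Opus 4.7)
The plan is to exploit the quadrature (5.1.4) directly. Observe that the denominator $\epsilon^2(c-\phi)-\Gamma$ can be rewritten as $-\epsilon^2(\phi-\phi_c)$ with $\phi_c = c-\Gamma/\epsilon^2$, so the hypothesis $P(\phi_c)=0$ is exactly the statement that $(\phi-\phi_c)$ divides the numerator in the rational function on the right of (5.1.4). Writing $P(\phi) = -\epsilon^2(\phi-\phi_c)q(\phi)$ (the factor $-\epsilon^2$ chosen so that the two $-\epsilon^2(\phi-\phi_c)$ factors cancel), the quadrature reduces to the autonomous ODE $(\phi')^2 = q(\phi)$. From here, wherever $q(\phi)>0$, separation of variables gives $d\phi/\sqrt{q(\phi)} = \pm\, dz$, which integrates to the formulas (6.1.1) and (6.1.2).

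For Part~1 the remaining work is just counting degrees. Since $P(\phi) = \gamma\phi^5+\beta\phi^4-\phi^3+(c+\alpha)\phi^2+2A\phi+B$, the assumption $(\beta,\gamma)\neq(0,0)$ forces $\deg P\in\{4,5\}$, hence $\deg q\in\{3,4\}$. An integral of the form $\int d\phi/\sqrt{q(\phi)}$ with $q$ a cubic or quartic is by definition an elliptic integral, which produces (6.1.1) once a constant of integration $\lambda$ is included.

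For Part~2, with $\beta=\gamma=0$, $P$ is the cubic $-\phi^3+(c+\alpha)\phi^2+2A\phi+B$, so $q$ must be a quadratic of the form $q(\phi)=\phi^2/\epsilon^2+b\phi+d$. Performing the polynomial division of $-\phi^3+(c+\alpha)\phi^2+2A\phi+B$ by $-\epsilon^2(\phi-\phi_c)$ and reading off the coefficients yields $b=(\phi_c-(c+\alpha))/\epsilon^2$; the vanishing of the remainder is guaranteed by $P(\phi_c)=0$, and matching the constant term gives $\epsilon^2\phi_c d = B$. Substituting into the separated quadrature produces (6.1.2).

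There is essentially no obstacle to this argument. The key observation, that the choice of $c$ making $P(\phi_c)=0$ is precisely what removes the singularity of $F(\phi)=P(\phi)/(\epsilon^2(c-\phi)-\Gamma)$ at $\phi=\phi_c$ and leaves behind an ordinary polynomial $q(\phi)$, does all the work; what remains is routine polynomial division in Part~2 and a degree count in Part~1. The only minor technicality is that the integrals (6.1.1) and (6.1.2) are to be interpreted on intervals where $q(\phi)\geq 0$, which is automatically enforced by the travelling wave analysis of Section~5.
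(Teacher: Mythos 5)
Your argument is correct and is precisely the routine computation the paper has in mind: the paper itself omits the proof as ``straightforward,'' and your cancellation of the factor $-\epsilon^2(\phi-\phi_c)$ in the quadrature \eqref{5.1.4}, followed by separation of variables, the degree count for $(\be,\gamma)\neq(0,0)$, and the coefficient matching giving $b=(\phi_c-(c+\al))/\epsilon^2$ and $\epsilon^2\phi_c d=B$, is exactly that omitted argument. No gaps; the only implicit standing assumption worth keeping in mind is $\epsilon\neq0$, which is already built into the definition of $\phi_c$.
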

\begin{proof}
The proof is straightforward and is omitted.
\end{proof}

\begin{remark}\label{rem6.1}
If $\beta=\gamma=0$, then we have the following compatibility conditions in the case 2 above: $\phi_c-b\epsilon^2=c+\alpha,\,\,(b-d)\epsilon^2=2A$ and $\epsilon^2\phi_c d=B$. We note that both $A$ and $B$ are constants of integration, see the comments between equations \eqref{5.1.3} and \eqref{5.1.4}. Then we have some freedom to choose them and this freedom is inherited by $b$ and $d$.
\end{remark}

\begin{remark}\label{rem6.2}
Equation $(\ref{6.1.2})$ can be integrated, providing
$$\epsilon\ln{\left(\epsilon^2b+2\phi+2\epsilon\sqrt{\f{\phi^2}{\epsilon^2}+b\phi+d}\right)}=\pm z+\tilde{\lambda},$$
where $\tilde{\lambda}$ is a constant of integration. Solving the last equation for $\phi$, we obtain two families of equations parametrized by a new constant $\lambda\neq0$
$$
\phi_{\pm,\lambda}(z)=\left(\f{b^2\epsilon^2-4d}{4\lambda}\right)\epsilon^2e^{\pm\f{z}{\epsilon}}+\f{\lambda}{4}e^{\f{\mp z}{\epsilon}}-\f{\epsilon^2b}{2}.
$$

Assume that 
$$
\left(\f{b^2\epsilon^2-4d}{4\lambda}\right)\epsilon^2=\f{\lambda}{4}.
$$
This means that $d,\,\Gamma$ and $\lambda$ describes the {\it hyperbolic paraboloid}
$$
d=\left(\f{\Gamma+\al\epsilon^2}{2\epsilon^3}\right)^2-\left(\f{\lambda}{2\epsilon}\right)^2
$$
with center $(\Gamma,\lambda,d)=(-\al\epsilon^2,0,0)$, where we used the fact that $b=-(\Gamma+\alpha\epsilon^2)/\epsilon^4$.
In this case, the two families of solutions degenerate in a single {\it even function}, given by
$$\phi(z)=\f{\lambda}{2}\cosh{\left(\f{z}{\epsilon}\right)}+\f{\Gamma+\alpha\epsilon^2}{2\epsilon^2}.$$

On the other hand, assuming that
$$
\left(\f{b^2\epsilon^2-4d}{4\lambda}\right)\epsilon^2=-\f{\lambda}{4},
$$
then $d,\,\Gamma$ and $\lambda$ describes the {\it elliptic paraboloid}
$$
d=\left(\f{\Gamma+\al\epsilon^2}{2\epsilon^3}\right)^2+\left(\f{\lambda}{2\epsilon}\right)^2,
$$
with center $(\Gamma,\lambda,d)=(-\al\epsilon^2,0,0)$ and we have two families of corresponding solutions given by
$$\phi_\pm(z)=\pm\f{\lambda}{2}\sinh{\left(\f{z}{\epsilon}\right)}+\f{\Gamma+\alpha\epsilon^2}{2\epsilon^2}.$$
\end{remark}

Let us analyze the results we have just obtained. To this end, we focus on the solution
$$\phi(z)=\f{\lambda}{2}\cosh{\left(\f{z}{\epsilon}\right)}+\f{\Gamma+\alpha\epsilon^2}{2\epsilon^2}$$
and the two-parameter family of paraboloids
$${\cal P}_{\epsilon,\alpha}=\left\{(\Gamma,\lambda,d)\in\R^3\,\,\text{such that}\,\,d=\left(\f{\Gamma+\al\epsilon^2}{2\epsilon^3}\right)^2-\left(\f{\lambda}{2\epsilon}\right)^2
 \right\}$$

We can embed each member of these paraboloids into a two-parameter family of four-dimensional manifolds
$${\cal M}_{\epsilon,\alpha}=\left\{(x,t,\Gamma,\lambda,d)\in\R^5\,\,\text{such that}\,\,d=\left(\f{\Gamma+\al\epsilon^2}{2\epsilon^3}\right)^2-\left(\f{\lambda}{2\epsilon}\right)^2
 \right\}.$$
 
To each point of ${\cal M}_{\epsilon,\alpha}$ we can associate a solution of \eqref{1.0.3}, given by
$$u(x,t)=\f{\lambda}{2}\cosh{\left(\f{x-ct}{\epsilon}\right)}+\f{\Gamma+\alpha\epsilon^2}{2\epsilon^2}.$$

In particular, each point $p=(\Gamma,\lambda,d)\in{\cal P}_{\epsilon,\alpha}$ gives a solution $u(x,t)$ as above and while $p$ varies on ${\cal P}_{\epsilon,\alpha}$ we have a family of solutions of \eqref{1.0.3}, with $\be=\gamma=0$, parameterized by the paraboloid ${\cal P}_{\epsilon,\alpha}$.

Finally, let us now consider travelling wave solutions using the third conserved current provided by Theorem \ref{teo4.2}. If $u(x,t)=\phi(z)$, $z=x-ct$, for some $c\neq0$, equation $D_t\sqrt{m}+D_x((u-\al)\sqrt{m})=0$ reads
$$
\f{d}{dz}\left((\phi-\al-c)\sqrt{M}\right)=0, \quad M = \phi-\epsilon^2\phi'',
$$
which yields
\bb\label{6.1.3}
(\phi-\al-c)^2(\phi-\epsilon^2\phi'')=\f{k_1}{2},
\ee
where $k_1$ is a constant of integration. We note that
$$
\phi-\epsilon^2\phi''=\f{k_1}{2(\phi-\al-c)^2}\Longrightarrow \phi\phi'-\epsilon^2\phi'\phi''=\f{k_1\phi'}{2(\phi-\al-c)^2},
$$
which, after integration, reads
\bb\label{6.1.4}
(\epsilon \phi')^2=\phi^2+\f{k_1}{\phi-\al-c}.
\ee
Let $w:=1/(\phi-\al-c)$. From equation (\ref{6.1.4}) we obtain
$$
\epsilon\f{dw}{dz}=\pm w\sqrt{k_1w^3+((\al+c)w+1)^2},
$$
which gives the solution in terms of the elliptic integral
$$
\int\f{dw}{w\sqrt{k_1w^3+((\al+c)w+1)^2}}=\pm\f{z}{\epsilon}+k_2,
$$
where $k_2$ is another constant of integration.

\subsection{Travelling waves II: peakons with exponential shape}

In this subsection we will use the third conservation law resulting from Theorem \ref{teo4.2} to guide us to construct an explicit travelling wave solution to (\ref{1.0.4}). As naturally expected due to the restriction of the respective conserved current, the solution is firstly obtained assuming that some parameters in the equation are $0$. After, we prove that such a type of solution, with exponential shape, can only be obtained with those restrictions.

We begin with equation (\ref{6.1.3}), which is a consequence of the conservation law already mentioned. If we suppose $\phi,\,\phi',\,\phi''\rightarrow0$ as $|z|\rightarrow\infty$, we then conclude that $k_1=0$. This implies that either $\phi=\al+c$, which means that $\phi$ is an arbitrary constant (in view of the arbitrariness of the constant $c$), or $\phi(z)=Ae^{z}+Be^{-z}$. None of these solutions satisfy the property of vanishing at infinity, unless $A=B=\al+c=0$.

Let us then consider solutions of (\ref{6.1.3}) with $k_1=0$ in the weak sense. We begin with the following auxiliary equation
\bb\label{6.2.1}
\psi(z)-\epsilon^2\psi''(z)=\lambda\delta(z),
\ee
where $\delta(z)$ is the well known Dirac delta function and $\lambda\neq0$ is a constant. Applying the Fourier transform ${\cal F}$ to both sides of (\ref{6.2.1}) we obtain
$$
{\cal F}(\psi)=\f{\lambda}{\sqrt{2\pi}}\f{1}{1+\epsilon^2\xi^2} \Rightarrow \psi(z)=\f{\lambda}{2\epsilon}e^{-\f{|z|}{\epsilon}}.
$$

Based on the latter fact, let $\phi(z):=\lambda e^{-|z|/\epsilon}$. Then $\phi(z)-\epsilon^2\phi''(z)=2\lambda\delta(z)$ and equation (\ref{6.1.3}), with $k_1=0$, is compatible provided that $\lambda=\al+c$. Therefore, if $\be=\gamma=0$ and $\Gamma=-\al\epsilon^2$, equation (\ref{1.0.4}) admits the travelling wave solution $u(x,t)=\phi(x-ct)$, with $\phi(z)=(\al+c)e^{-|z|/\epsilon}$. This is, actually a particular case of the following stronger result:
\begin{theorem}\label{teo6.2}
The function $u:\R^2\rightarrow\R$, defined by $u(x,t)=(\al+c)e^{-|x-ct|}$, with $\al\neq c$, is a weak solution to the equation $(\ref{1.0.4})$ if and only if $\be=\gamma=0$ and $\Gamma=-\al\epsilon^2$.
\end{theorem}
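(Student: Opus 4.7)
Both implications can be settled by substituting the peakon ansatz $u(x,t)=(\alpha+c)e^{-|x-ct|/\epsilon}$ into a suitable nonlocal reformulation of \eqref{1.0.4}. For the \emph{if} direction, the paragraph preceding the theorem already shows that when $\beta=\gamma=0$ and $\Gamma=-\alpha\epsilon^{2}$, this function satisfies \eqref{6.1.3} with $k_{1}=0$, so that $m=2\epsilon(\alpha+c)\delta(x-ct)$ in the distributional sense; compatibility with the weak form of \eqref{1.0.4} itself will then follow as a by-product of the computation for the converse.

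For the \emph{only if} direction, the plan is to apply the resolvent $p\ast(\cdot)=(1-\epsilon^{2}\partial_{x}^{2})^{-1}$, with $p(x)=\tfrac{1}{2\epsilon}e^{-|x|/\epsilon}$, to equation \eqref{1.0.4}. Using $uu_{xxx}+2u_{x}u_{xx}=\partial_{x}(uu_{xx}+\tfrac{1}{2}u_{x}^{2})$, the identity $uu_{xx}+\tfrac{1}{2}u_{x}^{2}=\tfrac{1}{2}(u^{2})_{xx}-\tfrac{1}{2}u_{x}^{2}$, and $\epsilon^{2}\partial_{x}^{2}=1-(1-\epsilon^{2}\partial_{x}^{2})$, collecting terms produces the weak peakon form
\begin{equation*}
u_{t}+\Bigl(u+\tfrac{\Gamma}{\epsilon^{2}}\Bigr)u_{x}+\partial_{x}\left[p\ast\Bigl(u^{2}+\tfrac{\epsilon^{2}}{2}u_{x}^{2}-\tfrac{\Gamma+\alpha\epsilon^{2}}{\epsilon^{2}}\,u-\tfrac{\beta}{3}u^{3}-\tfrac{\gamma}{4}u^{4}\Bigr)\right]=0,
\end{equation*}
which is the natural setting, as it is well defined for $u\in H^{1}_{\mathrm{loc}}$.

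Substituting the ansatz with $z:=x-ct$ and $\lambda:=\alpha+c$, one uses $u_{t}=-cu_{x}$ and $u^{2}=\epsilon^{2}u_{x}^{2}=\lambda^{2}e^{-2|z|/\epsilon}$, together with the elementary convolutions
\begin{equation*}
p\ast e^{-n|z|/\epsilon}=\tfrac{1}{1-n^{2}}\bigl(e^{-n|z|/\epsilon}-n\,e^{-|z|/\epsilon}\bigr)\ \ (n\neq 1),\qquad p\ast e^{-|z|/\epsilon}=\tfrac{1}{2}\bigl(1+|z|/\epsilon\bigr)e^{-|z|/\epsilon}.
\end{equation*}
Differentiating once in $z$ and adding $(u+\Gamma/\epsilon^{2}-c)u_{x}$, the equation collapses to a linear combination of the mutually linearly independent distributions $e^{-|z|/\epsilon}\sign(z)$, $z\,e^{-|z|/\epsilon}$, $e^{-3|z|/\epsilon}\sign(z)$ and $e^{-4|z|/\epsilon}\sign(z)$, while the $e^{-2|z|/\epsilon}\sign(z)$ contributions cancel identically and impose no constraint. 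Equating the surviving coefficients to zero forces, in order, $\gamma=0$ (from $e^{-4|z|/\epsilon}\sign(z)$), $\beta=0$ (from $e^{-3|z|/\epsilon}\sign(z)$), and $\Gamma+\alpha\epsilon^{2}=0$ (from $z\,e^{-|z|/\epsilon}$, which cannot be absorbed by $\sign(z)e^{-|z|/\epsilon}$ owing to its extra factor of $|z|$); the $e^{-|z|/\epsilon}\sign(z)$ coefficient then collapses to $\lambda(c-\Gamma/\epsilon^{2}-\lambda)/\epsilon$, which vanishes precisely under these constraints and the choice $\lambda=\alpha+c$, in agreement with the starting ansatz. The hypothesis of the theorem is used only to guarantee $\lambda\neq 0$; otherwise $u\equiv 0$ trivially.

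The main obstacle is the derivation of the nonlocal form, in particular the correct distributional handling of the identity $p\ast f_{xx}=p\ast f-f$ and the algebraic manipulation needed to pull every second-order spatial derivative outside the resolvent. Once the weak form is in place, coefficient-matching is mechanical, and the linear independence of the four distributions above closes the proof.
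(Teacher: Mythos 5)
Your argument is correct, and it takes a genuinely different route from the paper's. The paper works at the level of the once-integrated travelling-wave equation \eqref{5.1.2}: it sets the integration constant to zero, derives the weak identity \eqref{6.2.2} against test functions $\psi$, converts the terms $\int\phi^n\psi''\,dz$ via Lemma \ref{lema6.2} into a $\psi(0)$-contribution plus the integrals $J_n$, and then applies the fundamental lemma (Lemma \ref{lema6.1}) on test functions supported away from the peak to force $\Gamma=-\alpha\epsilon^2$, $\beta=\gamma=0$, the remaining $\psi(0)$-term fixing the amplitude $A=\alpha+c$. You instead recast \eqref{1.0.4} in the nonlocal form $u_t+(u+\Gamma/\epsilon^2)u_x+\partial_x\bigl[p\ast\bigl(u^2+\tfrac{\epsilon^2}{2}u_x^2-\tfrac{\Gamma+\alpha\epsilon^2}{\epsilon^2}u-\tfrac{\beta}{3}u^3-\tfrac{\gamma}{4}u^4\bigr)\bigr]=0$ with $p(x)=\tfrac{1}{2\epsilon}e^{-|x|/\epsilon}$ (this reformulation is correct: writing $\epsilon^2(uu_{xxx}+2u_xu_{xx})=\epsilon^2\partial_x^2(uu_x)-\tfrac{\epsilon^2}{2}\partial_x(u_x^2)$ and using $\epsilon^2\partial_x^2=1-(1-\epsilon^2\partial_x^2)$ gives exactly your kernel content), substitute the ansatz, and match coefficients of the linearly independent profiles $\sign(z)e^{-n|z|/\epsilon}$, $n=1,3,4$, and $z\,e^{-|z|/\epsilon}$; the $n=2$ contributions do cancel, and the resonant convolution $p\ast e^{-|z|/\epsilon}=\tfrac12(1+|z|/\epsilon)e^{-|z|/\epsilon}$ is precisely what produces the secular term $z\,e^{-|z|/\epsilon}$ whose coefficient encodes $\Gamma+\alpha\epsilon^2=0$ — the counterpart of the $J_1$-coefficient in \eqref{6.2.3}. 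Your route works directly with the standard peakon-friendly weak formulation of the PDE, needs no choice of integration constant and no appeal to the fundamental lemma, and makes the role of each parameter visible in a single coefficient; the paper's route has the advantage of staying inside its own notion of weak travelling wave (the distributional form of \eqref{5.1.2} from Subsection \ref{subsec5.2}), so no second formulation has to be introduced.

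Two points to tighten. First, since the theorem is phrased for the paper's notion of weak solution, you should state (or prove in a line) that for profiles in $H^1_{loc}$ the nonlocal form you use is an equivalent weak formulation of \eqref{1.0.4}; this is standard for CH-type equations but is the one step your write-up takes for granted. Second, your closing remark is slightly off: with $\lambda=\alpha+c$ the nondegeneracy you actually need is $\alpha\neq-c$, not the printed hypothesis $\alpha\neq c$ (which appears to be a typo in the statement), and, like the paper's own proof, you work with the exponent $|x-ct|/\epsilon$ rather than the $|x-ct|$ appearing in the statement; neither point affects the substance of your argument.
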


Before proving Theorem \ref{teo6.2}, we need to recall a couple of classical results.  To begin with, we fix the following notation: by $C^\infty_0([a,b])$ we mean the members of $C^\infty_0(\R)$ whose support is contained in the set $[a,b]$, with $a<b$.

\begin{lemma}\label{lema6.1}
Let $\al$ be a continuous function on $[a,b]$. Assume that 
$$\int_{a}^{b}\al(x)f(x)=0,$$
for every continuous function $f$. Then $\al(x)=0$, for all $x\in[a,b]$.
\end{lemma}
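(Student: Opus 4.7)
The statement is the classical fundamental lemma of the calculus of variations, in the form where the test functions are allowed to be merely continuous (a much weaker class than, say, $C^\infty_0$, which makes the proof especially short). My plan is to exploit the strength of the hypothesis by choosing $f$ to be $\alpha$ itself.

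More precisely, since $\alpha$ is by hypothesis continuous on $[a,b]$, it is admissible as a test function. Substituting $f=\alpha$ into the hypothesis yields
\[
\int_a^b \alpha(x)^2\,dx = 0.
\]
Now $\alpha^2$ is continuous and nonnegative on $[a,b]$, and the only continuous nonnegative function on a compact interval whose integral vanishes is the zero function. This is the key observation; it can be established in one line by contradiction. If there were $x_0 \in [a,b]$ with $\alpha(x_0) \neq 0$, then by continuity there would exist $\delta>0$ and a (relative) neighbourhood $U$ of $x_0$ in $[a,b]$ on which $\alpha^2 \geq \alpha(x_0)^2/2 > 0$; integrating over $U$ would give a strictly positive contribution to $\int_a^b \alpha^2\,dx$, contradicting the displayed equality.

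So the proof collapses to essentially a single substitution plus a standard positivity argument, and there is no real obstacle. The only thing to pay slight attention to is the behaviour of $\alpha$ at the endpoints $a$ and $b$, but since $\alpha$ is continuous on the closed interval, the same neighbourhood argument works verbatim for endpoint values (using a one-sided neighbourhood). Hence $\alpha \equiv 0$ on $[a,b]$.
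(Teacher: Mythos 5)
Your proof is correct. The paper itself does not give an argument: it simply cites Gelfand--Fomin (Lemma 1, p.~9), whose classical proof runs by contradiction with an explicitly constructed continuous bump function --- if $\al(x_0)\neq 0$, one takes $f$ to be something like $(x-x_1)(x_2-x)$ on a small interval $[x_1,x_2]$ around $x_0$ where $\al$ keeps a fixed sign, and zero elsewhere, so that $\int_a^b \al f\,dx\neq 0$. Your route is different and slicker: because the hypothesis quantifies over \emph{all} continuous $f$, and $\al$ itself is continuous, you may take $f=\al$, obtaining $\int_a^b\al^2\,dx=0$ and concluding by the standard positivity argument for continuous nonnegative functions (including the one-sided neighbourhood remark at the endpoints, which you handle correctly). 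What the bump-function proof buys is robustness: it survives when the test class is restricted to functions vanishing at the endpoints, or to $C^k$ or $C^\infty_0$ test functions, which is the situation the lemma is usually stated for in the calculus of variations; your substitution $f=\al$ would not be available there. For the lemma exactly as stated here, with arbitrary continuous test functions, your argument is complete and shorter.
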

\begin{proof}
See \cite{fomin}, Lemma 1, on page 9.
\end{proof}

\begin{lemma}\label{lema6.3}
Let $A,\epsilon$ be two real constants such that $A\epsilon\neq0$, and $\phi:\R\rightarrow\R$ be defined by $\phi(z)=A\, e^{-|z|/\epsilon}$.
%Let $\phi$ as given by Lemma $\ref{lema6.2}$.
Then $\phi'(z)=-\sign{z}\,\phi(z)/\epsilon$ in the weak sense. In particular, $(\phi'(z))^2=\phi(z)^2/\epsilon^2$.
\end{lemma}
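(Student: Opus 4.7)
The plan is to verify the distributional identity by splitting integrals at the only point of non-smoothness, $z=0$, and integrating by parts. Away from $0$, the function $\phi$ is classically differentiable, and a direct computation gives $\phi'(z) = -(A/\epsilon)\sign(z)\,e^{-|z|/\epsilon} = -\sign(z)\,\phi(z)/\epsilon$. Therefore the claim at the level of distributions is that, for every test function $\psi\in C_0^\infty(\R)$,
$$-\int_\R \phi(z)\,\psi'(z)\,dz = -\frac{1}{\epsilon}\int_\R \sign(z)\,\phi(z)\,\psi(z)\,dz.$$

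To establish this, I would write the left-hand side as $\int_{-\infty}^{0}\phi\psi'\,dz+\int_{0}^{+\infty}\phi\psi'\,dz$. On each half-line $\phi$ is smooth, so classical integration by parts applies, producing the pointwise derivative $\phi'(z)=-\sign(z)\phi(z)/\epsilon$ inside the integrals plus boundary contributions. The contributions at $\pm\infty$ vanish thanks to the exponential decay of $\phi$ (and the compact support of $\psi$), while at $z=0$ the two boundary terms are $\phi(0^{-})\psi(0)$ and $-\phi(0^{+})\psi(0)$, which cancel because $\phi(0^{-})=\phi(0^{+})=A$ by continuity of $\phi$ at the origin. Combining the two pieces yields exactly the desired identity. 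Since this holds for all $\psi\in C_0^\infty(\R)$, an appeal to a density/uniqueness argument (for instance Lemma \ref{lema6.1}, applied on each compact interval to the continuous function $\phi'+\sign(z)\phi/\epsilon$ on $\R\setminus\{0\}$) confirms that $\phi'(z)=-\sign(z)\phi(z)/\epsilon$ in the weak sense.

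For the final assertion, once the weak derivative coincides with the locally integrable function $-\sign(z)\phi(z)/\epsilon$, squaring is justified pointwise almost everywhere: $(\phi'(z))^2 = \sign(z)^2\,\phi(z)^2/\epsilon^2 = \phi(z)^2/\epsilon^2$, since $\sign(z)^2=1$ off the measure-zero set $\{0\}$.

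The only delicate point is ensuring that no Dirac mass at $z=0$ sneaks into $\phi'$; this is precisely what the cancellation of the boundary terms at $0$ rules out, and it hinges on the continuity (not differentiability) of $\phi$ there. No real obstacle arises beyond this bookkeeping.
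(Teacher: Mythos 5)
Your proposal is correct and follows essentially the same route as the paper: split the integral $\int_{\R}\phi\,\psi'\,dz$ at $z=0$, integrate by parts on each half-line, use the decay of $\phi$ (and the support of $\psi$) at infinity together with continuity of $\phi$ at the origin so the boundary terms cancel, and conclude the weak identity; the squaring a.e.\ then gives $(\phi')^2=\phi^2/\epsilon^2$. The extra appeal to Lemma \ref{lema6.1} is unnecessary (the integral identity for all test functions is already the definition of the weak derivative), but harmless.
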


\begin{proof}
Let $\phi$ be any test function. Then
$$
\ba{lcl}
\ds{\int_{-\infty}^{+\infty}{\phi(z)\psi'(x)}dx}&=&\ds{\int_{-\infty}^{0}{Ae^{z/\epsilon}\psi'(z)}dz+\int^{+\infty}_{0}{Ae^{-z/\epsilon}\psi'(z)}dz}\\
\\
&=&\ds{-\frac{1}{\epsilon}\int_{-\infty}^{0}{Ae^{z/\epsilon}\psi(z)}dz+\f{1}{\epsilon}\int^{+\infty}_{0}{Ae^{-z/\epsilon}\psi(z)}dz}\\
\\
&=&\ds{\int_{-\infty}^{+\infty}\f{\sign(z)}{\epsilon}\phi(z)\psi(z)dz}
\ea
$$
and $\phi(z)=-\sign{(z)}\phi(z)/\epsilon$.
\end{proof}

\begin{lemma}\label{lema6.2}
Under the conditions of Lemma \ref{lema6.3}, if
% Let $A,\epsilon$ be two real constants such that $A\epsilon\neq0$, $\phi:\R\rightarrow\R$ defined by $\phi(z)=A\, e^{-|z|/\epsilon}$,
$\psi\in{\cal S}(\R)$ and
$$J_n(\psi):=\int_{-\infty}^\infty \phi^n(z)\psi(z)dz\quad\text{and}\quad I_n(\psi):=\int_{-\infty}^\infty \phi^n(z)\psi''(z)dz,$$
then
$$I_n(\psi)=-2\left(\f{n}{\epsilon}\right)A^n\psi(0)+\left(\f{n}{\epsilon}\right)^2J_n(\psi).$$
\end{lemma}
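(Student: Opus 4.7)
The plan is to split the integral defining $I_n(\psi)$ at $z=0$, where $\phi^n$ fails to be smooth, and then integrate by parts twice on each half-line. On $(-\infty,0)$ we have $\phi^n(z)=A^n e^{nz/\epsilon}$, so $(\phi^n)'(z)=(n/\epsilon)\phi^n(z)$ and $(\phi^n)''(z)=(n/\epsilon)^2\phi^n(z)$; on $(0,+\infty)$ we have $\phi^n(z)=A^n e^{-nz/\epsilon}$, so $(\phi^n)'(z)=-(n/\epsilon)\phi^n(z)$ and again $(\phi^n)''(z)=(n/\epsilon)^2\phi^n(z)$. In both cases $\phi^n(0)=A^n$. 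Since $\psi\in{\cal S}(\R)$ and $\phi^n$ decays exponentially, all boundary terms at $\pm\infty$ vanish, leaving only contributions from $z=0$.

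First I would write
$$
I_n(\psi)=\int_{-\infty}^{0}\phi^n(z)\psi''(z)\,dz+\int_{0}^{+\infty}\phi^n(z)\psi''(z)\,dz.
$$
Applying integration by parts once to each term gives
$$
I_n(\psi)=A^n\psi'(0)-A^n\psi'(0)-\frac{n}{\epsilon}\int_{-\infty}^{0}\phi^n(z)\psi'(z)\,dz+\frac{n}{\epsilon}\int_{0}^{+\infty}\phi^n(z)\psi'(z)\,dz,
$$
so the $\psi'(0)$ contributions cancel. A second integration by parts on each remaining integral produces boundary terms $-(n/\epsilon)A^n\psi(0)$ at $z=0$ from each side (same sign, since the signs of $(\phi^n)'$ on the two half-lines are opposite), together with volume integrals equal to $(n/\epsilon)^2$ times $\int_{-\infty}^{0}\phi^n\psi\,dz$ and $\int_{0}^{+\infty}\phi^n\psi\,dz$, respectively. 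Summing these and recombining the two volume integrals into $J_n(\psi)$ yields exactly
$$
I_n(\psi)=-2\Bigl(\frac{n}{\epsilon}\Bigr)A^n\psi(0)+\Bigl(\frac{n}{\epsilon}\Bigr)^2 J_n(\psi).
$$

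There is no real obstacle here: the argument is a clean bookkeeping exercise of two integrations by parts combined with careful tracking of signs arising from the absolute value in $|z|/\epsilon$. The only point that deserves attention is the delta-like contribution $-2(n/\epsilon)A^n\psi(0)$, which encodes the distributional identity $(\phi^n)''=(n/\epsilon)^2\phi^n-2(n/\epsilon)A^n\delta_0$; indeed, one could alternatively derive the lemma by applying Lemma \ref{lema6.3} (to the rescaled function $\phi^n$, which has the same shape with $\epsilon$ replaced by $\epsilon/n$ and amplitude $A^n$) to obtain $(\phi^n)'(z)=-\sign(z)(n/\epsilon)\phi^n(z)$, differentiating once more using $(\sign)'=2\delta_0$ and $\phi^n(0)=A^n$, and then pairing with $\psi$. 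Both routes give the same identity.
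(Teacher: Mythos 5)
Your proof is correct and follows essentially the same route as the paper, which simply states that the identity follows by integrating $I_n(\psi)$ by parts; you have carried out exactly that computation, splitting at $z=0$ and tracking the boundary contributions. The sign bookkeeping and the final combination into $-2(n/\epsilon)A^n\psi(0)+(n/\epsilon)^2 J_n(\psi)$ are all accurate.
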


\begin{proof}
We only need to integrate $I_n(\psi)$ by parts to obtain the result.
\end{proof}

We recall that if $\phi$ is a distribution, its $n$th derivative $\phi^{(n)}$ satisfies the relation
$$\int_{-\infty}^{+\infty} \phi(z)\psi^{(n)}(z)dz=(-1)^n\int_{-\infty}^{+\infty} \phi^{(n)}(z)\psi(z)dz,$$
for any test function $\psi$.

%\begin{lemma}\label{lema6.3}
%Let $\phi$ as given by Lemma $\ref{lema6.2}$. Then $\phi'(z)=-\sign{z}\,\phi(z)/\epsilon$ in the weak sense. In particular, $(\phi'(z))^2=\phi(z)^2/\epsilon^2$.
%\end{lemma}

%\begin{proof}
%Let $\phi$ be any test function. Then
%$$
%\ba{lcl}
%\ds{\int_{-\infty}^{+\infty}{\phi(z)\psi'(x)}dx}&=&\ds{\int_{-\infty}^{0}{Ae^{z\epsilon}\psi'(z)}dz+\int^{+\infty}_{0}{Ae^{-z\epsilon}\psi'(z)}dz}\\
%\\
%&=&\ds{-\int_{-\infty}^{0}{Ae^{z\epsilon}\psi(z)}dz+\int^{+\infty}_{0}{Ae^{-z\epsilon}\psi(z)}dz}\\
%\\
%&=&\ds{\int_{-\infty}^{+\infty}\f{\sign(z)}{\epsilon}\phi(z)\psi(z)dz}.
%\ea
%$$
%Then, we have $\phi(z)=-\sign{(z)}\phi(z)/\epsilon$.
%\end{proof}

We are bound to prove Theorem \ref{teo6.2}. To do it, we assume that $u(x,t)=Ae^{-|z|/\epsilon}$, with $z=x-ct$, is a solution of (\ref{1.0.4}). 
%After this simple, but crucial assumption, there are different forms to proof the desired result. One is to substitute our candidate to solution in (\ref{1.0.4}), multiply by a test function (depending on two independent variables) and transfer all possible derivatives from $\phi$ to $\psi$ to reach to a weak formulation. 
Among the many alternatives to prove Theorem \ref{teo6.2}, we will consider once more our 
%Another one, which is simpler, is to use again our 
second conservation law given in Theorem \ref{teo4.2}. More precisely, we will examine equation \eqref{5.1.2} after plugging the function $\phi(z)=Ae^{-|z|/\epsilon}$ in such a way that the peakon solution will be forced to satisfy the resulting equation with the right hand side equals $0$.
 %its consequence, given by Eq. (\ref{5.1.2}) after we assumed $u(x,t)=\phi(z)$, with $\phi(z)=Ae^{-|z|\epsilon}$. A peakon as the one in Theorem \ref{teo6.2} must satisfy that equation, in the weak sense, with $k_1=0$.

Taking $A=0$ in equation (\ref{5.1.2}), from Lemma \ref{lema6.3} we have the following weak formulation for (\ref{5.1.2}):
\bb\label{6.2.2}
\int_{-\infty}^{+\infty}\left((\al+c)\phi-2\phi^2+\f{\beta}{3}\phi^3+\f{\gamma}{4}\phi^4\right)\psi dz+\f{\epsilon^2}{2}\int_{-\infty}^{+\infty}\phi^2\psi''dz+(\Gamma-c\epsilon^2)\int_{-\infty}^{+\infty}\phi\psi''dz=0,
\ee
for any test function $\psi\in{\cal S}(\R)$. The demonstration is concluded if we prove that the function $\phi(z)=A e^{-|z|/\epsilon}$ solves Eq. (\ref{6.2.2}) if and only if $\be=\gamma=0$, $\Gamma=-\al\epsilon^2$ and $A=\al+c$.

\begin{proof}({\bf of Theorem \ref{teo6.2}}): 
By Lemma \ref{lema6.2}, we can rewrite equation (\ref{6.2.2}) as
\bb\label{6.2.3}
2A\left(\epsilon A+\f{\Gamma-c\epsilon^2}{\epsilon}\right)\psi(0)-\f{\Gamma+\al\epsilon^2}{\epsilon^2}J_1-\f{\be}{3}J_3-\f{\gamma}{4}J_4=0.
\ee
Observe that the contribution of $J_2$ vanishes, once the coefficients of this term cancel one to each other. Finally, equation (\ref{6.2.3}) can be rewritten as
\bb\label{6.2.4}
2A\left(\epsilon A+\f{\Gamma-c\epsilon^2}{\epsilon}\right)\psi(0)-\int_{-\infty}^{+\infty}\left(\f{\Gamma+\al\epsilon^2}{\epsilon^2}\phi+\f{\be}{3}\phi^3+\f{\gamma}{4}\phi^4\right)\psi dz=0.
\ee
Note now that equation (\ref{6.2.4}) must be valid for any test function $\psi$. Restricting ourselves firstly to $\psi\in C^{\infty}_0([a,b])$, with $0<a<b$ or $a<b<0$, we are forced to conclude, in view of Lemma \ref{lema6.1}, that 
$$
\f{\Gamma+\al\epsilon^2}{\epsilon^2}\phi+\f{\be}{3}\phi^3+\f{\gamma}{4}\phi^5=0,
$$
which implies that $\Gamma=-\al\epsilon^2$, $\be=0$ and $\gamma=0$. Then, with these restrictions, equation (\ref{6.2.3}) is consistent provided that $A=\al+c$. This proves the {\it if} part. The {\it only if} part is proved noticing that if $\be=\gamma=0$, $\Gamma=-\al\epsilon^2$ and $A=\al+c$, then equation (\ref{6.2.4}) is automatically satisfied, as well as equation (\ref{6.2.2}).
\end{proof}

\begin{remark}\label{rem6.2}
Under the restrictions of Theorem \ref{teo6.2}, equation $(\ref{1.0.4})$ can be transformed into the Camassa-Holm equation under the change $u\mapsto u-\al$. Therefore, not only the equation has peakon solutions as the one given in Theorem \ref{teo6.2}, but also multipeakon solutions \cite{chprl}. We, however, do not consider them here because they are, in essence, the same of the Camassa-Holm equation, taking into account weak derivatives and the shift $u\mapsto u-\al$.
\end{remark}

\begin{remark}
Theorem $\ref{teo6.2}$ assures that equation $(\ref{1.0.4})$ has peakon solutions of the type $u(x,t)=A e^{\lambda|x-ct|}$ {\it if and only if} $A=\al+c$ and $\lambda=-1/\epsilon$. However, the same theorem {\it does not imply} on the nonexistence of other peakon solutions. In fact, Theorem \ref{teo5.8} assures the existence of other peakon solutions, such as the periodic ones. Combining Theorem \ref{teo5.8} and Theorem $\ref{teo6.2}$, we conclude that if $u=u(x,t)$ is a peakon solution of $(\ref{1.0.4})$ with $\be\neq0$, then $u$ cannot be of the form $u(x,t)=(\al+c)e^{-|x-ct|/\epsilon}$. 
\end{remark}

\section{Members describing pseudo-spherical surfaces}\label{sec7}
Here we investigate the geometric integrability of equation \eqref{1.0.11}. Following Reyes \cite{reyes2002}, an equation is geometrically integrable if it describes a non-trivial one-parameter family of pseudo-spherical surfaces.

A two-dimensional manifold ${\cal M}$ is a PSS if there exist one-forms $\omega_1,\,\omega_2,\,\omega_3$ on ${\cal M}$ such that $\omega_1\wedge\omega_2\neq0$ and
\bb\label{7.0.1}
d\omega_1=\omega_3\wedge\omega_2,\quad d\omega_2=\omega_1\wedge\omega_3,\quad d\omega_3=\omega_1\wedge\omega_2.
\ee
Conditions \eqref{7.0.1} are structure equations of ${\cal M}$, which endows ${\cal M}$ with the metric $ds^2=\omega_1^2+\omega_2^2$ having constant Gaussian curvature ${\cal K}=-1$.

Assume that $F[u_{(n)}]=0$ is a differential equation, where the unknown $u$ is assumed to depend on two variables, say $x$ and $t$. This equation is said to describe a PSS if there are one-forms \eqref{1.0.11}, with smooth coefficient functions, such that the triple $\{\omega_1,\omega_2,\omega_3\}$ satisfies \eqref{7.0.1} whenever $u$ is a solution of $F=0$. We note that conditions \eqref{7.0.1} can be rewritten as
$$d\Omega=\Omega\wedge\Omega,$$
where $\Omega$ is the matrix whose entries are one-forms, namely,
\bb\label{7.0.2}\Omega=\f{1}{2}\begin{pmatrix}
\omega_2 & \omega_1-\omega_3 \\
\omega_1-\omega_3 & -\omega_2
\end{pmatrix}=:(\Omega_{ij}),
\ee
and $\Omega\wedge\Omega:=(\sum_{k}\Omega_{ik}\wedge\Omega_{kj})$ and $d\Omega:=(d\Omega_{ij})$. In view of \eqref{7.0.2}, we can see that $\omega_2$ is related to the spectral parameter from the theory of integrable systems, see \cite{chern}. For further details about equations describing PSS, see \cite{diego1,diego2,chern,keti1992,reyes2002,reyes2011,keti2015}. The interested reader is also referred to \cite{rogers1982,rogers2002}, where this topic is extensively covered.

\subsection{Technical results}

We observe, from a pragmatic point of view, that the coefficient functions of the one-forms are arbitrary, which may bring (and it really does!) practical and computational complications. This is usually overcome imposing some restrictions on the differential functions involved \cite{keti1992,keti2015}. Therefore, in our analyses we employ the same restrictions used in \cite{diego1,diego2,keti1992,keti2015} to investigate the geometric integrability of third order equations describing PSS. 

We have the following result from \cite{keti1992}.
\begin{lemma}\label{lema7.1}
Let $u_t=u_{xxx}+G(u,u_x,u_{xx})$ be a differential equation describing $PSS$ with associate one-forms $\eqref{1.0.11}$ satisfying $\omega_2=\lambda dx+f_{22}dt$, in which $\lambda$ is a real parameter. Then $G$ is independent of $\lambda$ if and only if, up to a change of the dependent variable, the aforementioned equation is one of the following:
\bb\label{7.1.1}
\ba{l}
u_t=u_{xxx}+au^2u_x+buu_x+cu_x,\\
\\
u_t=u_{xxx}+a u_{xx}-3b(uu_x){x}+buu_{x}(3bu-2a),\\
\\
u_t=(u_x-\ell(u))_{xx}-[(\delta u+\mu)u_x-\ell(u)]_x,\\
\\
u_t=u_{xxx}+au_{xx}+bu_{x}+cu+\eta,
\ea
\ee
where $a,\,b,\,c,\,\delta,\,\mu$ and $\eta$ are constants, with $ab\delta\neq0$, and $\ell(u)$ is a differentiable function.
\end{lemma}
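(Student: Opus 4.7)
The plan is to follow the classification strategy of Chern--Tenenblat, which was adapted to third-order evolution equations in \cite{keti1992}. First I would unpack the structure equations \eqref{7.0.1} using $\omega_i = f_{i1}dx + f_{i2}dt$ with $f_{21}=\lambda$ constant in $(x,t,u,u_x,u_{xx},\dots)$. Writing $d\omega_i = (D_x f_{i2} - D_t f_{i1})\,dx\wedge dt$, and substituting $u_t = u_{xxx}+G$, $u_{tx}=D_x(u_{xxx}+G)$, etc.\ whenever a total time derivative falls on a jet variable, the three structure equations become polynomial identities in the jet coordinates $u_x,u_{xx},u_{xxx},u_{xxxx}$. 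Since the $f_{ij}$ and $G$ are assumed to depend only on $(x,t,u,u_x,u_{xx})$, treating $u_{xxx},u_{xxxx}$ as independent and matching coefficients is the main engine of the argument.

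Next I would extract the high-order consequences. The third structure equation $D_x f_{32}-D_t f_{31} = f_{11}f_{22}-\lambda f_{12}$ contributes a term $(\partial_{u_{xx}} f_{31})u_{xxxx}$ on the left, forcing $\partial_{u_{xx}}f_{31}=0$ (similarly for $f_{11}$), and leaving only $f_{11},f_{31}$ dependent on $(x,t,u,u_x)$. A parallel reading of the first structure equation then bounds the $u_x$- and $u_{xx}$-dependence of $f_{12},f_{32}$ linearly in these jet coordinates. From this point the independence of $G$ from $\lambda$ is brought in: expanding each relation in powers of $\lambda$ and equating coefficients separates the identities into a purely geometric part (independent of the PDE) and a PDE-specific part, so that each coefficient of $\lambda^k$ must vanish individually. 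This freezes the $\lambda$-dependence of the $f_{ij}$ and reduces the system to a small set of ODEs/PDEs for finitely many functions of $u$ and of $(x,t)$.

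I would then perform the case split driven by the zero loci of the distinguished coefficients that appear when one solves those residual equations---essentially, whether certain functions of $u$ that emerge (call them $\ell(u)$, $h(u)$, etc.) are linear, quadratic, or genuinely nonlinear, and whether the leading $u_x^2$-coefficient in $G$ vanishes. Each branch is closed off by integrating the resulting constraint on $G$: the generic branch produces the first family $u_t=u_{xxx}+au^2u_x+buu_x+cu_x$; a second branch, in which a quadratic relation in $u$ forces a Burgers-type structure, yields the second family; a branch where an arbitrary function $\ell(u)$ survives gives the third family; and the degenerate branch, where the nonlinearity collapses, yields the affine family $u_t=u_{xxx}+au_{xx}+bu_x+cu+\eta$. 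The non-degeneracy conditions $ab\delta\neq 0$ (when relevant) come out automatically as the precise requirement that the associated one-forms satisfy $\omega_1\wedge\omega_2\neq 0$.

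The hard part will be the bookkeeping in the case split: the residual system after the first two reductions couples several unknown functions of $u$ through nonlinear ODEs, and one must carefully normalize using the allowed change of dependent variable $u\mapsto u+\text{const}$ (and, where possible, an affine rescaling) to collapse apparent sub-branches into the four canonical forms. I would also handle the converse direction by explicitly exhibiting one-forms \eqref{1.0.11} for each of the four equations in \eqref{7.1.1}, verifying the structure equations and the non-degeneracy $\omega_1\wedge\omega_2\neq 0$ on the solution set, thereby showing that each member of the list genuinely describes a PSS under the stated hypothesis on $\omega_2$.
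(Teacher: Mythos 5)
The paper does not prove this lemma at all: its ``proof'' is a one-line citation to Rabelo and Tenenblat \cite{keti1992} (Theorems 3.1 and 4.1), so there is no internal argument for you to match. Your proposal correctly reconstructs the method that underlies that citation --- the Chern--Tenenblat scheme of writing $d\omega_i=(D_xf_{i2}-D_tf_{i1})\,dx\wedge dt$, substituting the evolution equation for every $t$-derivative, and matching coefficients of the highest jet variables to force $f_{11},f_{31}$ to drop their dependence on $u_{xx}$, after which the hypothesis that $G$ is independent of $\lambda$ lets one split the identities by powers of $\lambda$. That is indeed the right engine, and your remark that the converse direction requires exhibiting explicit one-forms with $\omega_1\wedge\omega_2\neq0$ for each of the four families is also correct and is part of what \cite{keti1992} actually does.

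The gap is that everything after the reduction is asserted rather than carried out. The entire content of the lemma is the claim that the residual system of ODEs in $u$ closes into \emph{exactly} these four families and no others; your third paragraph names the branches by the answers they are supposed to produce (``the generic branch produces the first family,'' ``a branch where an arbitrary function $\ell(u)$ survives gives the third family'') without deriving the constraints that define those branches or showing the list is exhaustive. In particular, nothing in the sketch explains why the second family acquires the very specific nonlinearity $buu_x(3bu-2a)$, nor why the conditions $ab\delta\neq0$ are the correct non-degeneracy requirements (you assert they ``come out automatically'' from $\omega_1\wedge\omega_2\neq0$, but that needs to be checked against the explicit $f_{ij}$ in each branch). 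A minor technical slip: since $u_{txx}=D_x^2(u_{xxx}+G)$, the highest derivative whose coefficient forces $\partial_{u_{xx}}f_{31}=0$ is $u_{xxxxx}$, not $u_{xxxx}$. As it stands the proposal is a viable plan for reproving the Rabelo--Tenenblat classification, but it is not a proof; to be self-contained it would need the full case analysis, and otherwise the honest course --- the one the paper takes --- is simply to cite \cite{keti1992}.
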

\begin{proof}
See \cite{keti1992}, theorems 3.1 and 4.1. See also Theorem 2.13 of \cite{reyes2011}.
\end{proof}

Our next result is an extremely useful one, but the notation might be very hard and difficult. Therefore, in order to make it easier as possible, we will follow conventions used in \cite{chern,keti1992,keti2015}: $z_0:=u$, $z_1:=u_x$, $z_2:=u_{xx}$ and $z_3=u_{xxx}$.
\begin{lemma}\label{lema7.2}
A necessary condition for an equation
\bb\label{7.1.2}
z_{0,t}-z_{2,t}=\lambda\,z_0z_3+G(z_0,z_1,z_2),\quad G\neq0,\,\lambda\neq0
\ee
to describe $PSS$ with associated one-forms \eqref{1.0.11} satisfying 
$$f_{i1}=\mu_i\,f_{11}+\eta_i,\quad \mu_i,\,\eta_i\in\R,\,\,i=2,3,$$
is the existence of smooth functions $\psi=\psi(z_0,z_1)$ and $h=h(z_0-z_2)$, with $h'\neq0$, and these two functions and $G$ satisfy at least one of the following conditions:
\begin{enumerate}
    \item $G=(z_1\,\psi_{z_0}+z_2\psi_{z_1}+m\,\psi)/h'$, where $0\neq m\in\R$;
    \item $G=-\lambda (z_1\,h+z_0z_1h'+m_1z_1+m_2z_2)/h'$, where $m_1,m_2\in\R$ and $m_2\neq0$;
    \item $G=[z_2\,\psi_{z_1}+z_1\psi_{z_0}+m_1\,\psi-\lambda z_0 z_1 h'-(\lambda z_1+\lambda m_1 z_0+m_2)h]/h'$, where $m_1$ and $m_2$ are constants such that $(m_1,m_2)\neq(0,0)$;
    \item $G=\lambda(z_1z_2-2z_0z_1-m z_1/\tau\mp z_2/\tau)+\tau e^{\pm \tau z_1}(\tau z_0 z_2\pm z_1+m z_2)\varphi(z_0)\pm e^{\pm\tau z_1}(\tau z_0 z_1+\tau z_1 z_2+m z_1\pm z_2)\varphi'(z_0)+z_1^2 e^{\pm \tau z_1}\varphi''(z_0)$, where $m,\tau\in\R$, $\varphi(z_0)\neq0$ and $\tau>0$;
    \item $G=\lambda(2z_1z_2-3z_0z_1-m_2z_1)+m_1\theta e^{\theta z_0}(\theta z_1^3+z_1z_2+2z_0z_1+m_2z_1)$, where $m_1,\,m_2,\,\theta\in\R$, with $\theta\neq0$.
\end{enumerate}
\end{lemma}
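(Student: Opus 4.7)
The plan is to translate the three PSS structure equations $d\omega_i = \omega_j \wedge \omega_k$ (for the appropriate cyclic choices in \eqref{7.0.1}) into scalar PDEs for the coefficient functions $f_{ij}$, substitute the prescribed linear ansatz $f_{21} = \mu_2 f_{11} + \eta_2$ and $f_{31} = \mu_3 f_{11} + \eta_3$, eliminate the mixed derivative $z_{2,t}$ by means of \eqref{7.1.2}, and then compare coefficients of the independent jet indeterminates $z_3,z_2,z_1$ in the resulting identities. Throughout, I follow the template used by Rabelo--Tenenblat and reused in \cite{keti2015}: assume the $f_{ij}$ depend on $x,t,z_0,z_1,z_2$ only, so that $z_3$ and $z_{k,t}$ enter the structure equations linearly and their coefficients must vanish separately.

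First I would expand each $d\omega_i = (D_t f_{i1} - D_x f_{i2})\,dt\wedge dx$ via the chain rule. The highest-order term appearing is $\lambda z_0 z_3 \cdot \partial_{z_2}f_{i1}$, which comes from replacing $z_{2,t}$ using \eqref{7.1.2}. Since no $f_{ij}$ depends on $z_3$, the coefficient of $z_3$ in each structure equation must vanish identically; combined with the ansatz on $f_{i1}$, this forces $\partial_{z_2} f_{11}$ to be a function of $z_0 - z_2$ alone, producing the promised $h=h(z_0-z_2)$ with $h'\ne0$ (the case $h'\equiv0$ being incompatible with the nondegeneracy $\omega_1\wedge\omega_2\ne0$ once $\lambda\neq0$ is used). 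Inserting this into the remaining identities and isolating the coefficient of $z_2$ then determines the $z_2$-dependence of the $f_{i2}$'s in terms of $h$, reducing the unknowns to $f_{11}(x,t,z_0,z_1)$, and functions $f_{i2}$ whose $z_2$-dependence is explicit.

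Next I would compare coefficients of $z_1$ and of the $z_1$-free part. The residual equations group naturally into an algebraic--differential system whose unknowns are a single function $\psi=\psi(z_0,z_1)$ (governing the $(z_0,z_1)$-profile of the one-forms transverse to $h$) and the constants $m,m_1,m_2,\tau,\theta$ appearing in the enumerated cases; the function $G$ is then read off as the combination of $\psi,h,h',\psi_{z_0},\psi_{z_1}$ demanded by the compatibility of the three structure equations. The branching into the five listed cases reflects whether the linear part of the compatibility system is nondegenerate (cases 1--3), or whether the matrix of coefficients of $\partial_{z_1}f_{i2}$ becomes singular and forces an exponential ansatz in $z_1$ (case 4, with factor $e^{\pm\tau z_1}$) or in $z_0$ (case 5, with factor $e^{\theta z_0}$).

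The main obstacle is the exhaustive case analysis in this last step: one must show that every solution of the compatibility system belongs to one of the five families and that no genuinely new branch is overlooked. Concretely, the splitting is driven by whether certain discriminant-like quantities in the constants $\mu_i,\eta_i,\lambda$ vanish, and by whether the quadratic relations coming from the $\omega_1\wedge\omega_3$ and $\omega_1\wedge\omega_2$ entries admit linear or exponential solutions in the $z_1$-variable. Performing this branching carefully --- and verifying at each stage that the nondegeneracy $\omega_1\wedge\omega_2\ne0$ is preserved --- is where the bookkeeping becomes heavy; once the branches are enumerated, reading off $G$, $\psi$ and $h$ from each reduced system is a direct (if lengthy) substitution.
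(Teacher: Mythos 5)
The paper does not actually prove this lemma: its ``proof'' is a citation to Theorems 3.2--3.5 of Silva and Tenenblat \cite{keti2015}, where the classification of equations $z_{0,t}-z_{2,t}=\lambda z_0 z_3+G$ describing PSS under the ansatz $f_{i1}=\mu_i f_{11}+\eta_i$ is carried out in full. Your outline reconstructs the strategy of that reference correctly: expand $d\omega_i=(D_tf_{i1}-D_xf_{i2})\,dt\wedge dx$, use the equation to eliminate the $t$-derivative of the combination $z_0-z_2$ (note that for this non-evolutionary equation one can only solve for $(z_0-z_2)_t$, not $z_{0,t}$ alone, which is precisely why the function $h$ depends on $z_0-z_2$), kill the coefficient of $z_3$, and then compare coefficients of the remaining jet variables. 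So the approach is the right one and is genuinely more informative than the paper's own one-line citation.

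However, as a proof the proposal has a real gap: the entire content of the lemma is the list of five explicit normal forms for $G$ together with the claim that these exhaust all possibilities, and none of this is actually derived. You describe \emph{where} the branching comes from (degeneracy of the linear compatibility system, exponential ans\"atze in $z_1$ or $z_0$), but you do not write down the compatibility system, do not exhibit the discriminant-like quantities whose vanishing separates the cases, and do not show that every branch lands in one of the five families --- you explicitly defer this as ``heavy bookkeeping.'' Since the lemma is precisely the output of that bookkeeping, the proposal is a correct plan rather than a proof. To make it complete you would either have to carry out the case analysis (essentially reproducing Sections 3 of \cite{keti2015}) or, as the authors do, cite those theorems directly; in the latter case the honest statement is that the lemma is imported, not proved.
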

\begin{proof}
These are conditions requested in theorems 3.2, 3.3, 3.4 and 3.5 of \cite{keti2015} for the equation describes pseudo-spherical surfaces. Therefore, we omit the proof and refer the reader to the original work by Silva and Tenenblat.
\end{proof}

\begin{lemma}\label{lema7.3}
Assume that the equation $u_t-u_{txx}=\lambda u u_{xxx}+G(u,u_x,u_{xx})$
is such that condition $5$ in Lemma $\ref{lema7.2}$ is satisfied. Then the function $f_{ij}$ in $\eqref{1.0.11}$ are given by
$$f_{11}=a(z_0-z_2)+b,\quad f_{21}=\mu f_{11}+\eta,\quad f_{31}=\pm\sqrt{1+\mu^2}f_{11}\pm\f{\theta\pm a\,\eta\,\mu}{1+\mu^2},$$
$$f_{12}=-\lambda\,z_0\,f_{11}+a\,m_1\theta\,e^{\theta\,z_0}z_1^2+(m_1\,\theta\,e^{\theta\,z_0}-\lambda)\left[\f{a\,z_0+b}{\theta}\pm\left(\mu-\f{a\,\eta}{\theta}\right)\f{z_1}{\sqrt{1+\mu^2}}\right],$$
$$f_{22}=-\lambda\,z_0\,f_{21}+\mu\,a\,m_1\,\theta\,e^{\theta\,z_0}z_1^2+\f{m_1\theta\,e^{\theta\,z_0}-\lambda}{\theta}\left[\mu(az_0+b)+\eta\mp(\theta-\mu\,a\,\eta)\f{z_1}{\sqrt{1+\mu^2}}\right],$$
$$
\ba{lcl}
f_{32}&=&\ds{-\lambda\,z_0\,f_{31}\pm\sqrt{1+\mu^2}\,a\,m_1\,\theta\,e^{\theta\,z_0}z_1^2}\\
\\
&&\ds{-\left(\f{m_1\,\theta\,e^{\theta\,z_0}-\lambda}{\theta}\right)\left\{a\,\eta\,z_1\mp\f{1}{\sqrt{1+\mu^2}}\left[(1+\mu^2)(az_0+b)+\mu\eta+\f{\theta}{a}\right]\right\},}
\ea
$$
and the constants $a$, $b$, $\mu$, $\eta$, $\theta$ and $m_2$ are related by the relation
\bb\label{7.1.3}
b=\f{a}{2\theta}\left[\f{(\mu\theta-a\eta)^2}{a^2(1+\mu^2)}-\f{a}{\theta}+m_2\theta-1\right].
\ee
\end{lemma}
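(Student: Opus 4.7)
The plan is to substitute the given ansatz into the three structure equations \eqref{7.0.1} and match monomials in the jet variables $z_0,z_1,z_2,z_3$. Writing $\omega_i = f_{i1}\,dx + f_{i2}\,dt$, one has $d\omega_i = (D_t f_{i1} - D_x f_{i2})\,dx\wedge dt$ and $\omega_j\wedge \omega_k = (f_{j1}f_{k2}-f_{j2}f_{k1})\,dx\wedge dt$, so the structure equations become the explicit scalar system
\begin{align*}
D_t f_{11} - D_x f_{12} &= f_{31}f_{22} - f_{32}f_{21},\\
D_t f_{21} - D_x f_{22} &= f_{11}f_{32} - f_{12}f_{31},\\
D_t f_{31} - D_x f_{32} &= f_{11}f_{22} - f_{12}f_{21},
\end{align*}
in which $z_{0,t}$ (and hence $z_{2,t}$) is to be eliminated using $z_{0,t}-z_{2,t}=\lambda z_0 z_3 + G(z_0,z_1,z_2)$ and the explicit $G$ from case 5 of Lemma \ref{lema7.2}.

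First I would pin down $f_{11}$. Since $f_{21}$ and $f_{31}$ are by hypothesis affine in $f_{11}$, none of them depend on $z_3$, so the only source of $z_3$ in the first structure equation is the $\lambda z_0 z_3$ contribution coming from $D_t f_{11}$ through the replacement of $z_{2,t}$. Matching the coefficient of $z_3$ forces $f_{11,z_2}$ to be constant; matching the $z_1 z_2$, $z_1^2$ and $z_1$ coefficients (where the $e^{\theta z_0}\varphi$–type terms of $G$ intervene in a delicate way) then eliminates dependence on $x$, $t$ and on higher powers, and fixes $f_{11}=a(z_0-z_2)+b$ with $a\neq 0$.

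With $f_{i1}$ now known, the three structure equations reduce to a decoupled system of first-order linear PDEs for $f_{12},f_{22},f_{32}$ in the independent variables $z_0,z_1,z_2$, whose source terms are polynomials in $z_1,z_2$ of degree at most two multiplied either by $1$ or by $m_1\theta e^{\theta z_0}$. Integrating in $z_2$ and then in $z_1$ produces the closed-form expressions stated in the lemma; the sign ambiguity and the factor $\sqrt{1+\mu^2}$ appear when one enforces that the quadratic-in-$f_{11}$ right-hand side of the second structure equation be compatible with the linear left-hand side, which imposes $\mu_3^2 = 1+\mu^2$ and fixes $\eta_3$ up to a sign.

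Finally, collecting the residual terms that are independent of $z_1$ and $z_2$ across the three equations (in particular the coefficients of $1$ and of $e^{\theta z_0}$) yields one non-trivial algebraic identity relating $a,b,\mu,\eta,\theta,m_2$, which after elementary manipulation is precisely \eqref{7.1.3}; all other residual relations are tautologically satisfied. The principal obstacle is bookkeeping in the first step: the monomial matching must handle simultaneously the polynomial and the $e^{\theta z_0}$-exponential parts of $G$, and a careless expansion leads to spurious constraints. Once $f_{11}$ has been correctly identified, the remainder of the argument consists only of direct integration and comparison of coefficients.
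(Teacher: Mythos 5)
You have chosen the right general method: the paper itself contains no proof of this lemma — like Lemma \ref{lema7.2}, it is quoted from Silva and Tenenblat \cite{keti2015}, where precisely this kind of analysis (write the structure equations in coordinates, eliminate the time derivative through the equation, and match coefficients in the jet variables) is carried out. So your plan coincides in spirit with the argument of the original source rather than with anything in the paper.

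However, as written the proposal has genuine gaps. The single concrete deduction you make is incorrect: in the first structure equation the coefficient of $z_3$ does \emph{not} come only from $D_t f_{11}$ after substituting $z_{2,t}=z_{0,t}-\lambda z_0z_3-G$; the term $D_x f_{12}$ also contributes $f_{12,z_2}\,z_3$ (and indeed must, since the stated $f_{12}$ depends on $z_2$ through $f_{11}$), so matching the $z_3$-coefficient yields a relation between $f_{12,z_2}$ and $\lambda z_0 f_{11,z_2}$ rather than forcing $f_{11,z_2}$ to be constant. Moreover, after the substitution the terms $f_{11,z_1}z_{1,t}$ and $\bigl(f_{11,z_0}+f_{11,z_2}\bigr)z_{2,t}$ survive and are not determined by the equation alone; disposing of them (which is where the dependence of $f_{11}$ on $z_0-z_2$, i.e. the function $h$ of Lemma \ref{lema7.2}, actually originates) is a necessary preliminary step that your sketch omits. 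Beyond this, everything that constitutes the content of the lemma — that $h$ must be affine so that $f_{11}=a(z_0-z_2)+b$, the integrations producing $f_{12},f_{22},f_{32}$, the relation $\mu_3^2=1+\mu^2$ with its sign ambiguity, and the derivation of the constraint \eqref{7.1.3} — is asserted rather than executed, and the system for the $f_{i2}$ is coupled (each right-hand side mixes all three), not decoupled as claimed. A complete argument must either perform this bookkeeping in full, as in \cite{keti2015}, or at least verify by direct substitution that the displayed $f_{ij}$ satisfy the three structure equations exactly when \eqref{7.1.3} holds, supplemented by the necessity analysis sketched above.
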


We observe that the parameter $\lambda$ in Lemma \ref{lema7.1} and $\eta_2$, in Lemma \ref{lema7.2}, are related to the spectral parameter in the literature of integrable systems \cite{ablo1,ablo2}.

\subsection{Proof of Theorem \ref{teo1.3}}\label{subsec7.2}

Let us now assume $\epsilon\neq 0$. Making the changes $x\mapsto x/\epsilon, t\mapsto t/\epsilon$ and next $u\mapsto u-\Gamma/\epsilon^2$, \eqref{1.0.4} is transformed into, after renaming the constants,
\bb\label{7.2.1}
u_t-u_{txx}=uu_{xxx}+2u_xu_{xx}+(\al-3u+\beta u^2+\gamma u^3)u_x,
\ee
which is of the type \eqref{7.1.2} with $G(z_0,z_1,z_2)=2z_1z_2+(\al-3z_0+\be z_0^2+\gamma z_0^3)z_1$ and $\lambda=1$.

We have the following preliminary result:
\begin{proposition}\label{prop7.1}
The only condition in Lemma $\ref{lema7.2}$ satisfied by equation $\eqref{7.2.1}$ is number $5$ if and only if $\be=\gamma=0$ and $m_2=-\al$.
\end{proposition}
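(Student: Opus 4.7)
The plan is to substitute $\lambda = 1$ and
\[ G(z_0,z_1,z_2) = 2z_1 z_2 + P(z_0)\, z_1, \qquad P(z_0) := \alpha - 3z_0 + \beta z_0^2 + \gamma z_0^3, \]
into each of the five compatibility forms listed in Lemma \ref{lema7.2} and identify when equality holds as an identity in $(z_0, z_1, z_2)$. Because our $G$ is a polynomial while several of the candidate forms contain transcendental factors ($e^{\pm\tau z_1}$ in case 4, $e^{\theta z_0}$ in case 5) or the non-polynomial function $h(z_0 - z_2)$, matching will impose strong algebraic constraints; in cases 1--4 these will be fatal, while case 5 will single out $\beta = \gamma = 0$ and $m_2 = -\alpha$.

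Case 4 is the easiest to dispose of. The factor $e^{\pm\tau z_1}$ with $\tau > 0$ is linearly independent, as a function of $z_1$, from every polynomial in $z_1$, so the entire bracket multiplying $e^{\pm\tau z_1}$ must vanish identically. Reading off in succession the coefficients of $z_1^2$, $z_1 z_2$ and $z_1$ will force $\varphi''(z_0) = \varphi'(z_0) = \varphi(z_0) = 0$, contradicting $\varphi \neq 0$.

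For cases 1, 2 and 3, I would multiply through by $h'(z_0 - z_2) \neq 0$ to clear the denominator and analyse the resulting identity. Case 2 collapses immediately: at $z_1 = 0$ the identity $h' [2 z_1 z_2 + P z_1] = -z_1 h - z_0 z_1 h' - m_1 z_1 - m_2 z_2$ reduces to $m_2 z_2 \equiv 0$, against $m_2 \neq 0$. Cases 1 and 3 are subtler. Writing $\xi := z_0 - z_2$ and differentiating twice in $z_2$ — using that the corresponding right-hand side is affine in $z_2$ once the $h$-dependent terms are exposed — will produce an identity of the form
\[ h'''(\xi)\bigl(\text{polynomial in }z_0, z_1, z_2\bigr) + h''(\xi)\bigl(\text{lower order}\bigr) = 0; \]
specialising at $z_1 = 0$ (for case 3) or dividing by $z_1$ and varying $z_0$ (for case 1) will show that the alternative to $h'' \equiv 0$ would be $2 + P'(z_0) \equiv 0$, impossible since $P'(z_0) = -3 + 2\beta z_0 + 3\gamma z_0^2$. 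With $h$ thus forced to be affine, $h(\xi) = A\xi + B$ with $A \neq 0$, matching the coefficient of $z_1^2$ in the remaining identity will give $mA = 0$ in case 1 (contradicting $m, A \neq 0$) and $m_1 A = 0$ in case 3; a further match of the $z_0$-linear, $z_1$-free term will then give $m_2 = 0$, whence $(m_1, m_2) = (0,0)$, against the standing hypothesis.

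Only case 5 survives. With $\lambda = 1$ the identity to verify is
\[ 2 z_1 z_2 + P(z_0) z_1 = 2 z_1 z_2 - 3 z_0 z_1 - m_2 z_1 + m_1\theta e^{\theta z_0}\bigl(\theta z_1^3 + z_1 z_2 + 2 z_0 z_1 + m_2 z_1\bigr). \]
The monomials $z_1^3$ and $z_1 z_2$ on the right carry the non-polynomial coefficients $m_1\theta^2 e^{\theta z_0}$ and $m_1\theta e^{\theta z_0}$ respectively, and neither has a counterpart on the left; since $\theta \neq 0$ this forces $m_1 = 0$. The residual identity $(\alpha + \beta z_0^2 + \gamma z_0^3)\,z_1 \equiv -m_2 z_1$ then yields $\beta = \gamma = 0$ and $m_2 = -\alpha$; conversely, these choices (together with $m_1 = 0$ and any $\theta \neq 0$) manifestly realise case 5. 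The principal obstacle I anticipate is the bookkeeping in cases 1 and 3, where the dependence of $h$ on $z_0 - z_2$ entangles the $z_0$ and $z_2$ variables, and one must differentiate carefully and exploit the non-constancy of $P'(z_0)$ to reduce to a polynomial matching problem before the final contradictions can be extracted.
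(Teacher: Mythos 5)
Your proposal is correct and follows essentially the same route as the paper: substitute $G$ and $\lambda=1$ into each of the five forms of Lemma \ref{lema7.2}, rule out forms 1--4 (the paper merely asserts these contradictions; your linear-independence and coefficient-matching sketches do establish them), and in form 5 use the coefficient of $z_1^3$ to force $m_1=0$, after which $\beta=\gamma=0$ and $m_2=-\alpha$ follow from the residual identity, exactly as in the paper's proof. The only slight imprecision is the claim that $z_1z_2$ has ``no counterpart on the left'' in case 5 --- the $2z_1z_2$ terms on the two sides cancel, and it is the leftover exponentially weighted $z_1z_2$ term that must vanish --- but the $z_1^3$ argument alone suffices, which is precisely what the paper uses.
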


\begin{proof}
Substituting $G(z_0,z_1,z_2)=2z_1z_2+(\al-3z_0+\be z_0^2+\gamma z_0^3)z_1$ and $\lambda=1$ into the possible forms listed in Lemma \ref{lema7.2}, we will conclude that conditions $1$ to $4$ lead to contradictions. Substituting into the last one, we have the following identity:
$$
2z_1z_2-3z_0z_1-m_2z_1+m_1\theta e^{\theta z_0}(\theta z_1^3+z_1z_2+2z_0z_1+m_2z_1)=2z_1z_2+(\al-3z_0+\be z_0^2+\gamma z_0^3)z_1,
$$
which implies
$$
-m_2z_1+m_1\theta e^{\theta z_0}(\theta z_1^3+z_1z_2+2z_0z_1+m_2z_1)=\al z_1+\be z_0^2z_1+\gamma z_0^3z_1.$$

 From the coefficient of $z_1^3$ we conclude that $m_1\theta=0$, which implies that $m_1=0$ in view of the constraints in Lemma \ref{lema7.2}. As a consequence, $m_2=-\al$ and $\be=\gamma=0$.
\end{proof}

\begin{corollary}\label{cora}
Under the restrictions from Proposition $\ref{prop7.1}$, the constraint $(\ref{7.1.3})$ can be reduced to
$b=-1+(\eta^2-\alpha)/2$.
\end{corollary}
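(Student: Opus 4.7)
The plan is to carry out a direct algebraic substitution into the constraint \eqref{7.1.3}, after identifying the appropriate values of the free parameters $a$, $\mu$, $\theta$ that appear in Lemma \ref{lema7.3}. First I would apply Proposition \ref{prop7.1}, which imposes $m_1=0$ and $m_2=-\alpha$ (together with $\lambda=1$, since \eqref{7.2.1} has this normalization). The constraint \eqref{7.1.3} then becomes
$$b=\f{a}{2\theta}\left[\f{(\mu\theta-a\eta)^{2}}{a^{2}(1+\mu^{2})}-\f{a}{\theta}-\alpha\theta-1\right],$$
which still involves the three parameters $a$, $\mu$, $\theta$ inherited from the general ansatz of Lemma \ref{lema7.3}.

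Next I would fix these remaining parameters by matching with the form of the associated one-forms written in part 2 of Theorem \ref{teo1.3}. Reading off $f_{11}=u-u_{xx}+b$ from the stated expression for $\omega_{1}$ forces $a=1$. The expression $\omega_{2}=\eta\,dx-[\eta(1+u)\mp u_{x}]dt$ shows that $f_{21}=\eta$ is constant, so comparing with $f_{21}=\mu f_{11}+\eta$ from Lemma \ref{lema7.3} forces $\mu=0$. Finally, from $\omega_{3}=\pm(u-u_{xx}+b+1)dx+\cdots$, we have $f_{31}=\pm(z_{0}-z_{2}+b+1)$; with $a=1$ and $\mu=0$ the formula in Lemma \ref{lema7.3} simplifies to $f_{31}=\pm f_{11}\pm\theta=\pm(z_{0}-z_{2}+b)\pm\theta$, and matching the additive constants yields $\theta=1$.

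Substituting $a=\theta=1$, $\mu=0$, $m_{2}=-\alpha$ into the reduced form of \eqref{7.1.3} gives
$$b=\f{1}{2}\left[\eta^{2}-1-\alpha-1\right]=\f{\eta^{2}-\alpha}{2}-1,$$
which is exactly the claimed identity. The only conceptual (as opposed to computational) step is the identification of the normalization $a=\theta=1$, $\mu=0$, and this is dictated, not chosen: it is the unique choice compatible with the one-forms recorded in the statement of Theorem \ref{teo1.3}. No obstacle is expected, since once the parameter values are fixed the argument reduces to a single arithmetic simplification.
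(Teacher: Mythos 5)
Your proof is correct and is essentially the paper's argument: the paper's own proof simply takes $m_1=0$, $m_2=-\alpha$ from Proposition \ref{prop7.1} and \emph{chooses} the normalization $a=\theta=1$, $\mu=0$, then substitutes into \eqref{7.1.3} to get $b=-1+(\eta^2-\alpha)/2$. Your only addition is to justify that normalization by matching against the one-forms displayed in Theorem \ref{teo1.3} (which the paper instead obtains afterwards by substituting these same values into Lemma \ref{lema7.3}), so the computation is identical.
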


\begin{proof}
By Proposition $\ref{prop7.1}$, we have $m_1=0$ and $m_2=-\alpha$. Choosing $a=1$, $\theta=1$ and $\mu=0$ we obtain the desired result.
\end{proof}

{\bf Proof of Theorem \ref{teo1.3}}: Let us begin with the case $\epsilon=0$. Making a suitable choice of the constant $u_0$, the shift $u\mapsto u-u_0$ transforms the resulting equation (after renaming constants) in the first equation in \eqref{7.1.1}. The result of this part is then an immediate consequence of Lemma \ref{lema7.1}.

Now assume $\epsilon\neq0$. As previously shown, we can transform equation \eqref{1.0.4}, with $\epsilon\neq0$, in \eqref{7.2.1}. By Proposition \ref{prop7.1}, we conclude that \eqref{1.0.4} describes a PSS if and only if $\be=\gamma=0$. Substituting $a=\theta=1$, $\mu=m_1=0$ and $m_2=-\alpha$ into Lemma \ref{lema7.3} we conclude the proof of Theorem \ref{teo1.3}.

%Regarding equation \eqref{1.0.1}, with the restrictions \eqref{1.0.2}, we have the following non-existence result:
%\begin{theorem}
%Equation $\eqref{1.0.1}$, with the constraints $\eqref{1.0.2}$, does not describe pseudo-spherical surfaces.
%\end{theorem}

%\begin{proof}
%By Proposition \ref{prop7.1}, a necessary condition to \eqref{1.0.4} describes a PSS is that $\beta=\gamma=0$, which would imply that the coefficients of $u^2u_x$ and $u^3u_x$ vanish simultaneously, which

%\end{proof}

\section{Discussion}\label{sec8}

Our work about equation \eqref{1.0.4} was motivated by the recent equation \eqref{1.0.1} introduced in \cite{chines-jde} and later considered in \cite{chines-arxiv}. These two papers led us to consider equation \eqref{1.0.4}, which is structurally the same as \eqref{1.0.1}, but without the physical constraints given by \eqref{1.0.2}.

In our case, as mentioned in the Introduction, equation \eqref{1.0.4} not only is a mathematical generalization of the physical model \eqref{1.0.1}, but also splits in two significantly large families of equations (after suitable scalings, shifts and eventually renaming the constants) 
\bb\label{8.0.1}
u_t=u_{xxx}+(\al-3u+\be u^2+\gamma u^3)u_x
\ee
and
\bb\label{8.0.2}
u_t-u_{txx}=uu_{xxx}+2u_xu_{xx}+(\al-3u+\be u^2+\gamma u^3),
\ee
where we took $\epsilon=1$ for convenience. %Then, the overall analysis of \eqref{1.0.4} should take this observation into account.
This division is, in particular, evident in the demonstration of local well-posedness (Theorem \ref{teo1.1}), conservation laws and its consequences (sections \ref{sec2}, \ref{sec3} and \ref{sec4}), and of extreme importance in the classification of travelling wave solutions carried out in Section \ref{sec5}. Moreover, if $\epsilon=0$ we do not have the emergence of ``purely'' weak solutions, %(in the sense the solutions do not have issues regarding differentiability), 
whereas the case $\epsilon\neq0$ has peakons and cuspons (both periodic and non-periodic), as shown in theorems \ref{teo5.5} and \ref{teo5.6}. In addition, the classification of members of \eqref{1.0.4} describing PSS also depends on the values of $\epsilon$ that, although not the only relevant parameter, is certainly the most special one.

Sometime after we initiated our investigation, we discovered the reference \cite{chines-adv}, where the local well-posedness of an equation analogous to \eqref{1.0.1} is claimed, but omitted. We then proved the local well-posedness for \eqref{1.0.4}. To pursue this goal, we removed some restrictions in the main result of \cite{liu2011} and as a consequence of Theorem \ref{teo1.2} we not only recover the local well-posedness to the CH equation, {\it e.g}, see \cite{escher, mustafa,blanco}, but also obtain corollaries \ref{cor2.1} and \ref{cor2.2}, which are nothing but the results proved in \cite{liu2011} and \cite{mustafa}. Moreover, as a consequence of Theorem \ref{teo1.2}, if $u_0\in H^{s}(\R)$, $s>3/2$, and $h$ is a $C^\infty(\R)$ function such that $h(0)=0$, then the initial value problem
\bb\label{8.0.3}
\left\{\ba{l}
\ds{u_t-u_{txx}+\p_x h(u)=\p_x\left(\f{e^u}{2}u_x^2+e^uu_{xx}\right)},\\
\\
u(x,0)=u_0(x),
\ea\right.
\ee
is well-posed in $H^s(\R)$. Although this fact is a trivial consequence of our theorem, it cannot be recovered by invoking  corollaries \ref{cor2.1} and \ref{cor2.2} of \cite{liu2011} and \cite{mustafa}, respectively. This example shows that Theorem \ref{teo1.2} is not a small improvement on the mentioned results, but actually an extension that, in particular, implies the local well-posedness of the problem \eqref{1.0.5}. We observe that we were able to remove the condition $g(0)=0$ in the results proved in \cite{liu2011}, but we could not do the same regarding the function $h$ in corollaries \ref{cor2.1} and \ref{cor2.2}. We conjecture that it might be possible by following similar steps of Theorem \ref{teo1.2} to relax the conditions for $g$, but this is an open problem to be considered in another moment.

We established conservation laws for equation \eqref{1.0.4} with $\epsilon\neq 0$ in Section \ref{sec3}. The case $\epsilon=0$ was not considered because the choice transforms the equation into a three-parameter evolution equation, and the literature about this sort of equation is very vast. It is well-known that the KdV and other evolution equations have an infinite hierarchy of symmetries and conservation laws \cite{ablo1,ablo2,igor2012,igor2014,miura,olverbook,popPLA,rita,arturrmp}, meaning that a study of these sort of equations would not bring anything new, see, for instance, \cite{popPLA}. Moreover, some of the equations in \eqref{1.0.4} with $\epsilon=0$ can be derived as hydrodynamical models \cite{ablo1,ablo2,burde}, which usually have some conservation laws as a consequence of their physical backgrounds.

On the other hand, in the case $\epsilon\neq0$, the conservation laws enabled us to obtain several properties of the solutions, as presented in Section \ref{sec4}, but more importantly, led us to obtain the quadrature \eqref{5.1.4}, which makes a classification of bounded travelling wave solutions of \eqref{1.0.4} possible, as shown in Section \ref{sec5}.

In \cite{gui-jnl} the authors performed a classification of bounded travelling wave solutions of equation \eqref{1.0.1}. However, their classification is a very particular case of ours because in \cite{gui-jnl} the authors reduced the quadrature \eqref{5.1.4} to the particular case of two real zeros. In fact, they considered a quadrature with the polynomial in \eqref{5.1.4} replaced by the one given in \eqref{1.0.9}. As a consequence of this simplification, the periodic travelling wave solutions do not appear in their classification. Our results, therefore, not only recover the one proved in \cite{gui-jnl}, but also describe periodic travelling wave solutions admitted by \eqref{1.0.4}. %Consequently, we also reobtain the results proved in \cite{len,raspa} for the CH and DGH equations, respectively. 
Opposed to \cite{len,raspa}, we have not found the composite travelling wave solutions of \eqref{1.0.4} because the computations required needed a deeper understanding of certain underlying geometric conditions that are beyond the scope of this paper. %since the geometrical aspects needed to a classification of this type of solutions are rather difficult compared with the ones we presented and are beyond the scope of this paper.
We observe that we have classified 139 possible cases of travelling wave solutions of \eqref{1.0.4}, whereas in \cite{gui-jnl} only 15 were obtained. Our classification includes the case $\epsilon=0$, while the one carried out in \cite{gui-jnl} only considers $\epsilon\neq0$. Even if we restrict our results for $\epsilon\neq 0$ we still have a larger classification when compared with the one presented in \cite{gui-jnl}, since we also analyse the existence of periodic solutions.

We also investigated the existence of members in \eqref{1.0.4} describing PSS. For $\epsilon=0$ this is a mere consequence of the results proved by Rabelo and Tenenblat \cite{keti1992}, while for the case $\epsilon\neq0$ we have a more interesting and rich situation. In case this condition is satisfied, equation \eqref{1.0.4} can be transformed into equation \eqref{8.0.2}. This equation includes the CH equation and its relations with PSSs were firstly investigated in \cite{reyes2002}, see also \cite{reyes2011}. The question that remained to be investigated was if there were other members of \eqref{8.0.2} (or \eqref{1.0.4}) that would also have this property. The answer is given by Theorem \ref{teo1.3}.

A natural and interesting question that emerges from Theorem \ref{teo1.3} is if equation \eqref{1.0.1} might describe PSS. The answer is positive, but with limitations on the values given by \eqref{1.0.2}. Making the shift $u\mapsto u-\be_0/\be$ in \eqref{1.0.1}, we have the following equation
\bb\label{8.0.4}
\ba{lcl}
\ds{u_t-u_{txx}+3uu_x-2u_xu_{xx}-uu_{xxx}}&=&\ds{\left(\f{\omega_2\be^3_0}{\be^3\al^3}-c+3\f{\be_0}{\be}-\f{\omega_1\be_0^2}{\al^2\be}\right)u_x}\\
\\
&&\ds{+\f{\be_0}{\be}\left(2\f{\omega_1}{\al^2}-3\f{\be_0}{\be}\right)uu_x+\left(3\f{\be_0\omega_2}{\be\al^3}-\f{\omega_1}{\al^2}\right)u^2u_x}\\
\\
&&\ds{-\f{\omega_2}{\al^3}u^3u_x}.
\ea
\ee
In view of the constraints \eqref{1.0.2}, if $c^2=1$ or $c^2=2$, then $\omega_1=\omega_2=0$ and equation \eqref{8.0.4} satisfies the conditions of Theorem \ref{teo1.3}. If $c^2=1$, then $\Omega=0$ from \eqref{1.0.2}, which is equivalent to say that we do not have Coriollis effect in \eqref{1.0.1}. If $c^2=2$, then $\Omega=\pm\sqrt{2}/4$. The negative value must be discarded since $\Omega$ is a positive physical variable \cite{gui-jnl}.

\section{Conclusion}\label{sec9}

In this paper we generalized a previous result in \cite{liu2011} regarding the local well-posedness of equations of the type \eqref{1.0.7}. As a consequence of this generalization, we have immediately assured local well-posedness of the Cauchy problem \eqref{1.0.5}. We also found some conservation laws for \eqref{1.0.4}, which provided us some qualitative information about the solution of the equation, see theorems \ref{teo4.1}, \ref{teo4.2}, corollaries \ref{cor4.1}, \ref{cor4.2} and, more importantly, theorems \ref{teo5.1}--\ref{teo5.10}. Moreover, we also determined the constraints on the parameters in \eqref{1.0.4} that would enable the emergence of peaked solutions (see Theorem \ref{teo6.2}) as weak soliton solutions. Finally, we also classified the members of \eqref{1.0.4} that can describe pseudo-spherical surfaces.

\section*{Acknowledgements}

The idea of this paper occurred during the period the second author was as a visiting professor at Silesian University in Opava, and the majority of the theorems reported in this paper were proven there. Both authors gratefully acknowledge the warm hospitality of Silesian University in Opava, Czech Republic, during the time of their respective visits.

P. L. da Silva would like to thank \textit{Programa de Pós-Graduação em Matemática} from \textit{Departamento de Matemática of Universidade Federal de São Carlos} for the financial support to her short research visit in the Silesian University in Opava. P. L. da Silva is also thankful to CAPES for her post-doctoral fellowship.

I. L. Freire is thankful to the participants of the \textit{Seminář z diferenciální geometrie a jejich aplikací} of the Department of Geometry and Mathematical Physics of the Silesian University in Opava, where parts of the results of this paper were presented. Special thanks are given to Prof. A. Sergyeyev, Prof. R. Popovych, Prof. D. Catalano Ferraioli and Prof. M. F. da Silva for their useful discussions and comments. I. L. Freire's work is partially supported by CNPq (grants 308516/2016-8 and 404912/2016-8).

Finally, both authors would like to express their deep gratitude to the reviewer for their acute and detailed reading of the manuscript, which resulted in a considerable improvement of the paper.

\end{document}